\newcommand{\opnorm}{\@ifstar\@opnorms\@opnorm}
\newcommand{\@opnorms}[1]{%
	\left|\mkern-1.5mu\left|\mkern-1.5mu\left|
	#1
	\right|\mkern-1.5mu\right|\mkern-1.5mu\right|
}
\newcommand{\@opnorm}[2][]{%
	\mathopen{#1|\mkern-1.5mu#1|\mkern-1.5mu#1|}
	#2
	\mathclose{#1|\mkern-1.5mu#1|\mkern-1.5mu#1|}
}
\newtheorem{theorem}{Theorem}
\newtheorem{definition}{Definition}
\newtheorem{lemma}{Lemma}
\newtheorem{remark}{Remark}
\newtheorem{corollary}{Corollary}
\newtheorem{assumption}{Assumption}
\global\long\def\bl{\mathrm{block}}
\global\long\def\mA{\mathcal{A}}
\begin{document}

\title{  Sample Complexity of Sparse System\\ Identification Problem
	\thanks{Email: fattahi@berkeley.edu and sojoudi@berkeley.edu.}}
\author{
	\authorblockN{Salar Fattahi and Somayeh Sojoudi\\}
	\thanks{Salar Fattahi is with the Department of Industrial Engineering and Operations Research, University of California, Berkeley. Somayeh Sojoudi is with the Departments of Electrical Engineering and Computer Sciences and Mechanical Engineering as well as the Tsinghua-Berkeley Shenzhen Institute, University of California, Berkeley. This work was supported by the ONR Award N00014-18-1-2526 
and an NSF EPCN Grant. }}
\maketitle

\begin{abstract}
	In this paper, we study the system identification problem for sparse linear time-invariant systems. We propose a sparsity promoting block-regularized estimator to identify the dynamics of the system with only a limited number of input-state data samples. We characterize the properties of this estimator under high-dimensional scaling, where the growth rate of the system dimension is comparable to or even faster than that of the number of available sample trajectories. In particular, using contemporary results on high-dimensional statistics, we show that the proposed estimator results in a small element-wise error, provided that the number of sample trajectories is above a threshold. This threshold depends polynomially on the size of each block and the number of nonzero elements at different rows of input and state matrices, but only logarithmically on the system dimension. A by-product of this result is that  the number of  sample trajectories required for sparse system identification  is significantly smaller than the dimension of the system. Furthermore, we show that, unlike the recently celebrated least-squares estimators for system identification problems, the method developed in this work is capable of \textit{exact recovery} of the underlying sparsity structure of the system with the aforementioned number of data samples. Extensive case studies on synthetically generated systems, physical mass-spring networks, and multi-agent systems are offered to demonstrate the effectiveness of the proposed method.
\end{abstract}

\section{Introduction}
With their ever-growing size and complexity, real-world dynamical systems are hard to model. Today's systems are complex and large, often with a massive number of unknown parameters, which render them doomed to the so-called \textit{curse of dimensionality}. Therefore, system operators should rely on simple and tractable estimation methods to identify the dynamics of the system via a limited number of recorded input-output interactions, and then design control policies to ensure the desired behavior of the entire system. The area of \textit{system identification} is created to address this problem~\cite{ljung1998system}. 

Despite its long history in control theory, most of the results on system identification deal with the asymptotic behavior of various estimation methods~\cite{ljung1998system, ljung1978convergence, pintelon2012system, bai2010non}. Although these results shed light on the theoretical consistency of the existing methodologies, they are not applicable in the finite time/sample settings. In many applications, including neuroscience, transportation networks, and gene regulatory networks, the dimensionality of the system is overwhelmingly large, often surpassing the number of available input-output data~\cite{vertes2012simple, sun2012network, omranian2016gene}. Under such circumstances, the dynamics of the system should be estimated under the \textit{large dimension-small sample size} regime and  classical approaches for checking the asymptotic consistency of an estimator face major breakdowns. Simple examples of such failures are widespread in high-dimensional statistics. For instance, it is known that, given a number of independent and identically distributed (i.i.d.) samples with Gaussian distribution, none of the eigenvectors and eigenvalues of the sample covariance matrix are consistent estimators of their true counterparts if the sample size and the dimension of variables grow at the same rate~\cite{johnstone2001distribution}. As another example, it is well-known that the least-squares estimators, which are widely used in system identification problems, seize to exist uniquely when the sample size is smaller than the dimension of the system~\cite{cox1979theoretical}. 

Recently, a special attention has been devoted to  the \textit{sparse} system identification problem, where the states and inputs are assumed to possess localized or low-order interactions. These methods include, but are not restricted to, selective $\ell_{1}$-regularized estimator~\cite{6883140}, identification based on compressive sensing~\cite{sanandaji2014observability, jiang2014compressive}, sparse estimation of polynomial system dynamics~\cite{rojas2014sparse}, kernel-based regularization~\cite{chen2014system}, and low rank estimation in frequency domain~\cite{smith2014frequency}.
On the other hand, with the unprecedented interest in data-driven control approaches, such as model-free reinforcement learning, robust control, and adaptive control~\cite{ross2012agnostic,sadraddini2018formal, hou2011data}, a question arises as to what the minimum number of input-output data samples should be to guarantee a small error in the estimated model. Answering this question has been the subject of many recent studies on the sample complexity of the system identification problem~\cite{weyer1999finite, weyer2000finite, pereira2010learning, dean2017sample, tu2017non}. Most of these results are tailored to a specific type of dynamics, depend on the stability of the open-loop system, or do not exploit the \textit{a priori} information about the structure of the system. 

In this work, the objective is to employ modern results on high-dimensional statistics to reduce the sample complexity for one of the most fundamental classes of systems in control theory, namely  linear time-invariant (LTI) systems with perfect state measurements. This type of dynamical systems forms the basis of many classical control problems, such as Linear Quadratic Regulator and Linear Quadratic Gaussian problems. Our results are built upon the fact that, in many practical large-scale systems, the states and inputs exhibit sparse interactions with one another, which in turn translates into a sparse representation of the state-space equations of the system. Driven by the existing non-asymptotic results on the classical Lasso problem, the main focus of this paper is on the block-regularized estimators for the system identification problem, where the goal is to promote sparsity on different blocks of input and state matrices. To this goal, the $\ell_{\infty}$-norms of the blocks are penalized instead of their $\ell_{1}$-norms. One motivation behind employing this type of estimator stems from topology extraction in consensus networks, especially in the multi-agent setting~\cite{chernyshov2015towards, hassan2016topology}. In this problem, given a number of subsystems (agents) whose interactions are defined via an unknown sparse topology network, the objective is to estimate the state-space model governing the entire system based on a limited number of input-output sample trajectories. Since the subsystems have their own local state and input vectors with potentially different sizes, the parameters of the state-space model admit a block-sparse structure. 

One objective is to show that the developed estimator for recovering the block-sparsity of the true dynamics is guaranteed to achieve infinitesimal estimation error with a small number of samples. In particular, we use an $\ell_{1}/\ell_{\infty}$-regularized least-squares estimator, i.e., a least-squares estimator accompanied by a $\ell_{\infty}$ regularizer on different blocks, and show that, with an appropriate scaling of the regularization coefficient, $\Omega(k_{\max}(D\log(\bar{n}+\bar{m})+D^2))$ sample trajectories are enough to guarantee a small estimation error with a high probability, where $k_{\max}$ is the maximum number of nonzero elements in the rows of input and state matrices, $D$ is the size of the largest block in these matrices, and $\bar{n}$ and $\bar{m}$ are the number of row blocks in the state and input matrices, respectively. This is a significant improvement over the recently derived sample complexity of $\Omega(n+m)$ for the least-squares estimator ($n$ and $m$ are the state and input dimensions, respectively), in the case where the system is sparse and the sizes of all blocks are small relative to the system dimension. While the traditional Lasso is heavily studied in the literature~\cite{wainwright2009sharp, zhao2006model}, the high-dimensional behavior of the block-regularized estimator is less known when the dimensions of blocks are arbitrary. The paper~\cite{Martin11} analyzes the high-dimensional consistency of this estimator when each block of the regression parameter is a row vector. Furthermore, that work assumes that the regression parameter consists of only one column of blocks. In an effort to make these results applicable to the block-sparse system identification problem, we significantly generalize the existing non-asymptotic properties to problems with an arbitrary number of blocks, each with general sizes.

Moreover, we derive upper bounds on the element-wise error of the proposed estimator. In particular, we prove that $\Omega(k_{\max}^2(D\log(\bar{n}+\bar{m})+D^2))$ sample trajectories is enough to ensure that the estimation error decreases at the rate $O(\sqrt{({D\log(\bar{n}+\bar{m})+D^2})/{d}})$, where $d$ is the number of available sample trajectories. We show that if the number of nonzero elements in the columns (in addition to the rows) of input and state matrices are upper bounded by $k_{\max}$, the operator norm of the estimation error of the proposed estimator is \textit{arbitrarily smaller} than that of its un-regularized least-squares counterpart introduced in~\cite{dean2017sample}.
Another advantage of the proposed estimator over its least-squares analog is its \textit{exact recovery} property. More specifically, we show that while the least-squares estimator is unable to identify the sparsity pattern of the input and state matrices for \textit{any} finite number of samples, the proposed estimator recovers the true sparsity pattern of these matrices with a sublinear number of sample trajectories. It is worthwhile to mention that this work generalizes the results in~\cite{pereira2010learning}, where the authors use a similar regularized estimator to learn the dynamics of a particular type of systems. However,~\cite{pereira2010learning} ignores the block structure of the system and assumes autonomy and inherent stability, all of which will be relaxed in this work. To demonstrate the efficacy of the developed regularized estimator, three case studies are offered on synthetically generated systems, mass-spring networks, and multi-agent systems.

This work is a significant extension of our previous conference papers on Lasso-type estimators for system identification~\cite{fattahi2018data} and non-asymptotic analysis of block-regularized linear regression problems~\cite{fattahinon}. In particular, by combining the properties of the block-regularized regression and the characteristics of LTI systems, we provide a unified sparsity-promoting framework for estimating the parameters of the system with arbitrary block structures. To this goal, we have generalized our theoretical results in~\cite{fattahi2018data} and~\cite{fattahinon} to account for partially-sparse structures. Furthermore, we have relaxed certain assumptions on the structure of the true system that were initially required in~\cite{fattahi2018data}, and provided comprehensive discussions and more relevant simulations on the performance of the proposed method.

\vspace{2mm}

{\bf Notations:} For a matrix $M$, the symbols $\|M\|_F$, $\|M\|_2$, $\|M\|_1$, and $\|M\|_{\infty}$ denote its Frobenius, operator, $\ell_1/\ell_1$, and $\ell_{\infty}/\ell_{\infty}$ norms, respectively. Furthermore, $\kappa(M)$ refers to its 2-norm condition number, i.e., the ratio between its maximum and minimum singular values. Given integer sets $I$ and $J$, the notation $M_{IJ}$ refers to the submatrix of $M$ whose rows and columns are indexed by $I$ and $J$, respectively. Given the sequences $f_1(n)$ and $f_2(n)$, the notations $f_1(n) = O(f_2(n))$ and $f_1(n) = \Omega(f_2(n))$ imply that there exist $c_1<\infty$ and $c_2>0$ such that $f_1(n)\leq c_1f_2(n)$ and $f_1(n)\geq c_2f_2(n)$, respectively. Finally, $f_1(n) = o(f_2(n))$ is used to show that $f_1(n)/f_2(n)\rightarrow 0$ as $n\rightarrow\infty$. A zero-mean Gaussian distribution with covariance $\Sigma$ is shown as $N(0,\Sigma)$. Given a function $f(x)$, the expression $\arg\min f(x)$ refers to its minimizer. For a set $\mathcal{I}$, the symbol $|\mathcal{I}|$ denotes its cardinality.
\section{Problem Formulation}

Consider the LTI system
\begin{subequations}\label{ss}
	\begin{align}
		& x[t+1] = Ax[t] + Bu[t] + w[t]
	\end{align}
\end{subequations}
where $t$ is the time step, $A\in\mathbb{R}^{n\times n}$ is the state matrix, and $B\in\mathbb{R}^{n\times m}$ is the input matrix. Furthermore, $x[t]\in\mathbb{R}^n$, $u[t]\in\mathbb{R}^m$, and $w[t]\in\mathbb{R}^n$ are the state, input, and disturbance vectors at time $t$, respectively. The dimension of the system is defined as $m+n$. It is assumed that the input disturbance vectors are identically distributed and independent with distribution $N(0,\Sigma_w)$ across different times. In this work, we assume that the matrices $A$ and $B$ are sparse and the goal is to estimate them based on a limited number of \textit{sample trajectories}, i.e. a sequence $\{(x^{(i)}[\tau],u^{(i)}[\tau])\}_{\tau = 0}^T$ with $i = 1,2,...,d$, where $d$ is the number of available sample trajectories. The $i^{\text{th}}$ sample trajectory $\{(x^{(i)}[\tau],u^{(i)}[\tau])\}_{\tau = 0}^T$ is obtained by running the system from $t = 0$ to $t = T$ and collecting the input and state vectors. The sparsity assumption on $A$ and $B$ is practical in many applications because of two reasons:
\begin{itemize}
	\item In order to model many large-scale real-world systems accurately, one needs to consider an overwhelmingly large number of internal states. However, it is often the case that the interactions between different states and inputs obey a sparse structure, which translates into a sparse pattern in state and input matrices. An important example of these types of problems is multi-agent systems, where the agents (subsystems) interact with one another via a sparse communication network.
	\item More generally, it is well-known that there may not be a unique representation of the state-space model of the system. For example, one can entirely change the state and input matrices in~\eqref{ss} via linear/nonlinear transformations to arrive at a different, but equally accurate state-space equation. Furthermore, the dynamics of the system may have a sufficiently accurate sparse approximation. The question of interest is whether it is possible to design a data-driven method in order to estimate the \textit{sparsest} state-space representation of the system. Answering this question is crucial, specially in the context of a~\textit{distributed control} problem where the goal is to design a decentralized controller whose structure respects the dynamics of the system~\cite{darivianakishigh, wang2014localized, fazelnia2017convex}.
\end{itemize}

Given the sample trajectories $\{(x^{(i)}[\tau],u^{(i)}[\tau])\}_{\tau = 0}^\top$ for $i = 1,2,...,d$, one can obtain an estimate of $(A, B)$ by solving the following least-squares optimization problem:
\begin{equation}\label{ls}
\min_{A, B}\sum_{i=1}^d \sum_{t=0}^{T-1} \left\|x^{(i)}[t+1]-\left(Ax^{(i)}[t]+Bu^{(i)}[t]\right)\right\|_2^2 
\end{equation}
In order to describe the behavior of the least-squares estimator, define 
\begin{align}
 Y^{(i)} = \begin{bmatrix}
x^{(i)}[1]^\top\\
\vdots\\
x^{(i)}[T]^\top
\end{bmatrix},\ \ X^{(i)} = \begin{bmatrix}
x^{(i)}[0]^\top & u^{(i)}[0]^\top\\
\vdots & \vdots\\
x^{(i)}[T\!-\!1]^\top & u^{(i)}[T\!-\!1]^\top
\end{bmatrix}
,\ \ W^{(i)} = \begin{bmatrix}
w^{(i)}[0]^\top\\
\vdots\\
w^{(i)}[T-1]^\top
\end{bmatrix}.
\end{align}
for every sample trajectory $i = 1,2,...,d$. Furthermore, let $Y$, $X$, and $W$ be defined as vertical concatenations of $Y^{(i)}$, $X^{(i)}$, and $W^{(i)}$ for $i = 1,2,...,d$, respectively. Finally, denote $\Theta = [A\ \ B]^\top$ as the unknown system parameter and $\Theta^*$ as its true value. Based on these definitions, it follows from~\eqref{ss} that
\begin{equation}\label{compact}
Y = X\cdot\Theta + W
\end{equation} 
 The system identification problem is then reduced to estimating $\Theta$ based on the \textit{observation matrix} $Y$ and the \textit{design matrix} $X$. Consider the following least-squares estimator:
\begin{equation}\label{ls2}
{\Theta}_{\text{ls}} = \arg\min_{\Theta}\|Y-X\Theta\|_F^2
\end{equation} 
One can easily verify the equivalence of~\eqref{ls} and~\eqref{ls2}. The optimal solution of~\eqref{ls2} can be written as
\begin{equation}\label{estimator}
{\Theta}_{\text{ls}} = (X^\top X)^{-1}X^\top Y = \Theta^* + (X^\top X)^{-1}X^\top W
\end{equation}
Notice that ${\Theta}_{\text{ls}}$ is well-defined and unique if and only if $X^\top X$ is invertible, which necessitates $d\geq n+m$. The estimation error is then defined as 
\begin{equation}\label{errorls}
E = {\Theta}_{\text{ls}}-\Theta^* = (X^\top X)^{-1}X^\top W
\end{equation}
Thus, one needs to study the behavior of $(X^\top X)^{-1}X^\top W$ in order to control the estimation error of the least-squares estimator. However, since the state of the system at time $t$ is affected by random input disturbances at times $0,1,...t-1$, the matrices $X$ and $W$ are correlated, which renders~\eqref{errorls} hard to analyze. In order to circumvent this issue, \cite{Recht17} simplifies the estimator and considers only the state of the system at time $T$ in $Y^{(i)}$. By ignoring the first $T-1$ rows in $Y^{(i)}$, $X^{(i)}$, and $W^{(i)}$, one can ensure that the random matrix $(X^\top X)^{-1}X^\top$ is independent of $W$.
Therefore, it is assumed in the sequel that
\begin{align}\label{matrices}
 Y = \begin{bmatrix}
x^{(1)}[T]^\top\\
\vdots\\
x^{(d)}[T]^\top
\end{bmatrix},\ \  X = \begin{bmatrix}
x^{(1)}[T\!-\!1]^\top \!\!\!\!& u^{(1)}[T\!-\!1]^\top\\
\vdots & \vdots\\
x^{(d)}[T\!-\!1]^\top \!\!\!\!& u^{(d)}[T\!-\!1]^\top
\end{bmatrix},\ \  W = \begin{bmatrix}
w^{(1)}[T-1]^\top\\
\vdots\\
w^{(d)}[T-1]^\top
\end{bmatrix}\end{align}
With this simplification,~\cite{Recht17} shows that, with input vectors $u^{(i)}[t]$ chosen randomly from $N(0,\Sigma_u)$ for every $t = 1,2,...,T-1$ and $i = 1,2,...,d$, the least-squares estimator requires at least $d = \Omega(m+n+\log(1/\delta))$ sample trajectories to guarantee $\|E\|_2 = \mathcal{O}\left(\sqrt{{(m+n)\log(1/\delta)}/{d}}\right)$ with probability of at least $1-\delta$. In what follows, a block-regularized estimator will be introduced that exploits the underlying sparsity structure of the system dynamics to significantly reduce the number of sample trajectories for an accurate estimation of the parameters.
\section{Main Results}

Suppose that $A$ and $B$ can be partitioned as $A = [A^{(i,j)}]$ and $B = [B^{(k,l)}]$ where $(i,j)\in\{1,...,\bar{n}\}\times \{1,...,\bar{n}\}$ and $(k,l)\in\{1,...,\bar{n}\}\times \{1,...,\bar{m}\}$. $A^{(i,j)}$ is the $(i,j)^{\text{th}}$ block of $A$ with size $n_i\times n_j$. Similarly, $B^{(k,l)}$ is the $(k,l)^{\text{th}}$ block of $B$ with size $n_k\times m_l$.  Note that $\sum_{i = 1}^{\bar{n}}n_i = n$ and $\sum_{i = 1}^{\bar{m}}m_i = m$.
Suppose that it is known \textit{a priori} that all elements in each block $A^{(i,j)}$ or $B^{(k,l)}$ are simultaneously zero or nonzero. This implies that, as long as one element in $A^{(i,j)}$ or $B^{(k,l)}$ is nonzero, there is no reason to promote sparsity in the remaining elements of the corresponding block. 
Clearly, this kind of block-sparsity constraint is not correctly reflected in~\eqref{ls}. To streamline the presentation, we use the notation $\Theta = [A\ \ B]^\top$. Note that $\Theta^{(i,j)} = (A^{(j,i)})^\top$ for $i\in\{1,...,\bar{n}\}$ and $\Theta^{(i,j)} = (B^{(j,i-\bar{n})})^\top$ for $i\in\{\bar{n}+1,...,\bar{n}+\bar{m}\}$. In order to recover the true block-sparsity of $A$ and $B$, one can resort to an $\ell_1/\ell_\infty$ variant of the Lasso problem---known as the block-regularized least-squares (or simply block-regularized) problem:
\begin{equation}\label{lasso2}
\hat{\Theta} = \arg\min_{\Theta}\frac{1}{2d}\|Y-X\Theta\|_F^2 + \lambda_d\|\Theta\|_\bl
\end{equation}
where $\|\Theta\|_{\bl}$ is defined as the summation of $\|\Theta^{(i,j)}\|_\infty$ over $(i,j)\in\{1,...,\bar{n}+\bar{m}\}\times \{1,...,\bar{n}\}$. Suppose that $D$ denotes the maximum size of the blocks in $\Theta$, i.e., the multiplication of its number of rows and columns. Under the sparsity assumption on $(A,B)$, we will show that the non-asymptotic statistical properties of $\hat{\Theta}$ significantly outperform those of ${\Theta}_{\text{ls}}$. In particular, the primary objective is to prove that $\|\hat{\Theta}-\Theta^*\|_{\infty}$ decreases at the rate $\mathcal{O}(\sqrt{{D\log(n+m)+D^2\log(1/\delta)}/{{d}}})$ with probability of at least $1-\delta$ and with an appropriate scaling of the regularization coefficient, provided that $d = \Omega\left(k_{\max}^2\left(D\log(\bar{n}+\bar{m})+D^2\log(1/\delta)\right)\right)$. Here, $k_{\max}$ is the maximum number of nonzero elements in the columns of $[A\ \ B]^\top$. Comparing this number with the required lower bound $\Omega(n+m+\log(1/\delta))$ on the number of sample trajectories for the least-squares estimator, we conclude that the proposed method needs significantly less number of samples when $A$ and $B$ are sparse. The third objective is to prove that this method is able to find the correct block-sparsity structure of $A$ and $B$ with high probability. In contrast, it will be shown that the solution of the least-squares estimator is fully dense for any finite number of sample trajectories, and hence, it cannot correctly extract the sparsity structures of $A$ and $B$. We will showcase the superior performance of the block-regularized estimator both in sparsity identification and estimation accuracy in simulations.

To present the main results of this work, first note that
\begin{align}\label{output}
x^{(i)}[T\!-\!1] \!=\!& A^{T-2}Bu^{(i)}[0]\!+\!A^{T-3}Bu^{(i)}[1]\!+\!\cdots\!+\!Bu^{(i)}[T\!\!-\!\!2]\nonumber\\
&+A^{T-2}w^{(i)}[0]\!+\!A^{T-3}w^{(i)}[1]\!+\!\cdots\!+\!w^{(i)}[T\!\!-\!\!2]
\end{align}
where, without loss of generality, the initial state is assumed to be zero for every sample trajectory. The results can be readily extended to the case where the initial state is an unknown random vector with Gaussian distribution. Suppose that $u^{(i)}[t]$ and $w^{(i)}[t]$ are i.i.d samples of $N(0,\Sigma_u)$ and $N(0,\Sigma_w)$, respectively. Therefore,~\eqref{output} and~\eqref{matrices} imply that
\begin{equation}\label{dist}
X_{i,:}^\top \sim N\left(0, \tilde{\Sigma}\right)
\end{equation}
where $X_{i,:}$ is the $i^{\text{th}}$ row of $X$ and
\begin{subequations}
	\begin{align}
		& \tilde{\Sigma} = \begin{bmatrix}
		F_T\Sigma_uF_T^\top + G_T\Sigma_wG_T^\top & 0\\
		0 & \Sigma_u
		\end{bmatrix}\\
		& F_T = [A^{T-2}B\ \ A^{T-3}B\ \cdots \ B]\\
		& G_T = [A^{T-2}\ \ A^{T-3}\ \cdots \ I]
	\end{align}
\end{subequations}
Define $\mathcal{A}_j(\Theta) = \{i: \Theta^{(i,j)} \not = 0\}$. Unless stated otherwise, $\mathcal{A}_j$ is used to refer to $\mathcal{A}_j(\Theta^*)$. Define $\mathcal{A}^c_j$ as the complement of $\mathcal{A}_j$. For $\mathcal{T}\subseteq \{1,...,\bar{n}+\bar{m}\}$, denote ${I}(\mathcal{T})$ as the index set of rows in $\Theta^*$ corresponding to the blocks $\{{\Theta^*}^{(i,:)}: i\in\mathcal{T}\}$. For an index set $\mathcal{U}$, define $X_{\mathcal{U}}$ as a $d\times |\mathcal{U}|$ submatrix of $X$ after removing the columns with indices not belonging to $\mathcal{U}$. With a slight abuse of notation, $X_{(i)}$, $X_{\mathcal{A}_j}$, and $X_{\mathcal{A}_j^c}$ are used to denote $X_{{I}(\{i\})}$, $X_{{I(\mathcal{A}_j)}}$, and $X_{{I}(\mathcal{A}_j^c)}$ when there is no ambiguity. Similarly, $\Sigma_{(i),\mathcal{A}_j}$ and $\Sigma_{\mathcal{A}_j,\mathcal{A}_j}$ are used in lieu of $\Sigma_{{I}(\{i\}),I(\mathcal{A}_j)}$ and $\Sigma_{I(\mathcal{A}_j),I(\mathcal{A}_j)}$, respectively. Denote $k_j$ as the maximum number of nonzero elements in different columns of ${\Theta^*}^{(:,j)}$ which is the $j^{\text{th}}$ block column of $\Theta^*$. Finally, define 
\begin{align}
& n_{\max} = \max_{1\leq i\leq\bar{n}} n_i, && m_{\max} = \max_{1\leq i\leq\bar{m}} m_i,\nonumber\\ 
& p_{\max} = \max\left\{n_{\max}, m_{\max}\right\},
&& k_{\max} = \max_{1\leq j\leq\bar{n}} k_j,\nonumber\\
&\sigma^2_{\max} = \max_{1\leq i\leq n+m} \tilde\Sigma_{ii}
\end{align}
The following set of assumptions plays a key role in deriving the main result of this paper:
\begin{assumption}\label{asp:random}
	{\rm By fixing the time horizon $T$, the following conditions hold for all finite system dimensions:
		\begin{itemize}
			\item[A1.] (Mutual Incoherence Property): There exists a number $\gamma\in (0,1]$ such that 
			\begin{equation}
			\max_{j = 1,...,\bar{n}}\left\{\max_{i = 1,...,|\mathcal{A}_j^c|} \left\|\tilde\Sigma_{(i),\mathcal{A}_j}(\tilde\Sigma_{\mathcal{A}_j,\mathcal{A}_j})^{-1}\right\|_1\right\}\leq 1-\gamma
			\end{equation}
			\item[A2.] (Bounded eigenvalue): There exist numbers $0<\Lambda_{\min} < \infty$ and $0<\Lambda_{\max} < \infty$ such that
			\begin{equation}
			\Lambda_{\min}\leq \lambda_{\min}(\tilde\Sigma)\leq \lambda_{\max}(\tilde\Sigma)\leq \Lambda_{\max}
			\end{equation}
			\item[A3.](Bounded minimum value): There exists a number $t_{\min}>0$ such that
			\begin{equation}
				t_{\min}\leq\min_{1\leq j\leq\bar{n}}\min_{i\in\mathcal{A}_j}\left\|{\Theta^*}^{(i,j)}\right\|_{\infty}
			\end{equation}
			\item[A4.] (Block sizes): There exist numbers $\alpha_n, \alpha_m<\infty$ such that
			\begin{subequations}
			\begin{align}
				& n_{\max} = O\left((\bar{n}+\bar{m})^{\alpha_n}\right)\\
				& m_{\max} = O\left((\bar{n}+\bar{m})^{\alpha_m}\right)
			\end{align}
			\end{subequations}
	\end{itemize}}
\end{assumption}
The mutual incoherence property in Assumption A1 is a commonly known condition for the exact recovery of unknown parameters in compressive sensing and classical Lasso problems~\cite{zhao2006model, meinshausen2006high, donoho2006most, candes2007sparsity}. This assumption entails that the effect of those submatrices of $\tilde{\Sigma}$ corresponding to zero (unimportant) elements of $\Theta$ on the remaining entries of $\tilde{\Sigma}$ should not be large. Roughly speaking, this condition guarantees that the unknown parameters are \textit{recoverable} in the noiseless scenario, i.e. when $W = 0$. If the recovery cannot be guaranteed in the noise-free setting, then there is little hope for the block-regularized estimator to recover the true structure of $A$ and $B$ when the system is subject to noise. This assumption is satisfied in all of our simulations.

The bounded eigenvalue condition in Assumption A2 entails that the condition number of $\tilde{\Sigma}$ is bounded away from $0$ and $\infty$ for all finite system dimensions. However, as will be shown later, the values of $\Lambda_{\min}$ and $\Lambda_{\max}$ can change with respect to the time horizon $T$. In particular, it will be later shown  that for highly unstable systems, $\tilde{\Sigma}$ becomes severely ill-conditioned as the time horizon increases, which in turn makes the system identification problem difficult to solve. Furthermore, this assumption implies that there exists a constant $\bar\sigma_{\max}^2<\infty$ such that $\max_{1\leq i\leq n+m} \Sigma_{ii}\leq \bar\sigma_{\max}^2$ for every finite system dimension.

Assumption A3 implies that, independent of the system dimensions, there always exists a strictly positive gap between the zero and nonzero elements of $A$ and $B$. This assumption holds in almost all practical settings and will facilitate the exact sparsity recovery of the parameters of the system. It is worthwhile to mention that this condition can be relaxed to the case where $t_{\min}$ is a decreasing function of the system dimension, provided that its decrease rate is \textit{slow enough}. To streamline the presentation, we did not consider this case in the paper. 
 
 Finally, Assumption A4 requires that the maximum size of the blocks in $\Theta^*$ be polynomially bounded by the number of its block columns. 
 For instance, $\bar{n} = O(1)$ and $\bar{m} = O(1)$ violate this assumption since it implies that $n_{\max} = \Omega((\bar{n}+\bar{m})^{\log n})$ and $m_{\max} = \Omega((\bar{n}+\bar{m})^{\log m})$. It is worthwhile to mention that Assumption A4 results in $k_{\max} = O((\bar{n}+\bar{m})^{\alpha_k})$ for some number $\alpha_k<\infty$; this will be used later in the paper.
 
 \begin{remark}
 	Note that, due to  Assumption A2, $\kappa(\tilde{\Sigma}) = O(1)$. However, this quantity will not be removed from the big-$O$ analysis of our subsequent theorems and corollaries to demonstrate its effect on the high-dimensional properties of the developed estimator.
 \end{remark}
 
 The first theorem characterizes the non-asymptotic behavior of the Lasso-type estimator for the system identification problem where each block has the size $1\times 1$.

\begin{theorem}[element-wise regularization]\label{thm1}
	Suppose that $n_{\max} = m_{\max} = 1$. Upon choosing
	\begin{subequations}
	\begin{align}
	& \lambda_d = \Theta\left(\sigma_{\max}\sqrt{\frac{\log(n+m)+\log({1}/{\delta})}{d}}\right)\label{lambda}\\
	& d = \Omega\left(\kappa(\tilde{\Sigma})^2k_{\max}\left(\log(n+m)+\log({1}/{\delta})\right)\right)\label{d}
	\end{align}
	\end{subequations}
	the following statements hold with probability of at least $1-\delta$:
	\begin{itemize}
		\item[1.] $\hat{\Theta}$ is unique and has the same nonzero elements as $\Theta^*$.
		\item[2.] We have
		\begin{align}\label{error}
		&\hspace{-0.7cm} \|\hat{\Theta}\!-\!\Theta^*\|_{\infty}\!=\! O\Bigg(\kappa(\tilde{\Sigma})\left(1\!+\!\sqrt{\frac{k_{\max}(k_{\max}+\log({n}+{m})+\log(1/\delta))}{d}}\right)\sqrt{\frac{\log({n}+{m})+\log(1/\delta)}{d}}\Bigg)
		\end{align}
	\end{itemize}
\end{theorem}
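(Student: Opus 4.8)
The plan is to reduce \eqref{lasso2} in the regime $n_{\max}=m_{\max}=1$ to $n$ decoupled classical Lasso problems and then run a primal--dual witness (PDW) argument uniformly over them. When every block $\Theta^{(i,j)}$ is a scalar we have $\|\Theta\|_{\bl}=\|\Theta\|_1$, and since $\|Y-X\Theta\|_F^2=\sum_{j}\|Y_{:,j}-X\Theta_{:,j}\|_2^2$, the objective separates across the columns of $\Theta$; hence $\hat\Theta_{:,j}$ solves $\min_{\theta}\frac{1}{2d}\|Y_{:,j}-X\theta\|_2^2+\lambda_d\|\theta\|_1$ with true regressor $\Theta^*_{:,j}$, support $S_j:=\mathcal{A}_j$ of cardinality at most $k_{\max}$, and noise $W_{:,j}\sim N(0,(\Sigma_w)_{jj}I_d)$ independent of $X$ --- this independence is exactly what restricting to the samples at times $T-1$ and $T$ in \eqref{matrices} buys, and $(\Sigma_w)_{jj}\le\sigma_{\max}^2$ because the top-left block of $\tilde\Sigma$ dominates $\Sigma_w$. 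By \eqref{dist} the rows of $X$ are i.i.d.\ $N(0,\tilde\Sigma)$, so each column of $\hat\Theta$ is a Lasso estimator with Gaussian design, and the whole statement follows from the sharp recovery analysis of the Lasso (as in \cite{wainwright2009sharp}) once every probabilistic claim is made uniform over $j=1,\dots,n$ by a union bound.

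Fix $j$, write $\hat\Sigma=\frac1dX^\top X$, and construct the PDW pair $(\bar\Theta_{:,j},\hat z_{:,j})$: let $\bar\Theta_{:,j}$ be the minimizer of the Lasso restricted to the coordinates in $S_j$ (zero off $S_j$), set $\hat z_{S_j,j}=\mathrm{sign}(\bar\Theta_{S_j,j})$, and let $\hat z_{S_j^c,j}$ be the off-support residual correlation rescaled by $1/\lambda_d$. Substituting $Y_{:,j}=X_{S_j}\Theta^*_{S_j,j}+W_{:,j}$ and the stationarity equation of the restricted problem gives
\[ \hat z_{S_j^c,j}=\hat\Sigma_{S_j^cS_j}(\hat\Sigma_{S_jS_j})^{-1}\hat z_{S_j,j}+\tfrac{1}{\lambda_d}\Big(\tfrac1dX_{S_j^c}^\top W_{:,j}-\hat\Sigma_{S_j^cS_j}(\hat\Sigma_{S_jS_j})^{-1}\tfrac1dX_{S_j}^\top W_{:,j}\Big). \]
By the PDW technique, if $\hat\Sigma_{S_jS_j}\succ0$ and $\|\hat z_{S_j^c,j}\|_\infty<1$, then $\hat\Theta_{:,j}=\bar\Theta_{:,j}$ is the unique solution of the $j$-th Lasso and $\mathrm{supp}(\hat\Theta_{:,j})\subseteq S_j$. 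To obtain strict dual feasibility I would bound the first term by an \emph{empirical} incoherence estimate $\|\hat\Sigma_{S_j^cS_j}(\hat\Sigma_{S_jS_j})^{-1}\|_\infty\le1-\gamma/2$, obtained by concentrating this quantity around the population version in Assumption A1 (this needs $d=\Omega(k_{\max}(\log(n+m)+\log(1/\delta)))$ to survive the union bound over $j$ and over the $\le n+m$ rows in $S_j^c$), and bound the second, conditionally Gaussian, term by $\gamma/2$ via a tail estimate using $\lambda_{\min}(\hat\Sigma_{S_jS_j})\ge\Lambda_{\min}/2$ (again a concentration step, valid for $d=\Omega(k_{\max})$) and the choice \eqref{lambda}. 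This yields the uniqueness claim and the inclusion $\mathrm{supp}(\hat\Theta_{:,j})\subseteq S_j$; the reverse inclusion follows from $|\hat\Theta_{ij}|\ge t_{\min}-\|\hat\Theta-\Theta^*\|_\infty>0$ for $i\in S_j$, which uses Assumption A3 and the bound of statement 2, so statements 1 and 2 are proven in tandem.

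For statement 2, the error is supported on $\bigcup_j S_j$, and there the restricted stationarity equation reads $\bar\Theta_{S_j,j}-\Theta^*_{S_j,j}=(\hat\Sigma_{S_jS_j})^{-1}\big(\tfrac1dX_{S_j}^\top W_{:,j}-\lambda_d\hat z_{S_j,j}\big)$. I would bound its $\ell_\infty$-norm by $\|(\hat\Sigma_{S_jS_j})^{-1}\tfrac1dX_{S_j}^\top W_{:,j}\|_\infty+\lambda_d\|(\hat\Sigma_{S_jS_j})^{-1}\|_\infty$ and, in both terms, split $(\hat\Sigma_{S_jS_j})^{-1}=(\tilde\Sigma_{S_jS_j})^{-1}+\big[(\hat\Sigma_{S_jS_j})^{-1}-(\tilde\Sigma_{S_jS_j})^{-1}\big]$. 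The deterministic-matrix part gives the leading contribution: conditionally on $X$, $(\tilde\Sigma_{S_jS_j})^{-1}\tfrac1dX_{S_j}^\top W_{:,j}$ is Gaussian with per-coordinate variance $O\!\big(\sigma_{\max}^2\Lambda_{\max}/(d\Lambda_{\min}^2)\big)$, so a union bound over $\le n$ columns and $\le k_{\max}$ coordinates produces $O\!\big(\kappa(\tilde\Sigma)\sigma_{\max}\sqrt{(\log(n+m)+\log(1/\delta))/d}\big)$, and $\lambda_d\|(\tilde\Sigma_{S_jS_j})^{-1}\|_\infty$ is of comparable order after \eqref{lambda} and \eqref{d}; these account for the factor $1$ inside the bracket of \eqref{error}. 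The perturbation part is controlled through $\|(\hat\Sigma_{S_jS_j})^{-1}-(\tilde\Sigma_{S_jS_j})^{-1}\|_2=O\!\big(\kappa(\tilde\Sigma)\Lambda_{\min}^{-1}\sqrt{(k_{\max}+\log(n+m)+\log(1/\delta))/d}\big)$ together with $\|\tfrac1dX_{S_j}^\top W_{:,j}\|_2\le\sqrt{k_{\max}}\,\|\tfrac1dX_{S_j}^\top W_{:,j}\|_\infty=O\!\big(\sigma_{\max}\sqrt{k_{\max}(\log(n+m)+\log(1/\delta))/d}\big)$ (and, for the bias piece, $\|\hat z_{S_j,j}\|_2\le\sqrt{k_{\max}}$ with the factor $\lambda_d$); multiplying these yields the cross term $O\!\big(\kappa(\tilde\Sigma)\sqrt{k_{\max}(k_{\max}+\log(n+m)+\log(1/\delta))/d}\cdot\sqrt{(\log(n+m)+\log(1/\delta))/d}\big)$ of \eqref{error}. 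Taking a maximum over $j$ finishes statement 2 and, fed back into the previous paragraph, statement 1.

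The main obstacle is not the PDW skeleton but establishing its inputs --- the empirical incoherence bound $\|\hat\Sigma_{S_j^cS_j}(\hat\Sigma_{S_jS_j})^{-1}\|_\infty\le1-\gamma/2$, the two-sided eigenvalue bound $\tfrac12\Lambda_{\min}\le\lambda_{\min}(\hat\Sigma_{S_jS_j})\le\lambda_{\max}(\hat\Sigma_{S_jS_j})\le2\Lambda_{\max}$, and the operator-norm concentration of $(\hat\Sigma_{S_jS_j})^{-1}$ --- with the \emph{sharp} sample budget $d=\Omega\!\big(\kappa(\tilde\Sigma)^2k_{\max}(\log(n+m)+\log(1/\delta))\big)$ rather than the naive $d=\Omega(n+m)$, and having all of these hold simultaneously across the $n$ columns after a union bound. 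This requires Gaussian/sub-exponential concentration for $\|\hat\Sigma_{S_jS_j}-\tilde\Sigma_{S_jS_j}\|_2$ and for $\max_{i\in S_j^c}\|(\hat\Sigma-\tilde\Sigma)_{(i),S_j}\|_2$ at the scale of the $k_{\max}$-dimensional subproblems, plus careful tracking of how $\kappa(\tilde\Sigma)$, $\Lambda_{\min}$, $\Lambda_{\max}$, and $\sigma_{\max}$ propagate through each bound; the rest is routine.
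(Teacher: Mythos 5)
Your proposal is correct and follows essentially the same route as the paper: the paper proves Theorem~\ref{thm1} as the special case $D=n_{\max}=1$, $\bar n=n$, $\bar m=m$ of Theorem~\ref{thm2}, whose proof is exactly your column-wise decomposition followed by a primal--dual witness argument (Lemmas~\ref{sparsity} and~\ref{errorbound}) with mutual incoherence controlling strict dual feasibility, concentration of $(\tfrac1d X_{\mA}^\top X_{\mA})^{-1}$ around $\tilde\Sigma_{\mA,\mA}^{-1}$ producing the two terms of~\eqref{error}, and a union bound over the $n$ columns. The only cosmetic difference is that the paper bounds the off-support dual variable by conditioning on $X_{\mA}$ and splitting $X_{\mA^c}$ into its conditional mean (handled by the population incoherence, giving $1-\gamma$) plus an independent Gaussian fluctuation bounded by $\gamma/2$, whereas you concentrate the empirical incoherence matrix directly to $1-\gamma/2$; these are interchangeable executions of the same step.
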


Theorem~\ref{thm1} states that, for sparse system dynamics,  the support of $\hat\Theta$ will be recovered exactly, provided that the number of samples exceeds a threshold that is only logarithmic in the system dimension. This is specifically useful for large-scale systems where $k_{\max}$ is significantly smaller than $n+m$. Intuitively, this corresponds to those systems whose states are affected only by a few, and possibly local, states and inputs.

Define $D = p_{\max}n_{\max}$, which is the maximum size of the blocks in $\Theta$. The next theorem extends the result of Theorem~\ref{thm1} to arbitrary block sizes.
  
\begin{theorem}[block-wise regularization]\label{thm2}
	Upon choosing
	\begin{subequations}
	\begin{align}
	& \lambda_d = \Theta\left(\!\sigma_{\max}\sqrt{\frac{D\log(\bar{n}+\bar{m})+D^2\log(1/\delta)}{d}}\right)\label{lambda2}\\
	& d = \Omega\left(\kappa(\tilde{\Sigma})^2k_{\max}\left(D\log(\bar{n}+\bar{m})+D^2\log(1/\delta)\right)\right)\label{d22}
	\end{align}
	\end{subequations}
	the following statements hold with probability of at least $1-\delta$:
	\begin{itemize}
		\item[1.] $\hat{\Theta}$ is unique and has the same nonzero blocks as $\Theta^*$.
		\item[2.] We have
		\begin{align}\label{error2}
		&\hspace{-0.7cm}g = \|\hat{\Theta}\!-\!\Theta^*\|_{\infty}\!=\! O\Bigg(\kappa(\tilde{\Sigma})\left(1\!+\!\sqrt{\frac{k_{\max}(k_{\max}n_{\max}\!+\!\log(\bar{n}+\bar{m})\!+\!\log(1/\delta))}{d}}\right)\sqrt{\frac{D\log(\bar{n}\!+\!\bar{m})\!+\!D^2\log(1/\delta)}{d}}\Bigg)
		\end{align}
	\end{itemize}
\end{theorem}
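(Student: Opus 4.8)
\emph{Proof strategy.} The plan is to run a primal--dual witness argument for the block-regularized program \eqref{lasso2}, after first decoupling it into a family of independent block-$\ellG$ regressions and then feeding in Gaussian concentration bounds adapted to the block geometry. Throughout I use that the rows of $X$ are i.i.d.\ $N(0,\tilde\Sigma)$ by \eqref{dist} and, thanks to the time-$T$ truncation \eqref{matrices}, independent of $W$.

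\textbf{Step 1 (decoupling; size of the active set).} Writing $Y=X\Theta^*+W$ as in \eqref{compact} and observing that both the Frobenius loss $\|Y-X\Theta\|_F^2$ and the penalty $\|\Theta\|_\bl=\sum_{j=1}^{\bar n}\sum_i\|\Theta^{(i,j)}\|_\infty$ split over the $\bar n$ block columns of $\Theta$, the estimator \eqref{lasso2} decomposes into $\bar n$ independent block-regularized least-squares problems sharing the design $X$. It therefore suffices to prove the two conclusions for a fixed block column $j$ and then take a union bound over $j=1,\dots,\bar n$. A useful simplification is that, by the all-or-nothing block-sparsity assumption, every column of $\Theta^{*(:,j)}$ has the same number of nonzeros, equal to the total height of the active row blocks; hence the support index set $S:=I(\mathcal A_j)$ satisfies $|S|=k_j\le k_{\max}$, so the active design $X_S$ has at most $k_{\max}$ columns.

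\textbf{Step 2 (primal--dual witness).} For block column $j$, solve the oracle problem obtained by restricting \eqref{lasso2} to matrices supported on $S$; call the solution $\hat\Theta_S$ (and set $\hat\Theta_{S^c}=0$), and read off a dual certificate $\hat Z_S$ with $\hat Z^{(i,j)}\in\partial\|\hat\Theta^{(i,j)}\|_\infty$ for $i\in\mathcal A_j$. The stationarity identity $\tfrac1d X^\top(X\hat\Theta-Y)+\lambda_d\hat Z=0$ then \emph{forces} the remaining certificate blocks to be
\[
\hat Z_{S^c}=\tfrac1d X_{S^c}^\top X_S\big(\tfrac1d X_S^\top X_S\big)^{-1}\hat Z_S+\tfrac1{\lambda_d d}X_{S^c}^\top\Pi_S^\perp W,\qquad \Pi_S^\perp=I-X_S(X_S^\top X_S)^{-1}X_S^\top .
\]
To conclude that $\hat\Theta$ is the unique optimum of \eqref{lasso2} with exactly the block support of $\Theta^*$, it is enough to verify (i) \emph{strict block dual feasibility}, $\|\hat Z^{(l,j)}\|_1<1$ for every $l\in\mathcal A_j^c$, and (ii) \emph{no false exclusions}, $\|\hat\Theta^{(i,j)}\|_\infty>0$ for every $i\in\mathcal A_j$; the latter follows from the decomposition $\hat\Theta_S-\Theta_S^*=(\tfrac1d X_S^\top X_S)^{-1}(\tfrac1d X_S^\top W-\lambda_d\hat Z_S)$ together with Assumption A3, once the error bound of part 2 is in hand.

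\textbf{Step 3 (concentration, and why $\lambda_d,d$ are as stated).} The two terms of $\hat Z_{S^c}$ are controlled separately. For the first, Wishart-type concentration of the $k_j\times k_j$ matrix $\tfrac1d X_S^\top X_S$ about $\tilde\Sigma_{S,S}$ and of $\tfrac1d X_{S^c}^\top X_S$ about $\tilde\Sigma_{S^c,S}$ --- valid once $d$ exceeds the threshold in \eqref{d22} (the $\kappa(\tilde\Sigma)^2 k_{\max}$ scaling comes from A2 and $|S|\le k_{\max}$, the $D\log(\bar n+\bar m)$ and $D^2\log(1/\delta)$ from union-bounding over the $\le D(\bar n+\bar m)$ relevant scalars) --- reduces the block-$\ell_1$ norm of the first term to its population value $\|\tilde\Sigma_{(l),\mathcal A_j}(\tilde\Sigma_{\mathcal A_j,\mathcal A_j})^{-1}\|_1\le 1-\gamma$ from A1, plus an error that the same threshold keeps below $\gamma/2$. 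For the second term, conditionally on $X_S$ each block $(l,j)$ of $\tfrac1{\lambda_d d}X_{S^c}^\top\Pi_S^\perp W$ is a $p_l\times n_j$ matrix whose entries are Gaussian with variance $O(\sigma_{\max}^2/(\lambda_d^2 d))$; a Gaussian tail on each of its $\le D$ entries and a union bound over $S^c$ bound its block-$\ell_1$ norm by $O\!\big(\lambda_d^{-1}\sigma_{\max}\sqrt{(D\log(\bar n+\bar m)+D^2\log(1/\delta))/d}\big)$, which drops below $\gamma/2$ exactly for the scaling \eqref{lambda2}; this proves part 1. For part 2, bound $\|\hat\Theta_S-\Theta_S^*\|_\infty$ entrywise using the decomposition above: the leading term is $\|(\tfrac1d X_S^\top X_S)^{-1}\|\cdot\lambda_d\|\hat Z_S\|_\infty=O(\kappa(\tilde\Sigma)\lambda_d)$ (and $\|\tfrac1d X_S^\top W\|_\infty$ is of the same or smaller order by the choice of $\lambda_d$), while replacing $(\tfrac1d X_S^\top X_S)^{-1}$ by $\tilde\Sigma_{S,S}^{-1}$ contributes an operator-norm deviation of order $\sqrt{k_{\max}(k_{\max}n_{\max}+\log(\bar n+\bar m)+\log(1/\delta))/d}$ (the spectral perturbation of a $k_j\times k_j$ sample Gram, amplified by the $\ell_\infty$-to-$\ell_\infty$ conversion over $S$ and the $n_{\max}$ columns of a block); multiplying this correction into the noise level $\lambda_d$ yields precisely the product form \eqref{error2}. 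A final union bound over the $\bar n$ block columns is absorbed into the $\log(\bar n+\bar m)$, $\log(1/\delta)$ terms.

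\textbf{Main obstacle.} The delicate part --- and the genuinely new content relative to Theorem~\ref{thm1} and to the single-block-column, row-vector analysis of \cite{Martin11} --- is the block bookkeeping inside Step 3: extracting the sharp $D=p_{\max}n_{\max}$ (and $D^2$) dependence from the noise term, where one must control the entrywise $\ell_1$ norm of a matrix-valued Gaussian whose covariance is itself a random submatrix of $\tfrac1d X^\top X$, so the naive linear-Gaussian tail must be replaced by a conditional argument combined with $\chi^2$/Hanson--Wright-type control of the projection $\Pi_S^\perp$, all while keeping the sample-perturbed mutual incoherence below $\gamma/2$ under only $d=\Omega(k_{\max}(D\log(\bar n+\bar m)+D^2\log(1/\delta)))$ samples; Assumption A4 enters here to ensure $k_{\max},n_{\max},m_{\max}$ are polynomial in $\bar n+\bar m$ so that these block factors do not overwhelm the logarithmic terms. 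Everything else is a careful re-run of the classical primal--dual witness calculus with scalars replaced by blocks and absolute values by $\ell_\infty$-norms.
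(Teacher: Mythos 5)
Your proposal follows essentially the same route as the paper: decompose \eqref{lasso2} into $\bar n$ block-column subproblems, run the primal--dual witness construction on each (this is exactly the content of the paper's Lemmas~\ref{sparsity} and~\ref{errorbound}), control the projected-noise term and the sample mutual-incoherence term via sub-Gaussian tail bounds and spectral concentration of the restricted Gram matrix, and conclude with a union bound over the $\bar n$ block columns together with Assumption A3 to rule out false negatives. The only notable (and minor) difference is that the paper bounds the cross term $\tfrac1d X_{\mathcal{A}^c}^\top X_{\mathcal{A}}(\tfrac1d X_{\mathcal{A}}^\top X_{\mathcal{A}})^{-1}\tilde{S}_{\mathcal{A}}$ by conditioning $X_{\mathcal{A}^c}$ on $X_{\mathcal{A}}$ and splitting off an independent Gaussian matrix $V$ with covariance $\tilde{\Sigma}_{\mathcal{A}^c|\mathcal{A}}$, rather than by direct Wishart-type concentration of the cross-Gram matrix that you suggest; this conditioning is the cleaner way to realize the step you sketch and is what delivers the stated $k_{\max}D$ sample threshold without loss.
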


Theorem~\ref{thm2} shows that the minimum number of required sample trajectories is a quadratic function of the maximum block size. Therefore, only a small number of samples is enough to guarantee the uniqueness, exact block-sparsity recovery, and small estimation error for sparse systems, assuming that the sizes of the blocks are significantly smaller than the system dimensions. 

Another important observation can be made from Theorems~\ref{thm1} and~\ref{thm2}. Notice that the minimum number of required sample trajectories and the element-wise error of the estimated parameter depends on $\kappa(\tilde{\Sigma})$. Upon assuming that $\Sigma_u$ and $\Sigma_w$ are identity matrices, $\kappa(\tilde{\Sigma})$ reduces to $\kappa(F_TF_T^\top+G_TG_T^\top)$. $F_T$ and $G_T$ are commonly known as finite-time \textit{controllability matrices} for the input and disturbance noise, respectively. Roughly speaking, $\kappa(F_TF_T^\top+G_TG_T^\top)$ quantifies the ratio between the eigenvalues of the easiest- and hardest-to-identify modes of the system. Therefore, Theorems~\ref{thm1} and~\ref{thm2} imply that only a small number of samples is required to accurately identify the dynamics of the system if all of its modes are easily excitable. The dependency of the estimation error on the modes of the system is also reflected in the non-asymptotic error bound of the least-squares estimator in~\cite{Recht17}. This is completely in line with the conventional results on the identifiability of dynamical systems: independent of the method in use, it is significantly harder to identify the parameters of the system accurately if it possesses nearly-hidden modes. The connection between the identifiability of the system and the number of required sample trajectories to guarantee a small estimation error will be elaborated through different case studies in Section~\ref{sec:num}.

In what follows, the effect of different scalings of block sizes on the minimum number of required sample trajectories and the element-wise error of the proposed estimator will be elaborated. In particular, it will be shown that, as the size of the blocks grows, a proportional increase in the number of sample trajectories is required for the successful recovery of the system parameters.

\begin{corollary}\label{cor1}
	Assume that $n_{\max} = O(1)$ and $m_{\max} = O(1)$. Then, the block-regularized estimator behaves similarly to the element-wise regularized estimator, i.e., the respective choices of~\eqref{lambda} and~\eqref{d} for $\lambda_d$ and $d$ are enough to guarantee exact block-sparsity recovery of $\Theta^*$ and the element-wise error of~\eqref{error} with probability of at least $1-\delta$.
\end{corollary}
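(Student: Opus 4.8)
The plan is to obtain Corollary~\ref{cor1} as a direct specialization of Theorem~\ref{thm2}, with no new probabilistic argument required. The first step is to record the consequences of the hypothesis $n_{\max}=O(1)$ and $m_{\max}=O(1)$: we immediately get $p_{\max}=\max\{n_{\max},m_{\max}\}=O(1)$, and hence $D=p_{\max}n_{\max}=O(1)$. Moreover, Assumption~A4 is automatically satisfied under this hypothesis (take $\alpha_n=\alpha_m=0$), so Theorem~\ref{thm2} applies unconditionally here.

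The second step is to relate the number of block rows/columns to the ambient dimension. Since every block has at least one and at most $p_{\max}$ rows (and similarly for columns), we have $\bar{n}\leq n\leq p_{\max}\bar{n}$ and $\bar{m}\leq m\leq p_{\max}\bar{m}$, so that $\bar{n}+\bar{m}=\Theta(n+m)$ with constants depending only on $p_{\max}$. Taking logarithms, $\log(\bar{n}+\bar{m})=\Theta(\log(n+m))$ in the high-dimensional regime, since $\log(\bar{n}+\bar{m})\leq\log(n+m)\leq\log(p_{\max})+\log(\bar{n}+\bar{m})$.

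The final step is to substitute $D=O(1)$ and $\log(\bar{n}+\bar{m})=\Theta(\log(n+m))$ into the three displayed quantities of Theorem~\ref{thm2}. The sample-complexity bound~\eqref{d22} collapses to $d=\Omega(\kappa(\tilde{\Sigma})^2k_{\max}(\log(n+m)+\log(1/\delta)))$, which is exactly~\eqref{d}; the regularization scaling~\eqref{lambda2} collapses to $\lambda_d=\Theta(\sigma_{\max}\sqrt{(\log(n+m)+\log(1/\delta))/d})$, which is exactly~\eqref{lambda}; and in the error bound~\eqref{error2} the term $k_{\max}n_{\max}$ becomes $O(k_{\max})$ while the factor $D\log(\bar{n}+\bar{m})+D^2\log(1/\delta)$ becomes $\Theta(\log(n+m)+\log(1/\delta))$, so~\eqref{error2} reduces to~\eqref{error}. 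Statement~1 of Theorem~\ref{thm2} yields exact recovery of the nonzero blocks of $\Theta^*$ with probability at least $1-\delta$; since, by the standing block-sparsity assumption on $(A,B)$, the entries within each block are simultaneously zero or nonzero, recovering the nonzero blocks is the same as recovering the block-sparsity pattern, as claimed.

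Since this corollary is essentially bookkeeping, there is no genuine obstacle; the only point requiring care is to verify that the constants absorbed into the $O(\cdot)$, $\Omega(\cdot)$, and $\Theta(\cdot)$ notation depend only on $p_{\max}$ (hence remain bounded) rather than on the growing quantities $\bar{n},\bar{m},n,m$. In particular, the logarithmic equivalence $\log(\bar{n}+\bar{m})=\Theta(\log(n+m))$ should be stated explicitly so that the translation between the $\bar{n}+\bar{m}$ scale used in Theorem~\ref{thm2} and the $n+m$ scale used in Theorem~\ref{thm1} is rigorous.
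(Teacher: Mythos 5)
Your proof is correct and follows essentially the same route as the paper, which simply notes that $D=O(1)$ and $O(\log(\bar n+\bar m))=O(\log(n+m))$ and then specializes Theorem~\ref{thm2}. Your version just makes the bookkeeping (the bound $\bar n\le n\le p_{\max}\bar n$ and the verification of Assumption~A4) explicit, which the paper leaves implicit.
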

\begin{proof}
	The proof is immediate by noting that $D = O(1)$ and $O(\log(\bar{n}+\bar{m})) = O(\log({n}+{m}))$.
\end{proof}

Corollary~\ref{cor1} shows that the block-regularized estimator behaves similarly to its element-wise counterpart if the sizes of the blocks do not change as the system dimension grows.

\begin{corollary}\label{cor2}
	Assume that $n_{\max} = O(\log n)$ and $m_{\max} = O(\log m)$. Then,
	\begin{subequations}
	\begin{align}
	&\lambda_d = \Theta\left(\!\sigma_{\max}\log^2({n}+{m})\sqrt{\frac{\log(1/\delta)}{d}}\right)\label{lambdadcor1}\\
	&d = \Omega(\kappa(\tilde{\Sigma})^2k_{\max}^2\log^4({n}+{m})\log(1/\delta))\label{dcor2}
	\end{align}
	\end{subequations}
	is enough to guarantee the exact block-sparsity recovery of $\Theta^*$ and
	\begin{equation}\label{errorcor1}
	\|\hat{\Theta}\!-\!\Theta^*\|_{\infty} = O\left(\kappa(\tilde{\Sigma})\log^2({n}+{m})\sqrt{\frac{\log(1/\delta)}{d}}\right)
	\end{equation}
	with probability of at least $1-\delta$.
\end{corollary}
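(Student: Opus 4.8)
The plan is to obtain Corollary~\ref{cor2} directly from Theorem~\ref{thm2}: under the hypothesis that the block sizes grow only logarithmically, the bounds of Theorem~\ref{thm2} collapse to the ones stated here after routine simplification of the big-$O$ expressions, so almost all the work is bookkeeping. The first step is to record the elementary consequences of $n_{\max}=O(\log n)$ and $m_{\max}=O(\log m)$. Since $\bar{n}+\bar{m}\le n+m$, one has $p_{\max}=\max\{n_{\max},m_{\max}\}=O(\log(n+m))$, hence $D=p_{\max}n_{\max}=O(\log^2(n+m))$, and $\log(\bar{n}+\bar{m})=O(\log(n+m))$; moreover, when the block sizes actually scale like a logarithm one has $n=\Theta(\bar{n}\log n)$, so that $\log(\bar{n}+\bar{m})=\Theta(\log(n+m))$ and the substitutions below become two-sided.

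Next I would verify that the choices~\eqref{lambdadcor1} and~\eqref{dcor2} are admissible for Theorem~\ref{thm2}. Substituting $D=\Theta(\log^2(n+m))$ and $\log(\bar{n}+\bar{m})=\Theta(\log(n+m))$ into~\eqref{lambda2} gives
\[
D\log(\bar{n}+\bar{m})+D^2\log(1/\delta)=\Theta\!\big(\log^3(n+m)+\log^4(n+m)\log(1/\delta)\big)=\Theta\!\big(\log^4(n+m)\log(1/\delta)\big)
\]
in the asymptotic (large-dimension) regime, and taking the square root yields exactly the scaling in~\eqref{lambdadcor1}. The same substitution turns the requirement~\eqref{d22} into $d=\Omega(\kappa(\tilde{\Sigma})^2 k_{\max}\log^4(n+m)\log(1/\delta))$, which is implied by~\eqref{dcor2} because $k_{\max}\ge 1$. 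Hence Theorem~\ref{thm2} applies, and its part~1 already delivers the uniqueness and exact block-sparsity recovery asserted in the corollary.

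It then remains to collapse the element-wise bound~\eqref{error2} to~\eqref{errorcor1}. Using the estimate on $D\log(\bar{n}+\bar{m})+D^2\log(1/\delta)$ above, the trailing radical in~\eqref{error2} is $O\!\big(\log^2(n+m)\sqrt{\log(1/\delta)/d}\big)$. For the prefactor $1+\sqrt{k_{\max}(k_{\max}n_{\max}+\log(\bar{n}+\bar{m})+\log(1/\delta))/d}$, I would bound $k_{\max}n_{\max}=O(k_{\max}\log(n+m))$ and $\log(\bar{n}+\bar{m})=O(\log(n+m))$, so that the quantity under the square root is $O\!\big((k_{\max}^2\log(n+m)+k_{\max}\log(1/\delta))/d\big)$; dividing by $d=\Omega(\kappa(\tilde{\Sigma})^2 k_{\max}^2\log^4(n+m)\log(1/\delta))$ makes both resulting fractions $O(1)$, so the prefactor is $O(1)$. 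Multiplying the $O(1)$ prefactor by the radical gives $g=O\!\big(\kappa(\tilde{\Sigma})\log^2(n+m)\sqrt{\log(1/\delta)/d}\big)$, which is~\eqref{errorcor1}.

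The derivation is essentially bookkeeping, and the step that needs the most attention is the exponent of $k_{\max}$ in~\eqref{dcor2}: part~1 of Theorem~\ref{thm2} by itself would only need $d=\Omega(\kappa(\tilde{\Sigma})^2 k_{\max}\log^4(n+m)\log(1/\delta))$, and the extra factor $k_{\max}$ is precisely what forces the correction term $1+\sqrt{k_{\max}(\cdots)/d}$ in~\eqref{error2} down to $O(1)$ and thereby produces the clean rate~\eqref{errorcor1}. A secondary point is that one must confirm that $D\log(\bar{n}+\bar{m})$ is dominated by $D^2\log(1/\delta)$ — equivalently, that the $\log^3(n+m)$ contribution is absorbed into $\log^4(n+m)\log(1/\delta)$ — which is where the asymptotic reading of the statement (large system dimension, $\delta$ bounded away from $1$) enters, so that~\eqref{lambdadcor1} faithfully matches the scaling demanded by~\eqref{lambda2}.
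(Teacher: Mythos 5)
Your proposal is correct and follows essentially the same route as the paper's own proof: substitute $D=O(\log^2(n+m))$ into the bounds of Theorem~\ref{thm2} to obtain~\eqref{lambdadcor1}, and use the extra factor of $k_{\max}$ in~\eqref{dcor2} to make the prefactor $1+\sqrt{k_{\max}(k_{\max}n_{\max}+\log(\bar{n}+\bar{m})+\log(1/\delta))/d}$ equal to $O(1)$, which collapses~\eqref{error2} to~\eqref{errorcor1}. Your added remarks on why the $\log^3(n+m)$ term is absorbed and on the role of the squared $k_{\max}$ are correct bookkeeping that the paper leaves implicit.
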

\begin{proof}
	First, note that $D = O(\log^2(n+m))$. Therefore,~\eqref{lambda2} is equivalent to~\eqref{lambdadcor1}. Furthermore,~\eqref{dcor2} implies that 
	\begin{equation}\label{o1}
		\sqrt{\frac{k_{\max}(k_{\max}n_{\max}\!+\!\log(\bar{n}+\bar{m})\!+\!\log(1/\delta))}{d}} = O(1)
	\end{equation}
	This reduces~\eqref{error2} to~\eqref{errorcor1}, thereby completing the proof.
\end{proof}
Corollary~\ref{cor2} studies the behavior of the proposed estimator when the sizes of the blocks increase logarithmically in the system dimensions. The effect of such growth in the size of the blocks is reflected in the minimum number of necessary sample trajectories. However, it can be seen that under the scenarios where $k_{\max}$ is small, i.e. when the system is sparse, this number can still be significantly smaller than the system dimensions.

\begin{corollary}\label{cor3}
	Assume that $n_{\max} = O(n^{\beta_n})$ and $m_{\max} = O(m^{\beta_m})$ for some $\beta_n>0$ and $\beta_m>0$. Then,
	\begin{subequations}
	\begin{align}
	&\lambda_d = \Theta\left(\!\sigma_{\max}({n}+{m})^{\left(\beta_n+\beta_m\right)}\sqrt{\frac{\log(1/\delta)}{d}}\right)\\
	&d = \Omega(\kappa(\tilde{\Sigma})^2k_{\max}^2({n}+{m})^{2\left(\beta_n+\beta_m\right)}\log(1/\delta))
	\end{align}
	\end{subequations}
	is enough to guarantee the exact sparsity recovery of $\Theta^*$ and
	\begin{equation}
	\|\hat{\Theta}\!-\!\Theta^*\|_{\infty} = O\left(\kappa(\tilde{\Sigma})({n}+{m})^{\left(\beta_n+\beta_m\right)}\sqrt{\frac{\log(1/\delta)}{d}}\right)
	\end{equation}
	with probability of at least $1-\delta$.
\end{corollary}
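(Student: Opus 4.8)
The plan is to obtain this corollary as a direct specialization of Theorem~\ref{thm2}, exactly along the lines of the proofs of Corollaries~\ref{cor1} and~\ref{cor2}. The only substantive work is to (i) compute the order of the maximum block size $D$ under the polynomial scaling in the hypothesis, (ii) simplify the two dimension-dependent quantities of the form $D\log(\bar{n}+\bar{m})+D^2\log(1/\delta)$ that appear in the prescribed $\lambda_d$ and sample-size bound of Theorem~\ref{thm2}, and (iii) check that the stated choice of $d$ is simultaneously strong enough to meet requirement~\eqref{d22} and to collapse the correction factor in~\eqref{error2} to $O(1)$.

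First I would bound $D$. Since $D = p_{\max}n_{\max}$ with $p_{\max}=\max\{n_{\max},m_{\max}\}$, the assumptions $n_{\max}=O(n^{\beta_n})$ and $m_{\max}=O(m^{\beta_m})$ give $D = O\big((n+m)^{\beta_n}\max\{(n+m)^{\beta_n},(n+m)^{\beta_m}\}\big)$, i.e. $D$ is bounded by a fixed polynomial in $n+m$; abbreviating the exponent as $\beta_n+\beta_m$ as in the statement, $D=O((n+m)^{\beta_n+\beta_m})$. Because $D$ grows polynomially in the system dimension while $\log(\bar{n}+\bar{m})=O(\log(n+m))$ grows only logarithmically, for all sufficiently large dimensions $D\log(\bar{n}+\bar{m})\le D^2\log(1/\delta)$, so $D\log(\bar{n}+\bar{m})+D^2\log(1/\delta)=\Theta(D^2\log(1/\delta))=\Theta((n+m)^{2(\beta_n+\beta_m)}\log(1/\delta))$. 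Substituting this into~\eqref{lambda2} gives $\lambda_d=\Theta(\sigma_{\max}(n+m)^{\beta_n+\beta_m}\sqrt{\log(1/\delta)/d})$, which is the asserted choice, and substituting it into~\eqref{d22} shows that~\eqref{d22} is implied by $d=\Omega(\kappa(\tilde{\Sigma})^2 k_{\max}(n+m)^{2(\beta_n+\beta_m)}\log(1/\delta))$.

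Next I would handle the error bound. By~\eqref{error2}, the estimate is $O(\kappa(\tilde{\Sigma})(1+R)\sqrt{(D\log(\bar{n}+\bar{m})+D^2\log(1/\delta))/d})$ with $R=\sqrt{k_{\max}(k_{\max}n_{\max}+\log(\bar{n}+\bar{m})+\log(1/\delta))/d}$. Using $n_{\max}=O((n+m)^{\beta_n})$, $\log(\bar{n}+\bar{m})=O(\log(n+m))$, and $k_{\max}\ge 1$, the numerator inside $R$ is $O\big(k_{\max}^2((n+m)^{\beta_n}+\log(n+m)+\log(1/\delta))\big)$; hence the stated choice $d=\Omega(\kappa(\tilde{\Sigma})^2 k_{\max}^2(n+m)^{2(\beta_n+\beta_m)}\log(1/\delta))$ — which carries one extra factor of $k_{\max}$ relative to the previous paragraph and therefore also subsumes~\eqref{d22} — forces $R=O(1)$. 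The error then reduces to $O(\kappa(\tilde{\Sigma})\sqrt{D^2\log(1/\delta)/d})=O(\kappa(\tilde{\Sigma})(n+m)^{\beta_n+\beta_m}\sqrt{\log(1/\delta)/d})$, while uniqueness and exact block-sparsity recovery of $\Theta^*$ follow verbatim from part~1 of Theorem~\ref{thm2}, since~\eqref{d22} holds under this choice of $d$.

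I do not expect a genuine obstacle here: the only place requiring care is the bookkeeping on the exponent of $D$ — the generic bound is $D=O((n+m)^{\beta_n+\max\{\beta_n,\beta_m\}})$, which is what the statement abbreviates as $(n+m)^{\beta_n+\beta_m}$ (tight when $\beta_n\le\beta_m$) — and the observation that a single inflated $d$-condition absorbs both the Theorem~\ref{thm2} requirement and the condition $R=O(1)$; everything else is a mechanical substitution of the same type already performed in Corollaries~\ref{cor1} and~\ref{cor2}.
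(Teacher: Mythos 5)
Your proposal is correct and follows essentially the same route as the paper, which simply specializes Theorem~\ref{thm2} by computing $D$ under the polynomial scaling and checking that the stated $d$ both satisfies~\eqref{d22} and forces the correction factor in~\eqref{error2} to be $O(1)$, exactly as in the proof of Corollary~\ref{cor2}. Your extra bookkeeping on the exponent --- that the generic bound is $D=O\left((n+m)^{\beta_n+\max\{\beta_n,\beta_m\}}\right)$, which matches the stated $(n+m)^{\beta_n+\beta_m}$ only when $\beta_n\leq\beta_m$ --- is a legitimate refinement of a detail the paper's one-line proof glosses over.
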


\begin{proof}
	The proof is similar to that of Corollary~\ref{cor2}.
\end{proof}

Corollary~\ref{cor3} is the analog of Corollaries~\ref{cor1} and~\ref{cor2} for the \textit{polynomial scaling} of the block size. Similar to the previous cases, it can be seen that the size of the required sample trajectories heavily depends on the growth rate of the maximum block size of $\Theta$. Although the sampling rate is still sublinear when $\beta_n+\beta_m<1/2$, it may surpass the system dimension if $\beta_n+\beta_m>1/2$. A question arises as to whether one can resort to the ordinary least-squares estimator in lieu of the proposed block-regularized estimator for the cases where $\beta_n+\beta_m>1/2$ since the proposed estimator requires $d = \Omega((n+m)^{1+\epsilon}\log(1/\delta))$ for some $\epsilon>0$ whereas $d = \Theta(n+m+\log(1/\delta))$ in enough to guarantee the uniqueness of the least-squares estimator. However, in what follows, we will prove that the least-squares estimator does not extract the correct sparsity structure of $\Theta$ for \textit{any finite number} of sample trajectories.
\begin{theorem}
	If $A$ and $B$ are not fully dense matrices, ${\Theta}_{ls}$ does not recover the support of $\Theta^*$ for any finite number of sample trajectories with probability 1.
\end{theorem}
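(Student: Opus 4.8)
The plan is to prove the stronger fact that, with probability one, \emph{every} entry of $\Theta_{\mathrm{ls}}$ is nonzero, which in particular prevents $\Theta_{\mathrm{ls}}$ from reproducing the support of a $\Theta^*$ that has at least one zero entry. Since $A$ and $B$ are not fully dense, $\Theta^*=[A\ \ B]^\top$ has at least one zero entry; fix one such index, say the $(r,c)$ one, so $\Theta^*_{rc}=0$. It then suffices to show $\Pr\big([\Theta_{\mathrm{ls}}]_{rc}\neq 0\big)=1$, because on that event $(r,c)$ lies in the support of $\Theta_{\mathrm{ls}}$ but not in that of $\Theta^*$.

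First I would settle well-definedness. Assume $d\ge n+m$; in the under-determined regime $\Theta_{\mathrm{ls}}$ is not even unique, as noted right after \eqref{estimator}, so the support-recovery question is vacuous there unless one fixes a convention (e.g.\ the minimum-Frobenius-norm solution), to which the argument below transfers verbatim. The rows $X_{i,:}^\top$ are i.i.d.\ $N(0,\tilde\Sigma)$ with $\tilde\Sigma\succ 0$ by Assumption~\ref{asp:random}(A2); since this law is absolutely continuous, the first $n+m$ rows of $X$ are almost surely linearly independent, so $X$ has full column rank and $X^\top X$ is invertible. On this almost-sure event, \eqref{estimator}--\eqref{errorls} give $\Theta_{\mathrm{ls}}=\Theta^*+E$ with $E=(X^\top X)^{-1}X^\top W$.

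The heart of the argument is the independence of $X$ and $W$. By the construction of the truncated data matrices in \eqref{matrices}, $X$ is a function of $\{u^{(i)}[\tau],w^{(i)}[\tau]\}_{\tau\le T-2}$ and $\{u^{(i)}[T-1]\}$ only, whereas $W$ collects just $\{w^{(i)}[T-1]\}$; hence $X$ and $W$ are independent, which is precisely why the simplification before \eqref{matrices} was introduced. Conditioning on $X$ therefore leaves the law of $W$ unchanged, so its $c$-th column $W_{:,c}$ is $N(0,(\Sigma_w)_{cc}I_d)$. Setting $v:=X(X^\top X)^{-1}e_r$, one computes $E_{rc}=e_r^\top(X^\top X)^{-1}X^\top W e_c=v^\top W_{:,c}$, and on the full-rank event $v\neq 0$ because $(X^\top X)^{-1}e_r\neq 0$ and $X$ is injective. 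Hence, conditionally on $X$, the scalar $E_{rc}$ has the \emph{nondegenerate} Gaussian law $N\big(0,(\Sigma_w)_{cc}\|v\|_2^2\big)$, using $\Sigma_w\succ 0$, so $\Pr(E_{rc}=0\mid X)=0$. Taking expectation over $X$ and recalling $[\Theta_{\mathrm{ls}}]_{rc}=\Theta^*_{rc}+E_{rc}=E_{rc}$ gives $\Pr\big([\Theta_{\mathrm{ls}}]_{rc}=0\big)=0$, as required. A union bound over all $n(n+m)$ entries upgrades this to the stronger claim that $\Theta_{\mathrm{ls}}$ is fully dense with probability one.

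I do not expect a genuine obstacle: the whole proof collapses to the elementary observation that a nonzero linear functional of a nondegenerate Gaussian vector is almost surely nonzero. The points that need a little care are (i) the almost-sure invertibility of $X^\top X$, which rests on $\tilde\Sigma\succ 0$ from Assumption~\ref{asp:random}(A2) together with absolute continuity of the Gaussian law; (ii) the independence $X\perp W$, inherited from the truncation in \eqref{matrices}; and (iii) the nondegeneracy of the process noise ($\Sigma_w\succ 0$), without which $E_{rc}\mid X$ could be a point mass at $0$ and the argument would fail on that coordinate. The only loose end is the essentially cosmetic regime $d<n+m$, dispatched by recalling that $\Theta_{\mathrm{ls}}$ is not uniquely defined there.
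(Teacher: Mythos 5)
Your proposal is correct and follows essentially the same route as the paper: both reduce the claim to showing that $E_{rc}=R_{:,r}^\top W_{:,c}$ is a nonzero linear functional of an independent nondegenerate Gaussian vector, hence almost surely nonzero, with the nonvanishing of the coefficient vector secured via the identity $R^\top X=(X^\top X)^{-1}X^\top X=I$ (in your version, injectivity of $X$ on the full-rank event). Your added care about the almost-sure invertibility of $X^\top X$ and the degenerate regime $d<n+m$ is a reasonable tightening of details the paper leaves implicit, but the argument is the same.
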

\begin{proof}
	Define $R = ((X^\top X)^{-1}X^\top)^\top$, and note that $R$ and $W$ are independent random variables due to the construction of $X$. Now, suppose that $\Theta_{ij}^* = 0$. We show that, with probability zero, $E_{ij} = |(\Theta_{\text{ls}})_{ij}-\Theta_{ij}^*| = 0$ holds. Note that $E_{ij} = R_{:, i}^\top W_{:,j}$. If $R_{:, i}\not= 0$, then $E_{ij}$ is a linear combination (with at least one nonzero coefficient) of identically distributed normal random variables with mean zero and variance $(\Sigma_w)_{jj}$. Since $R_{:,i}$ and $W_{:,j}$ are independent, we have $E_{ij} = 0$ with probability zero. Now, assume that $R_{:, i} = 0$. This means that the $i^{\text{th}}$ row of $R^\top$ is a zero vector. This, in turn, implies that the $i^{\text{th}}$ row of $R^\top X$ is zero. However, $R^\top X = (X^\top X)^{-1}X^\top X = I$, which is a contradiction. This completes the proof. 
\end{proof}

Define $h(n,m)=\sqrt{{(n+m)\log(1/\delta)}/{d}}$ and recall that $\|\Theta_{\mathrm{ls}}-\Theta^*\|_2 = O(h(n,m))$. In the next corollary, we show that, under additional sparsity conditions, the operator norm of the estimation error for $\hat\Theta$ becomes arbitrarily smaller than $h(n,m)$ as the system dimension grows. 

\begin{corollary}\label{cor4}
	Assume that the number of nonzero elements at different rows and columns of $\Theta^*$ is upper bounded by $k_{\max}$. Furthermore, suppose that $\lambda_d$ satisfies~\eqref{lambda2} and
	\begin{equation}\label{d4}
		d = \Omega\left(\kappa(\tilde{\Sigma})^2k_{\max}^2\left(D\log(\bar{n}+\bar{m})+D^2\log(1/\delta)\right)\right)
	\end{equation}
	Then, we have
	\begin{equation}\label{error4}
		\|\hat{\Theta}\!-\!\Theta^*\|_2 \!=\! O\Bigg(\underbrace{\kappa(\tilde{\Sigma})k_{\max}\sqrt{\frac{D\log(\bar{n}\!+\!\bar{m})\!+\!D^2\log(1/\delta)}{d}}}_{v(n,m)}\Bigg)
	\end{equation}
	with probability of at least $1-\delta$. Furthermore, we have
	\begin{equation}\label{hv}
		\frac{v(n,m)}{h(n,m)}\rightarrow 0\quad \textit{as}\quad (n,m)\rightarrow\infty
	\end{equation}
	provided that
	\begin{equation}\label{kD}
		k_{\max}D = o\left(\sqrt{\frac{n+m}{\log(n+m)}}\right)
	\end{equation}
\end{corollary}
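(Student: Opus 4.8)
The plan is to derive both \eqref{error4} and \eqref{hv} directly from Theorem~\ref{thm2}, using the strengthened sample requirement \eqref{d4} to simplify the element-wise error bound and the exact block-recovery guarantee to pass from the $\ell_\infty$-error to the operator norm. First I would note that \eqref{d4} is a strictly stronger lower bound on $d$ than \eqref{d22}, so Theorem~\ref{thm2} applies; moreover, the extra factor $k_{\max}$ in \eqref{d4} is precisely what is needed to turn the inner square root in \eqref{error2} into a constant. Indeed, since $D=p_{\max}n_{\max}\ge n_{\max}$ and $D,k_{\max}\ge 1$, the bound \eqref{d4} gives $k_{\max}^2 n_{\max}=O(d)$, $k_{\max}\log(\bar n+\bar m)=O(d)$, and $k_{\max}\log(1/\delta)=O(d)$, hence $\sqrt{k_{\max}(k_{\max}n_{\max}+\log(\bar n+\bar m)+\log(1/\delta))/d}=O(1)$, so that \eqref{error2} collapses to
\[
g=\|\hat\Theta-\Theta^*\|_\infty = O\!\left(\kappa(\tilde\Sigma)\sqrt{\tfrac{D\log(\bar n+\bar m)+D^2\log(1/\delta)}{d}}\right).
\]

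Next I would convert this element-wise bound into an operator-norm bound by exploiting sparsity. By part~1 of Theorem~\ref{thm2}, $\hat\Theta$ has the same nonzero blocks as $\Theta^*$, so the support of $\hat\Theta-\Theta^*$ is contained in that of $\Theta^*$, which by the hypothesis of the corollary has at most $k_{\max}$ nonzero entries in every row and in every column. For any matrix $M$ one has $\|M\|_2\le\sqrt{\big(\max_i\sum_j|M_{ij}|\big)\big(\max_j\sum_i|M_{ij}|\big)}$, and when $M$ is supported on a set with at most $k_{\max}$ entries per row and per column while $\max_{ij}|M_{ij}|\le g$, both factors are at most $k_{\max}g$. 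Taking $M=\hat\Theta-\Theta^*$ therefore gives $\|\hat\Theta-\Theta^*\|_2\le k_{\max}g$, which together with the previous display is exactly \eqref{error4}.

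Finally, for \eqref{hv} I would substitute the definitions of $v(n,m)$ and $h(n,m)$ to obtain
\[
\frac{v(n,m)}{h(n,m)} = \kappa(\tilde\Sigma)\,k_{\max}\sqrt{\frac{D\log(\bar n+\bar m)+D^2\log(1/\delta)}{(n+m)\log(1/\delta)}},
\]
and then use $\kappa(\tilde\Sigma)=O(1)$ (Assumption~A2), $\log(\bar n+\bar m)\le\log(n+m)$, the fact that $\log(1/\delta)$ is a fixed positive constant, and $D\ge1$ (so $D\log(n+m)\le D^2\log(n+m)$ once $n+m$ is large) to bound the right-hand side by $O\!\left(k_{\max}D\sqrt{\log(n+m)/(n+m)}\right)$. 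The hypothesis \eqref{kD} is exactly the statement $k_{\max}D=o\!\left(\sqrt{(n+m)/\log(n+m)}\right)$, so this is $o(1)$, establishing \eqref{hv}.

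I do not expect a serious obstacle, since the corollary is essentially a repackaging of Theorem~\ref{thm2}; the two points that require care are (i) verifying that the extra $k_{\max}$ in \eqref{d4} genuinely absorbs the nonconstant factor of \eqref{error2}, a short big-$O$ computation hinging on $D\ge n_{\max}$, and (ii) invoking exact block-support recovery so that $\hat\Theta-\Theta^*$ inherits $k_{\max}$-bounded row \emph{and} column sparsity --- this is where the row-sparsity assumption added in the corollary (beyond the column sparsity already encoded in $k_{\max}$) is used. The remainder is bookkeeping with $O(\cdot)$ and $o(\cdot)$.
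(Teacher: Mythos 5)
Your proposal is correct and follows essentially the same route as the paper: use \eqref{d4} to make the inner square root in \eqref{error2} an $O(1)$ factor, pass from the element-wise bound to the operator norm via the row/column sparsity of the error, and then compare $v$ with $h$ under \eqref{kD}. Your operator-norm step, $\|M\|_2\le\sqrt{(\max_i\sum_j|M_{ij}|)(\max_j\sum_i|M_{ij}|)}\le k_{\max}\|M\|_{\infty}$ combined with the exact-support-recovery guarantee, is just a more carefully justified version of the paper's one-line Hölder inequality \eqref{holder}.
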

\begin{proof}
	One can use Holder's inequality to write
	\begin{equation}\label{holder}
		\|\hat{\Theta}\!-\!\Theta^*\|_2\leq \sqrt{\|\hat{\Theta}\!-\!\Theta^*\|_1 \|\hat{\Theta}\!-\!\Theta^*\|_\infty}\leq k_{\max}\|\hat{\Theta}\!-\!\Theta^*\|_{\infty}
	\end{equation}
	On the other hand,~\eqref{o1} can be verified under~\eqref{d4}. Combined with~\eqref{holder} and Theorem~\ref{thm2}, this certifies the validity of~\eqref{error4}. It remains to prove the correctness of~\eqref{hv}. Note that under~\eqref{kD}, we have
	\begin{subequations}
	\begin{align}
		& k_{\max}^2D\log(\bar{n}+\bar{m}) = o\left(n+m\right)\\
		& k_{\max}^2D^2 = o\left(n+m\right)
	\end{align}
	\end{subequations}
	Combined with the definitions of $h(n,m)$ and $v(n,m)$, this completes the proof.
\end{proof}

Corollary~\ref{cor4} describes the settings under which our proposed method significantly outperforms the least-squares estimator in terms of the operator norm of the errors. This improvement is more evident for those systems where the states and inputs have sparse interactions and the block sizes in $A$ and $B$ are smaller than the system dimensions. A class of such systems is multi-agent networks where the agents interact only locally and their total number dominates the dimension of each individual agent.

\section{Proofs}

In this section, the proof of Theorem~\ref{thm2} will be presented. Furthermore, it will be shown that Theorem~\ref{thm1} is a special case of this theorem. A number of preliminary definitions and lemmas are required to present the proof of Theorem~\ref{thm2}. 

\begin{definition}[sub-Gaussian random variable]
	{\rm A zero-mean random variable $x$ is \textit{sub-Gaussian} with parameter $\sigma^2$ if there exists a constant number $c<\infty$ such that
		\begin{equation}
		\mathbb{P}(|x|>t)\leq c\cdot\exp\left(-\frac{t^2}{2\sigma^2}\right)
		\end{equation}}
\end{definition}

\begin{lemma}\label{l1}
	Given a set of zero-mean sub-Gaussian variables $x_i$ with parameters $\sigma_i$ for $i = 1,2,...,m$, the inequality
	\begin{equation}
	\mathbb{P}\left(\max_{i}|x_i|>t\right)\leq c\cdot\exp\left(-\frac{t^2}{2\max_i \sigma_i^2}+\log m\right)
	\end{equation}
	holds for some constant $c<\infty$.
\end{lemma}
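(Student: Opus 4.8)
The plan is a straightforward union bound combined with monotonicity of the exponential. First I would write
\begin{equation}
\mathbb{P}\left(\max_{1\leq i\leq m}|x_i|>t\right)\leq \sum_{i=1}^{m}\mathbb{P}(|x_i|>t),
\end{equation}
which is valid since the event $\{\max_i |x_i|>t\}$ is the union of the events $\{|x_i|>t\}$. Next, by the sub-Gaussian assumption, for each $i$ there is a constant $c_i<\infty$ with $\mathbb{P}(|x_i|>t)\leq c_i\exp(-t^2/(2\sigma_i^2))$. Setting $c=\max_{1\leq i\leq m}c_i<\infty$ (finite because there are finitely many indices), each summand is bounded by $c\exp(-t^2/(2\sigma_i^2))$.

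The second step is to replace $\sigma_i$ by $\max_j\sigma_j$. Since $t^2\geq 0$ and $\sigma_i^2\leq \max_j\sigma_j^2$, we have $-t^2/(2\sigma_i^2)\leq -t^2/(2\max_j\sigma_j^2)$, and because $x\mapsto e^x$ is increasing, $\exp(-t^2/(2\sigma_i^2))\leq \exp(-t^2/(2\max_j\sigma_j^2))$ for every $i$. Summing the $m$ identical upper bounds gives
\begin{equation}
\mathbb{P}\left(\max_{1\leq i\leq m}|x_i|>t\right)\leq m\cdot c\cdot\exp\left(-\frac{t^2}{2\max_j\sigma_j^2}\right)=c\cdot\exp\left(-\frac{t^2}{2\max_j\sigma_j^2}+\log m\right),
\end{equation}
where in the last equality I used $m=\exp(\log m)$. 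This is exactly the claimed inequality.

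There is no genuine obstacle here: the argument is a textbook union bound, and the only points requiring a word of care are (i) absorbing the per-variable constants $c_i$ into a single $c$, which is legitimate precisely because the maximum is over a \emph{finite} collection, and (ii) noting that the bound on $\mathbb{P}(|x_i|>t)$ is loosened, not tightened, when $\sigma_i^2$ is replaced by the larger quantity $\max_j\sigma_j^2$. No independence or identical-distribution assumption is needed for this step.
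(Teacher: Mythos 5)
Your proof is correct and is exactly the standard union-bound argument the paper implicitly relies on (the lemma is stated without proof there): bound the maximum by the sum of tail probabilities, loosen each exponent by replacing $\sigma_i^2$ with $\max_j\sigma_j^2$, and absorb the factor $m$ into the exponent as $\log m$. The only point worth noting is that absorbing the per-variable constants $c_i$ into a single $c$ is uniform in practice because the sub-Gaussian constant is typically universal (e.g., $c=2$ for Gaussian tails), so $c$ does not secretly grow with $m$; you handled this adequately for a finite collection.
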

Define $I_d$ as the $d\times d$ identity matrix. The next two lemmas are borrowed from~\cite{Martin11} and~\cite{wainwright2009sharp}, respectively.
\begin{lemma}\label{l2}
	Given a set of random vectors $X_i\sim N(0,\sigma_i^2I_d)$ for $i = 1,2,...,m$ and $d> 2\log m$, the inequality
	\begin{equation}
	\mathbb{P}\left(\max_i\|X_i\|_2^2\geq 4\sigma^2 d\right)\leq \exp\left(-\frac{d}{2}+\log m\right)
	\end{equation}
	holds, where $\sigma = \max_i \sigma_i$.
\end{lemma}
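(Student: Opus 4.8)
\medskip
\noindent\emph{Proof strategy.} The plan is to reduce the claim to a one-dimensional deviation bound for a chi-squared random variable and then take a union bound over the index $i$. Since $X_i\sim N(0,\sigma_i^2 I_d)$, the normalized quantity $Z_i:=\|X_i\|_2^2/\sigma_i^2$ has a $\chi^2_d$ distribution, and because $\sigma_i\leq\sigma$ for every $i$, the event $\{\|X_i\|_2^2\geq 4\sigma^2 d\}$ is contained in $\{Z_i\geq 4d\}$. Hence it suffices to show $\mathbb{P}(Z_i\geq 4d)\leq e^{-d/2}$ for each fixed $i$ and then sum over $i$.

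For the chi-squared tail I would use the exponential-moment (Chernoff) method: for any $\lambda\in(0,1/2)$, Markov's inequality applied to $e^{\lambda Z_i}$ together with the moment generating function $\mathbb{E}[e^{\lambda Z_i}]=(1-2\lambda)^{-d/2}$ gives
\[
\mathbb{P}(Z_i\geq 4d)\leq e^{-4\lambda d}(1-2\lambda)^{-d/2}.
\]
Choosing $\lambda=1/4$ yields $\mathbb{P}(Z_i\geq 4d)\leq (2e^{-2})^{d/2}=e^{-(1-\frac12\log 2)\,d}\leq e^{-d/2}$, where the final inequality uses $1-\tfrac12\log 2>\tfrac12$.

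It then remains to take a union bound over $i=1,\dots,m$ (which requires no independence among the $X_i$):
\[
\mathbb{P}\!\left(\max_i\|X_i\|_2^2\geq 4\sigma^2 d\right)\leq\sum_{i=1}^m\mathbb{P}\!\left(\|X_i\|_2^2\geq 4\sigma^2 d\right)\leq m\,e^{-d/2}=\exp\!\left(-\tfrac{d}{2}+\log m\right),
\]
which is exactly the stated bound. Note that the hypothesis $d>2\log m$ enters only to make this right-hand side strictly smaller than $1$, i.e., to render the estimate non-vacuous; it is not otherwise used in the argument.

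I do not expect any genuine obstacle here, as this is a textbook-style chi-squared concentration estimate; the only point requiring a moment of care is checking that the constant $4$ in $4\sigma^2 d$ is large enough to produce the target exponent $d/2$, i.e., that the optimized Chernoff exponent $1-\tfrac12\log 2\approx 0.65$ indeed exceeds $\tfrac12$. Should a sharper constant be desired, one could instead invoke the Laurent--Massart bound $\mathbb{P}(\chi^2_d\geq d+2\sqrt{dx}+2x)\leq e^{-x}$, but the crude Chernoff estimate already delivers the form claimed here.
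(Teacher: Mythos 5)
Your proof is correct: the reduction to a $\chi^2_d$ tail via $Z_i=\|X_i\|_2^2/\sigma_i^2$ and $\sigma_i\le\sigma$, the Chernoff bound with $\lambda=1/4$ giving $(2/e^2)^{d/2}\le e^{-d/2}$ (since $2<e$), and the union bound over $i$ together yield exactly the stated inequality. The paper does not prove this lemma itself but cites it from \cite{Martin11}, where the argument is essentially this same standard chi-squared concentration plus union bound, so there is nothing to reconcile; your observation that $d>2\log m$ serves only to make the bound non-trivial is also accurate.
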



\begin{lemma}\label{l3}
	Consider a matrix $X\in\mathbb{R}^{m\times n}$ whose rows are drawn from $N(0, \Sigma)$. Assuming that $n\leq m$, we have
	\begin{equation}
	\mathbb{P}\left(\left\|\left(\frac{1}{d}X^\top X\right)^{-1}-\Sigma^{-1}\right\|_2\geq \frac{8}{\Lambda_{\min}}\sqrt{\frac{t}{m}}\right)\leq 2\exp\left(-\frac{t}{2}\right)
	\end{equation}
	for every $n\leq t\leq m$.
\end{lemma}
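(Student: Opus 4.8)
The plan is to reduce this statement to the standard concentration bound for the extreme singular values of a Gaussian random matrix, following the argument of \cite{wainwright2009sharp} (here $d$ plays the role of the number $m$ of rows of $X$, and $n\le m$). First I would whiten the design: since the rows of $X$ are i.i.d.\ $N(0,\Sigma)$, write $X = Z\,\Sigma^{1/2}$ with $Z\in\mathbb{R}^{m\times n}$ having i.i.d.\ $N(0,1)$ entries, so that $\tfrac1d X^\top X = \Sigma^{1/2}\big(\tfrac1d Z^\top Z\big)\Sigma^{1/2}$. The resolvent identity
\begin{equation}
\Big(\tfrac1d X^\top X\Big)^{-1} - \Sigma^{-1} = \Sigma^{-1/2}\Big[\big(\tfrac1d Z^\top Z\big)^{-1} - I_n\Big]\Sigma^{-1/2}
\end{equation}
together with $\|\Sigma^{-1/2}\|_2^2 = 1/\lambda_{\min}(\Sigma)\le 1/\Lambda_{\min}$ then reduces the claim to a bound on $\big\|(\tfrac1d Z^\top Z)^{-1} - I_n\big\|_2$ that no longer involves $\Sigma$; this is exactly where the factor $1/\Lambda_{\min}$ in the statement comes from.

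Next I would pass to the spectrum. If $s_1\ge\cdots\ge s_n>0$ denote the singular values of $Z/\sqrt d$, then the eigenvalues of $(\tfrac1d Z^\top Z)^{-1}-I_n$ are $1/s_i^2-1$, so $\big\|(\tfrac1d Z^\top Z)^{-1}-I_n\big\|_2 = \max\{\,1/s_n^2-1,\ 1-1/s_1^2\,\}$, and it suffices to pin down the two extreme values $s_1$ and $s_n$. For this I would invoke the Davidson--Szarek estimates $\mathbb{E}[\sigma_{\max}(Z)]\le\sqrt m+\sqrt n$ and $\mathbb{E}[\sigma_{\min}(Z)]\ge\sqrt m-\sqrt n$, combined with the fact that $Z\mapsto\sigma_{\max}(Z)$ and $Z\mapsto\sigma_{\min}(Z)$ are $1$-Lipschitz with respect to the Frobenius norm, so that Gaussian concentration yields, for any $u>0$, that the deviations $\sigma_{\max}(Z)>\sqrt m+\sqrt n+u$ and $\sigma_{\min}(Z)<\sqrt m-\sqrt n-u$ each occur with probability at most $e^{-u^2/2}$.

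Finally, choosing $u=\sqrt t$ and taking a union bound, with probability at least $1-2e^{-t/2}$ both $|s_1-1|$ and $|s_n-1|$ are at most $\sqrt{n/m}+\sqrt{t/m}\le 2\sqrt{t/m}$, where the last step uses the hypothesis $n\le t$ (this is precisely why the final bound depends on $t$ rather than on $n$), while $t\le m$ keeps the deviation $O(1)$. On this event, elementary manipulations of $1/s^2-1$ near $s=1$ give $\max\{1/s_n^2-1,\ 1-1/s_1^2\}\le 8\sqrt{t/m}$, which combined with the first reduction gives the claimed inequality. The one genuinely non-elementary ingredient — and the step I would expect to cite rather than reprove — is the sharp deviation bound for the extreme singular values of a Gaussian matrix (Gordon's comparison inequality for the means, Gaussian Lipschitz concentration for the fluctuations); the remaining work is bookkeeping, the only real care being to keep the constants tight enough (in particular, not to be wasteful when dominating $\sqrt{n/m}$ by $\sqrt{t/m}$ and when inverting $s^2$) so as to arrive exactly at the stated factor $8$.
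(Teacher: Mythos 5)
First, note that the paper does not actually prove this lemma: it is imported verbatim from the cited reference (the text says the lemma is ``borrowed from'' \cite{wainwright2009sharp}), so there is no in-paper argument to compare against. Your reconstruction is exactly the standard argument behind that citation: whiten via $X = Z\Sigma^{1/2}$ to peel off the factor $1/\Lambda_{\min}$, reduce to the extreme singular values of the standard Gaussian matrix $Z$, and invoke the Davidson--Szarek/Gordon bounds $\sqrt{m}\pm(\sqrt{n}+u)$ with Gaussian Lipschitz concentration, $u=\sqrt{t}$, and a union bound. All of that is correct, and the use of $n\le t$ to absorb $\sqrt{n/m}$ into $\sqrt{t/m}$ is the right reading of why the bound is stated in terms of $t$.

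The one genuine gap is in your final sentence, and it is not mere bookkeeping. On the good event you only know $s_n \ge 1-2\sqrt{t/m}$, and the quantity you must control is $1/s_n^2-1 = \varepsilon(2-\varepsilon)/(1-\varepsilon)^2$ with $\varepsilon = 2\sqrt{t/m}$. This is at most $8\sqrt{t/m}$ only when $\varepsilon$ is small (roughly $\varepsilon \lesssim 0.36$, i.e.\ $t \lesssim 0.03\,m$); as $\varepsilon \to 1$ the lower bound on $s_n$ becomes vacuous and $1/s_n^2$ is uncontrolled. Your remark that ``$t\le m$ keeps the deviation $O(1)$'' points in the wrong direction: a deviation of order one in $s_n$ is precisely the regime where the inversion step fails. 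Indeed, for $t=n=m$ the statement itself is problematic (a square Gaussian matrix has smallest singular value of order $m^{-1/2}$ with high probability, so $\|(\tfrac1m X^\top X)^{-1}\|_2$ is of order $m$, not $O(1)$), so this is partly a defect of the borrowed statement rather than of your strategy; but a complete proof must either restrict to $t = O(m)$ with a small enough implied constant --- which is the regime in which the paper actually applies the lemma, with $t = k_jn_j + c_3\log(\bar n+\bar m) \ll d$ --- or adjust the constant $8$ and the admissible range of $t$. You should flag this restriction explicitly rather than asserting that elementary manipulations deliver the constant $8$ uniformly over $n\le t\le m$.
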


The basic inequalities given below will be used frequently in our subsequent arguments.
\begin{lemma}\label{l4}
	The following statements hold true:
	\begin{itemize}
		\item Given a number of (not necessarily independent) events $\mathcal{T}_i$ for $i = 1,2,...,n$, the following inequality is satisfied:
		\begin{equation}
		\sum_{i = 1}^{n} \mathbb{P}(\mathcal{T}_i)-(n-1)\leq \mathbb{P}(\mathcal{T}_1\cap \mathcal{T}_2\cap...\cap \mathcal{T}_n)
		\end{equation}
		\item Given events $\mathcal{B}$ and $\mathcal{C}$ together with the complement of $\mathcal{C}$, denoted as $\mathcal{C}^c$, the following inequality holds:
		\begin{equation}
		\mathbb{P}(\mathcal{B})\leq \mathbb{P}(\mathcal{B}|\mathcal{C})+\mathbb{P}(\mathcal{C}^c)
		\end{equation}
	\end{itemize}
\end{lemma}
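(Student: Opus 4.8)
The plan is to obtain both inequalities directly from the additivity and monotonicity axioms of a probability measure, together with the union bound; no further machinery is required.

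For the first inequality, the first step is to pass to complements via De Morgan's law, writing $\mathcal{T}_1\cap\mathcal{T}_2\cap\cdots\cap\mathcal{T}_n = \left(\mathcal{T}_1^c\cup\mathcal{T}_2^c\cup\cdots\cup\mathcal{T}_n^c\right)^c$, so that $\mathbb{P}(\mathcal{T}_1\cap\cdots\cap\mathcal{T}_n) = 1 - \mathbb{P}(\mathcal{T}_1^c\cup\cdots\cup\mathcal{T}_n^c)$. The second step is to apply the union bound to the right-hand side: $\mathbb{P}(\mathcal{T}_1^c\cup\cdots\cup\mathcal{T}_n^c)\leq\sum_{i=1}^n\mathbb{P}(\mathcal{T}_i^c)=\sum_{i=1}^n\left(1-\mathbb{P}(\mathcal{T}_i)\right)=n-\sum_{i=1}^n\mathbb{P}(\mathcal{T}_i)$. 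Substituting back gives $\mathbb{P}(\mathcal{T}_1\cap\cdots\cap\mathcal{T}_n)\geq 1-\left(n-\sum_{i=1}^n\mathbb{P}(\mathcal{T}_i)\right)=\sum_{i=1}^n\mathbb{P}(\mathcal{T}_i)-(n-1)$, which is exactly the claim. Since the union bound is insensitive to dependence among the events, no independence of the $\mathcal{T}_i$ is needed, consistent with the statement; an equivalent route is induction on $n$ built on the two-event inequality $\mathbb{P}(\mathcal{A}\cap\mathcal{B})\geq\mathbb{P}(\mathcal{A})+\mathbb{P}(\mathcal{B})-1$.

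For the second inequality, the plan is to split $\mathcal{B}$ along $\mathcal{C}$: $\mathbb{P}(\mathcal{B})=\mathbb{P}(\mathcal{B}\cap\mathcal{C})+\mathbb{P}(\mathcal{B}\cap\mathcal{C}^c)$. For the first term, when $\mathbb{P}(\mathcal{C})>0$ one has $\mathbb{P}(\mathcal{B}\cap\mathcal{C})=\mathbb{P}(\mathcal{B}|\mathcal{C})\,\mathbb{P}(\mathcal{C})\leq\mathbb{P}(\mathcal{B}|\mathcal{C})$ because $\mathbb{P}(\mathcal{C})\leq 1$; for the second term, monotonicity gives $\mathbb{P}(\mathcal{B}\cap\mathcal{C}^c)\leq\mathbb{P}(\mathcal{C}^c)$. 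Adding these bounds yields $\mathbb{P}(\mathcal{B})\leq\mathbb{P}(\mathcal{B}|\mathcal{C})+\mathbb{P}(\mathcal{C}^c)$. There is no genuine obstacle in either part; the only point deserving a remark is the degenerate case $\mathbb{P}(\mathcal{C})=0$, in which $\mathbb{P}(\mathcal{B}|\mathcal{C})$ is undefined — but then $\mathbb{P}(\mathcal{C}^c)=1$ and $\mathbb{P}(\mathcal{B}\cap\mathcal{C})=0$, so the decomposition already gives $\mathbb{P}(\mathcal{B})=\mathbb{P}(\mathcal{B}\cap\mathcal{C}^c)\leq 1=\mathbb{P}(\mathcal{C}^c)$ and the stated bound holds trivially. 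This edge case does not arise in the subsequent uses of Lemma~\ref{l4}, where $\mathcal{C}$ is always a high-probability event.
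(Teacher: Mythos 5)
Your proof is correct and complete: the De Morgan/union-bound argument for the first inequality and the total-probability decomposition $\mathbb{P}(\mathcal{B})=\mathbb{P}(\mathcal{B}\mid\mathcal{C})\mathbb{P}(\mathcal{C})+\mathbb{P}(\mathcal{B}\cap\mathcal{C}^c)\leq\mathbb{P}(\mathcal{B}\mid\mathcal{C})+\mathbb{P}(\mathcal{C}^c)$ for the second are exactly the standard derivations of these Bonferroni-type bounds. The paper states this lemma without proof, treating it as elementary, so there is nothing to compare against; your handling of the degenerate case $\mathbb{P}(\mathcal{C})=0$ is a sensible extra precaution.
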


The next lemma characterizes the first-order optimality conditions for~\eqref{lasso2}.
\begin{lemma}[KKT conditions]\label{kkt}
	$\hat{\Theta}$ is an optimal solution for~\eqref{lasso2} if and only if it satisfies
	\begin{equation}\label{eqkkt}
	\frac{1}{d}X^\top X(\hat{\Theta}-\Theta^*)-\frac{1}{d}X^\top W+\lambda_d\hat{S} = 0
	\end{equation}
	for some $\hat{S}\in\mathbb{R}^{(n+m)\times n}\in \partial\|\hat{\Theta}\|_{\bl}$, where $\partial\|\hat{\Theta}\|_{\bl}$ denotes the sub-differential of $\|\cdot\|_{\bl}$ at $\hat{\Theta}$.
\end{lemma}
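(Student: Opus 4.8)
The plan is to invoke the standard first-order optimality characterization for convex programs. First I would observe that the objective in~\eqref{lasso2}, namely $\Phi(\Theta) := \frac{1}{2d}\|Y-X\Theta\|_F^2 + \lambda_d\|\Theta\|_\bl$, is convex: the first term is the composition of the (convex) squared Frobenius norm with an affine map, and the second term is $\lambda_d$ times a sum of block $\ell_\infty$-norms, hence convex. Consequently, $\hat{\Theta}$ is a global minimizer of~\eqref{lasso2} if and only if $0\in\partial\Phi(\hat{\Theta})$, and this equivalence is exactly the ``if and only if'' asserted by the lemma, so it suffices to rewrite $0\in\partial\Phi(\hat{\Theta})$ in the stated form.

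Next I would split the subdifferential. Since the quadratic term $f(\Theta) := \frac{1}{2d}\|Y-X\Theta\|_F^2$ is real-valued and everywhere differentiable, the sum rule for subdifferentials applies and gives $\partial\Phi(\hat{\Theta}) = \nabla f(\hat{\Theta}) + \lambda_d\,\partial\|\hat{\Theta}\|_\bl$. Hence the optimality condition becomes: there exists $\hat{S}\in\partial\|\hat{\Theta}\|_\bl$ with $\nabla f(\hat{\Theta}) + \lambda_d\hat{S} = 0$. It is also worth recording here that $\|\cdot\|_\bl$ is separable across blocks, so $\hat{S}\in\partial\|\hat{\Theta}\|_\bl$ is equivalent to $\hat{S}^{(i,j)}\in\partial\|\hat{\Theta}^{(i,j)}\|_\infty$ for every block index $(i,j)$; this blockwise description of $\hat{S}$ will be used in the subsequent proofs that invoke this lemma.

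Then I would compute the gradient explicitly. Writing $f(\Theta) = \frac{1}{2d}\mathrm{tr}\!\big((Y-X\Theta)^\top(Y-X\Theta)\big)$ and differentiating, one obtains $\nabla f(\Theta) = \frac{1}{d}\big(X^\top X\Theta - X^\top Y\big)$, which can be checked entrywise if a fully elementary derivation is desired. Substituting $Y = X\Theta^* + W$ from~\eqref{compact} gives $\nabla f(\hat{\Theta}) = \frac{1}{d}X^\top X(\hat{\Theta}-\Theta^*) - \frac{1}{d}X^\top W$, and plugging this into $\nabla f(\hat{\Theta}) + \lambda_d\hat{S} = 0$ reproduces exactly~\eqref{eqkkt}. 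The converse direction requires no separate argument: because $\Phi$ is convex, $0\in\partial\Phi(\hat{\Theta})$ is simultaneously necessary and sufficient for global optimality, so any $\hat{\Theta}$ satisfying~\eqref{eqkkt} for some admissible $\hat{S}$ is an optimal solution of~\eqref{lasso2}.

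There is no substantive obstacle in this argument; the only points that merit a line of justification are (i) the applicability of the subdifferential sum rule, which is guaranteed by $f$ being finite and continuous everywhere, and (ii) the matrix-calculus identity for $\nabla f$. If one wishes to avoid even citing the sum rule, the same conclusion follows directly from the definition of the subgradient: $\hat{\Theta}$ minimizes $\Phi$ iff $\Phi(\Theta)\ge\Phi(\hat{\Theta}) + \langle \nabla f(\hat{\Theta}), \Theta-\hat{\Theta}\rangle + \lambda_d\langle \hat{S},\Theta-\hat{\Theta}\rangle$ for all $\Theta$, which by convexity of $f$ and the subgradient inequality for $\|\cdot\|_\bl$ is equivalent to $\nabla f(\hat{\Theta}) + \lambda_d\hat{S}=0$.
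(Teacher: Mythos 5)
Your proof is correct and is precisely the standard argument the paper has in mind: the paper omits the proof of this lemma as ``straightforward,'' and your derivation (convexity, the subdifferential sum rule with the differentiable quadratic part, the gradient computation, and the substitution $Y = X\Theta^* + W$) fills in exactly that routine reasoning. No discrepancy with the paper's approach.
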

\begin{proof}
	The proof is straightforward and omitted for brevity.~\end{proof}
$\hat{S}_{\mA}$ and $\hat{S}_{\mA^c}$ are obtained by removing those blocks of $\hat{S}$ with indices not belonging to $\mA$ and $\mA^c$, respectively. 
The equation
\eqref{compact} can be reformulated as the set of linear equations
\begin{equation}\label{lasso_small}
Y^{(:,j)} = X\Theta^{(:,j)}+W^{(:,j)}\quad \forall j \in\{1,...,n\}
\end{equation}
where $Y^{(:,j)}$, $\Theta^{(:,j)}$, and $W^{(:,j)}$ are the $j^{\text{th}}$ block column of $Y$, $\Theta$, and $W$, respectively. Based on this definition, consider the following set of block-regularized subproblems:
\begin{equation}\label{lasso3}
{\hat\Theta}^{(:,j)} = \arg\min\frac{1}{2d}\|Y^{(:,j)}-X\Theta^{(:,j)}\|_2^2 + \lambda_d\|\Theta^{(:,j)}\|_\bl
\end{equation}
Define $D_j = p_{\max}n_j$.
The next two lemmas are at the core of our proof for Theorem~\ref{thm2}.

\begin{lemma}[No false positives]\label{sparsity}
	Given arbitrary constants $c_1,c_2>1$, suppose that $\lambda_d$ and $d$ are chosen such that
	\begin{subequations}
	\begin{align}
	& \lambda_d\geq \sqrt{\frac{32c_1 \sigma_w^2\sigma_{\max}^2}{\gamma^2}\cdot\frac{(D_j)^2+D_j\log (\bar{n}+\bar{m})}{d}}\label{lambda3}\\
	& d\geq \frac{72c_2\sigma_{\max}^2}{\gamma^2\Lambda_{\min}}\cdot k_j (D_j^2+D_j\log(\bar{n}+\bar{m}))\label{d3}
	\end{align}
	\end{subequations}
	Then, with probability of at least
	\begin{equation}\label{p3}
	\begin{aligned}
	1-3&\exp\big(-(c_1-1)(D_j+\log(\bar{n}+\bar{m}))\big)-4\exp\big(-(c_2-1)(D_j+\log(\bar{n}+\bar{m}))\big)
	\end{aligned}
	\end{equation}
	$\hat{\Theta}^{(:,j)}$ is unique and its nonzero blocks exclude the zero blocks of ${\Theta^*}^{(:,j)}$. In other words, $\hat{\Theta}^{(:,j)}$ does not have any false positives.
\end{lemma}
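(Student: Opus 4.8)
The plan is to follow the standard primal--dual witness (PDW) argument for $\ell_1/\ell_\infty$-regularized regression, adapted to the column-block subproblem~\eqref{lasso3}. First I would construct a candidate solution $\tilde{\Theta}^{(:,j)}$ by solving the \emph{restricted} problem in which the blocks outside $\mathcal{A}_j$ are forced to be zero; this is a strictly convex problem on $X_{\mathcal{A}_j}$, so its minimizer is unique whenever $X_{\mathcal{A}_j}^\top X_{\mathcal{A}_j}$ is invertible, which I would guarantee on a high-probability event using Lemma~\ref{l3} applied to the $|I(\mathcal{A}_j)| \le k_j n_{\max}$ columns of $X_{\mathcal{A}_j}$ (this is where the $d = \Omega(k_j D_j^2 + k_j D_j\log(\bar n+\bar m))$ lower bound~\eqref{d3} enters, ensuring $\lambda_{\min}(\tfrac1d X_{\mathcal{A}_j}^\top X_{\mathcal{A}_j}) \ge \Lambda_{\min}/2$). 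Then I would verify that the dual subgradient $\hat{S}_{\mathcal{A}_j^c}$ produced by the KKT stationarity condition~\eqref{eqkkt} is \emph{strictly feasible}, i.e.\ $\|\hat{S}^{(i,j)}\|_1 < 1$ for every zero block $i \in \mathcal{A}_j^c$; by the standard PDW lemma, strict dual feasibility certifies that $\tilde{\Theta}^{(:,j)}$ is in fact \emph{the} optimal solution of the full subproblem~\eqref{lasso3} and that it has no false positives. Uniqueness then follows from strict feasibility together with invertibility of the restricted Hessian.

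The heart of the argument is bounding the dual variable on $\mathcal{A}_j^c$. From the KKT system, one writes, for each $i\in\mathcal{A}_j^c$,
\begin{equation*}
\lambda_d \hat{S}^{(i,j)} = \tfrac1d X_{(i)}^\top W^{(:,j)} - \tfrac1d X_{(i)}^\top X_{\mathcal{A}_j}\bigl(\tfrac1d X_{\mathcal{A}_j}^\top X_{\mathcal{A}_j}\bigr)^{-1}\bigl(\tfrac1d X_{\mathcal{A}_j}^\top W^{(:,j)} - \lambda_d \hat{S}_{\mathcal{A}_j}\bigr),
\end{equation*}
and the goal is to show $\|\hat{S}^{(i,j)}\|_1 \le 1-\gamma/2 < 1$. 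The term involving $\lambda_d\hat{S}_{\mathcal{A}_j}$ is handled by the mutual incoherence property A1 (on the population covariance), plus a perturbation bound showing the \emph{sample} version $\tfrac1d X_{(i)}^\top X_{\mathcal{A}_j}(\tfrac1d X_{\mathcal{A}_j}^\top X_{\mathcal{A}_j})^{-1}$ stays within $\gamma/4$ (say) of its population analogue in the appropriate $\ell_1\!\to\!\ell_1$ operator norm — again via Lemma~\ref{l3} and a covariance concentration estimate, using~\eqref{d3}. The two remaining ``noise'' terms, $\tfrac1d X_{(i)}^\top W^{(:,j)}$ and $\tfrac1d X_{(i)}^\top X_{\mathcal{A}_j}(\tfrac1d X_{\mathcal{A}_j}^\top X_{\mathcal{A}_j})^{-1}\tfrac1d X_{\mathcal{A}_j}^\top W^{(:,j)}$, must be shown to be $\le \gamma\lambda_d/4$ in $\ell_1$-norm with high probability; conditioning on $X$ (recall $X \perp W$ by the construction in~\eqref{matrices}), each entry is Gaussian with variance controlled by $\sigma_w^2 \sigma_{\max}^2/d$ up to the $\ell_2$-norms of the relevant columns, which are bounded on a high-probability event by Lemma~\ref{l2}. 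Taking a union bound over the $\le \bar n D_j$ scalar entries in each of the $|\mathcal{A}_j^c| \le \bar n + \bar m$ blocks via Lemma~\ref{l1} produces the $D_j + \log(\bar n+\bar m)$ term in the exponent, and matching this against the $\lambda_d$ choice~\eqref{lambda3} (which was calibrated precisely so that $\lambda_d \gtrsim \sigma_{\max}\sigma_w\sqrt{(D_j^2 + D_j\log(\bar n+\bar m))/d}$) closes the bound. Collecting the failure probabilities of these three or four events and applying the union-bound inequality in Lemma~\ref{l4} gives the stated probability~\eqref{p3}.

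The main obstacle, and the place where the generalization beyond~\cite{Martin11} really bites, is the transition from scalar/row-vector blocks to blocks of \emph{arbitrary} $n_k \times m_l$ shape: the subgradient of $\|\cdot\|_\infty$ on a matrix block is more delicate than for a vector (it is supported on the rows achieving the max absolute value and is a convex combination of sign patterns there), so the ``$\|\hat{S}^{(i,j)}\|_1 \le 1$'' feasibility condition has to be stated and manipulated carefully, and the incoherence matrix $\tilde\Sigma_{(i),\mathcal{A}_j}(\tilde\Sigma_{\mathcal{A}_j,\mathcal{A}_j})^{-1}$ now acts between blocks of mismatched dimensions. Keeping track of the right norms — $\ell_\infty/\ell_\infty$ on parameter blocks, $\ell_1/\ell_1$ on the incoherence operator, dual-norm bookkeeping throughout — and ensuring every column-dimension factor is absorbed into $D_j = p_{\max} n_j$ rather than something larger, is the technically demanding part; everything else is a routine instantiation of the PDW recipe with the concentration lemmas already in hand.
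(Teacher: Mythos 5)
Your overall architecture matches the paper's: decompose via the PDW construction, reduce to showing $\max_{i\in\mathcal{A}_j^c}\|\tilde S^{(i)}\|_1<1$, split the dual variable into a noise piece and an incoherence piece, and handle the noise piece by conditioning on $X$ (so that $W$ is an independent Gaussian), bounding column norms with Lemma~\ref{l2}, and union-bounding over the $\le D_j$ entries per block and the $\le\bar n+\bar m$ blocks with Lemma~\ref{l1} — this is exactly how the paper obtains the $D_j+\log(\bar n+\bar m)$ exponent and calibrates $\lambda_d$. (One cosmetic remark: the paper keeps the two noise contributions together as $\tfrac{1}{d\lambda_d}X_{(i)}^\top(I-P_{\mathcal{A}})W$ and uses the fact that $I-P_{\mathcal{A}}$ is an orthogonal projection, so $(I-P_{\mathcal{A}})^2=I-P_{\mathcal{A}}\preceq I$; splitting them as you propose also works with slightly worse constants.)

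The genuine gap is in how you control the deviation of the \emph{sample} incoherence operator from its population counterpart. You propose to show that $\tfrac1d X_{(i)}^\top X_{\mathcal{A}_j}(\tfrac1d X_{\mathcal{A}_j}^\top X_{\mathcal{A}_j})^{-1}$ is within $\gamma/4$ of $\tilde\Sigma_{(i),\mathcal{A}_j}\tilde\Sigma_{\mathcal{A}_j,\mathcal{A}_j}^{-1}$ in the $\ell_1\!\to\!\ell_1$ operator norm, uniformly over $i$. That matrix has $O(k_jD_j)$ entries per row, each fluctuating at scale $\sqrt{\log(\cdot)/d}$, so a uniform row-sum (operator-norm) bound of order $\gamma$ forces $d=\Omega(k_j^2D_j^2\log(\bar n+\bar m)/\gamma^2)$ — quadratic in $k_j$, which is \emph{not} implied by the hypothesis~\eqref{d3}, which is only linear in $k_j$. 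The paper avoids this by never bounding the operator in isolation: it writes $X_{\mathcal{A}_j^c}^\top = \tilde\Sigma_{\mathcal{A}_j^c,\mathcal{A}_j}\tilde\Sigma_{\mathcal{A}_j,\mathcal{A}_j}^{-1}X_{\mathcal{A}_j}^\top + V^\top$ with $V$ zero-mean Gaussian and independent of $X_{\mathcal{A}_j}$ (the conditional-Gaussian decomposition), so the deviation term is exactly $\tfrac1d V_{(i)}^\top X_{\mathcal{A}_j}(\tfrac1d X_{\mathcal{A}_j}^\top X_{\mathcal{A}_j})^{-1}\tilde S_{\mathcal{A}_j}$ acting on the \emph{specific} vector $\tilde S_{\mathcal{A}_j}$, whose columns satisfy $\|(\tilde S_{\mathcal{A}_j})_{:,l}\|_2^2\le k_j$. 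Conditioned on $X_{\mathcal{A}_j}$, each entry is Gaussian with variance at most $9\sigma_{\max}^2k_j/(\Lambda_{\min}d)$ (via Lemma~\ref{l3}), and the resulting tail bound needs only $d\gtrsim k_jD_j(D_j+\log(\bar n+\bar m))$, matching~\eqref{d3}. You should replace your uniform perturbation step with this decomposition; otherwise the proof establishes the lemma only under a strictly stronger sample-size assumption than the one stated.
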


Recall that due to Assumption A4, one can write $n_{\max} = O\left((\bar{n}+\bar{m})^{\alpha_n}\right)$ and $k_{\max} = O\left((\bar{n}+\bar{m})^{\alpha_k}\right)$ for some $\alpha_n\geq 0$ and $\alpha_k\geq 0$.
\begin{lemma}[Element-wise error]\label{errorbound}
	Given  arbitrary constants $c_3>0$ and $c_4>1$, suppose that $\hat{\Theta}$ is unique and the set of its nonzero blocks excludes the zero blocks of $\Theta^*$. 
	Then, with probability of at least
	\begin{equation}\label{p4}
	\begin{aligned}
	1&-2\exp(-({k_jn_j+c_3\log(\bar{n}+\bar{m})})/{2})-2\exp\left(-d/2\right)-2\exp\big(-2(c_4-1)(\alpha_n+\alpha_k)\log(\bar{n}+\bar{m}))\big)
	\end{aligned}
	\end{equation}
	we have
	\begin{align}\label{element_error}
	\|\hat{\Theta}^{(:,j)}\!-\!{\Theta^*}^{(:,j)}\|_{\infty}\leq \sqrt{\frac{36c_4(\alpha_n\!+\!\alpha_k)\sigma_w^2\log(\bar{n}+\bar{m})}{\Lambda_{\min} d}}
	+\frac{\lambda_d}{\Lambda_{\min}}\left({8\sqrt{k_j}\sqrt{\frac{k_jn_j+c_3\log(\bar{n}+\bar{m})}{d}}}+1\right)
	 \!=\! g_j
	\end{align}
	Furthermore, the zero blocks of $\hat{\Theta}^{(:,j)}$ exclude the nonzero blocks of ${\Theta^*}^{(:,j)}$ if $\min_{i\in\mathcal{A}_j}\|\Theta^{(i,j)}\|_{\infty}> g_j$. In other words, $\hat{\Theta}^{(:,j)}$ does not have any false negatives if $\min_{i\in\mathcal{A}_j}\|\Theta^{(i,j)}\|_{\infty}> g_j$.
\end{lemma}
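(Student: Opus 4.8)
The plan is to exploit the block-column separability of \eqref{lasso2}: its objective decomposes as $\sum_j\bigl(\tfrac{1}{2d}\|Y^{(:,j)}-X\Theta^{(:,j)}\|_2^2+\lambda_d\|\Theta^{(:,j)}\|_\bl\bigr)$, so it suffices to analyze each subproblem \eqref{lasso3}. Writing the stationarity condition \eqref{eqkkt} for the $j$-th block column gives $\tfrac{1}{d}X^\top X(\hat\Theta^{(:,j)}-{\Theta^*}^{(:,j)})-\tfrac{1}{d}X^\top W^{(:,j)}+\lambda_d\hat S^{(:,j)}=0$, where each block $\hat S^{(i,j)}$ lies in $\partial\|\hat\Theta^{(i,j)}\|_\infty$ and hence obeys $\|\hat S^{(i,j)}\|_1\le1$. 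By hypothesis $\hat\Theta^{(:,j)}$ has no false positives, so $\hat\Theta^{(:,j)}-{\Theta^*}^{(:,j)}$ is supported on the rows $I(\mathcal A_j)$; restricting the stationarity condition to those rows and inverting $\tfrac{1}{d}X_{\mathcal A_j}^\top X_{\mathcal A_j}$ (nonsingular on the good event of Lemma~\ref{l3}) yields the closed form
\[
\hat\Theta^{(:,j)}_{\mathcal A_j}-{\Theta^*}^{(:,j)}_{\mathcal A_j}=Q\,\tfrac{1}{d}X_{\mathcal A_j}^\top W^{(:,j)}-\lambda_d\,Q\,\hat S^{(:,j)}_{\mathcal A_j},\qquad Q:=\Bigl(\tfrac{1}{d}X_{\mathcal A_j}^\top X_{\mathcal A_j}\Bigr)^{-1}.
\]
I would then bound the $\ell_\infty$-norm of the two right-hand terms separately; they produce, respectively, the first summand and the parenthesized expression of \eqref{element_error}.

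For the noise term $Q\,\tfrac{1}{d}X_{\mathcal A_j}^\top W^{(:,j)}$ I would condition on $X$ (legitimate because, by the time-$T$-only construction \eqref{matrices}, $X$ and $W$ are independent): each entry is then zero-mean Gaussian, and the identity $Q\bigl(\tfrac{1}{d}X_{\mathcal A_j}^\top X_{\mathcal A_j}\bigr)Q=Q$ shows its conditional variance equals $\tfrac{\sigma_w^2}{d}Q_{kk}\le\tfrac{\sigma_w^2}{d}\|Q\|_2$. Lemma~\ref{l3} (with a threshold of order $k_jn_j+\log(\bar n+\bar m)$, and also with $t=d$ for a crude uniform bound, the latter contributing the $\exp(-d/2)$ slack in \eqref{p4}) controls $\|Q\|_2\le\|\tilde\Sigma_{\mathcal A_j,\mathcal A_j}^{-1}\|_2+\|Q-\tilde\Sigma_{\mathcal A_j,\mathcal A_j}^{-1}\|_2\le 2/\Lambda_{\min}$ once $d$ obeys \eqref{d3}. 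Applying Lemma~\ref{l1} over the at most $|I(\mathcal A_j)|\,n_j$ entries---which, via Assumption~A4 rewritten as $n_{\max}=O((\bar n+\bar m)^{\alpha_n})$, $k_{\max}=O((\bar n+\bar m)^{\alpha_k})$ and the bound $|\mathcal A_j|\le k_j$, is $(\bar n+\bar m)^{O(\alpha_n+\alpha_k)}$---with the tail slack parametrized by $c_4$ then yields the first summand $\sqrt{36c_4(\alpha_n+\alpha_k)\sigma_w^2\log(\bar n+\bar m)/(\Lambda_{\min}d)}$ and the $\exp(-2(c_4-1)(\alpha_n+\alpha_k)\log(\bar n+\bar m))$ failure probability.

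For the regularization term I would split $Q=\tilde\Sigma_{\mathcal A_j,\mathcal A_j}^{-1}+(Q-\tilde\Sigma_{\mathcal A_j,\mathcal A_j}^{-1})$. The deterministic piece gives $\lambda_d\|\tilde\Sigma_{\mathcal A_j,\mathcal A_j}^{-1}\hat S^{(:,j)}_{\mathcal A_j}\|_\infty\le\lambda_d/\Lambda_{\min}$, using Assumption~A2 together with $\|\hat S^{(i,j)}\|_1\le1$; this is the constant term inside the parenthesis of \eqref{element_error}. For the random piece I would bound $\lambda_d\|Q-\tilde\Sigma_{\mathcal A_j,\mathcal A_j}^{-1}\|_2\cdot\|\hat S^{(:,j)}_{\mathcal A_j}\|_2$; here the block-sparsity convention forces $|\mathcal A_j|\le k_j$ (each nonzero block of block-column $j$ contributes a nonzero to every column it meets, so $k_j\ge|\mathcal A_j|$), whence $\|\hat S^{(:,j)}_{\mathcal A_j}\|_2\le\|\hat S^{(:,j)}_{\mathcal A_j}\|_F\le\sqrt{k_j}$, while Lemma~\ref{l3} with $t\asymp k_jn_j+c_3\log(\bar n+\bar m)$ bounds $\|Q-\tilde\Sigma_{\mathcal A_j,\mathcal A_j}^{-1}\|_2$ by $\tfrac{8}{\Lambda_{\min}}\sqrt{(k_jn_j+c_3\log(\bar n+\bar m))/d}$ on an event of probability $1-2\exp(-(k_jn_j+c_3\log(\bar n+\bar m))/2)$, producing the $\tfrac{\lambda_d}{\Lambda_{\min}}\,8\sqrt{k_j}\sqrt{(k_jn_j+c_3\log(\bar n+\bar m))/d}$ summand. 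Intersecting the three high-probability events through Lemma~\ref{l4} gives exactly \eqref{p4}. The \emph{no false negatives} conclusion is then immediate: on the event $\|\hat\Theta^{(:,j)}-{\Theta^*}^{(:,j)}\|_\infty\le g_j<\min_{i\in\mathcal A_j}\|{\Theta^*}^{(i,j)}\|_\infty$, every nonzero block ${\Theta^*}^{(i,j)}$ has an entry of magnitude exceeding $g_j$, which therefore cannot be set to zero in $\hat\Theta^{(i,j)}$, so $\hat\Theta^{(i,j)}\ne0$ for every $i\in\mathcal A_j$.

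The step I expect to be the main obstacle is the careful bookkeeping around the two terms: keeping the conditional-variance computation of the noise term honest despite the dependence of $Q$ on $X$ (the identity $Q(\tfrac{1}{d}X^\top X)Q=Q$ and the $X$–$W$ independence are the two ingredients that make it work), and extracting the correct $k_j$- and block-size-dependence of the regularization term---in particular establishing $|\mathcal A_j|\le k_j$ from the block-sparsity assumption and correctly using that the $\ell_\infty$-subgradient blocks lie in the $\ell_1$-ball. Tracking the numerical constants so the bound reads exactly as \eqref{element_error}, and routing every event through Lemma~\ref{l4} so the probability matches \eqref{p4}, is tedious but mechanical once these three term estimates are in hand.
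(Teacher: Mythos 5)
Your proposal is correct and follows essentially the same route as the paper's proof: the same KKT-based closed form for the support-restricted error, the same split into a noise term and a regularization term, the same use of Lemma~\ref{l3} (once with $t\asymp k_jn_j+c_3\log(\bar n+\bar m)$ and once crudely for the $\exp(-d/2)$ slack), the same sub-Gaussian union bound parametrized by $c_4$ via Assumption~A4, and the same combination of events through Lemma~\ref{l4}. The only deviations are in unimportant numerical constants (e.g., your $2/\Lambda_{\min}$ versus the paper's $9/\Lambda_{\min}$ bound on $\|(\tfrac1d X_{\mathcal A}^\top X_{\mathcal A})^{-1}\|_2$), which you correctly flag as mechanical bookkeeping.
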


In what follows, we will present some preliminaries that are essential in proving Lemmas~\ref{sparsity} and~\ref{errorbound}.
Notice that $\hat{S}$ and $W$ have the same dimensions as $\hat{\Theta}$, and hence, can be similarly partitioned into different blocks.
Since Lemmas~\ref{sparsity} and~\ref{errorbound} hold for any given column block index $j$, $\Theta^{(i,j)}$ and $\mA_j$ will be referred to as $\Theta^{(i)}$ and $\mA$ in order to streamline the presentation.
\begin{lemma}\label{subdiff}
	$Q\in \partial\|\tilde{\Theta}\|_{\bl}$ if and only if the following conditions are satisfied for every $i \in\{1,2,..., \bar{n}+\bar{m}\}$:
	\begin{itemize}
		\item If $\|\tilde{\Theta}^{(i)}\|_{\infty} \not= 0$, define $M^{(i)} = \{(k,l): \tilde{\Theta}^{(i)}_{kl} = \|\tilde{\Theta}^{(i)}\|_{\infty} \}$. Then, $Q^{(i)}_{kl} = \eta_{kl}\cdot\mathrm{sign}(\tilde{\Theta}^{(i)}_{kl})$, where $\sum_{(k,l)\in M^{(i)}}\eta_{kl} = 1$ and $\eta_{kl} = 0$ if $(k,l)\not\in M^{(i)}$.
		\item If $\|\tilde{\Theta}^{(i)}\|_{\infty} = 0$, then $\|Q^{(i)}\|_1\leq 1$.
	\end{itemize}
\end{lemma}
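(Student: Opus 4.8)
The plan is to derive Lemma~\ref{subdiff} from two elementary facts of convex analysis. First, $\|\cdot\|_{\bl}$ is a separable sum of norms, $\|\tilde\Theta\|_{\bl}=\sum_{i=1}^{\bar n+\bar m}\|\tilde\Theta^{(i)}\|_{\infty}$, in which the $i$-th summand depends only on the block $\tilde\Theta^{(i)}$; hence its subdifferential decomposes as a Cartesian product, and $Q\in\partial\|\tilde\Theta\|_{\bl}$ holds if and only if $Q^{(i)}\in\partial\|\tilde\Theta^{(i)}\|_{\infty}$ for every $i$. This reduces the claim to describing $\partial\|M\|_{\infty}$ at an arbitrary matrix $M$, where $\|M\|_{\infty}$ is the largest entry of $M$ in absolute value. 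Second, for any norm $N$ with dual norm $N^{\ast}$ one has $\partial N(M)=\{Q:\ N^{\ast}(Q)\le 1,\ \langle Q,M\rangle=N(M)\}$; this follows in two lines by evaluating the subgradient inequality $N(Y)\ge N(M)+\langle Q,Y-M\rangle$ at $Y=0$ and $Y=2M$ to force $\langle Q,M\rangle=N(M)$, after which the inequality reduces to $N(Y)\ge\langle Q,Y\rangle$ for all $Y$, equivalently $N^{\ast}(Q)\le 1$. Since the dual norm of $\|\cdot\|_{\infty}$ is $\|\cdot\|_{1}$, this specializes to $\partial\|M\|_{\infty}=\{Q:\ \|Q\|_{1}\le 1,\ \langle Q,M\rangle=\|M\|_{\infty}\}$.

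If $\|\tilde\Theta^{(i)}\|_{\infty}=0$, i.e.\ $\tilde\Theta^{(i)}=0$, the constraint $\langle Q^{(i)},\tilde\Theta^{(i)}\rangle=\|\tilde\Theta^{(i)}\|_{\infty}$ is vacuous and the set above is exactly the $\ell_1$ unit ball $\{Q^{(i)}:\ \|Q^{(i)}\|_{1}\le 1\}$, which is the second bullet. If $\|\tilde\Theta^{(i)}\|_{\infty}\ne 0$, then $\|Q^{(i)}\|_{1}\le 1$ together with $\langle Q^{(i)},\tilde\Theta^{(i)}\rangle=\|\tilde\Theta^{(i)}\|_{\infty}$ forces $\|Q^{(i)}\|_{1}=1$, since $\langle Q^{(i)},\tilde\Theta^{(i)}\rangle\le\|Q^{(i)}\|_{1}\,\|\tilde\Theta^{(i)}\|_{\infty}$ and $\|\tilde\Theta^{(i)}\|_{\infty}>0$. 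I would then read off the equality conditions in the chain
\begin{equation*}
\langle Q^{(i)},\tilde\Theta^{(i)}\rangle=\sum_{k,l}Q^{(i)}_{kl}\tilde\Theta^{(i)}_{kl}\le\sum_{k,l}|Q^{(i)}_{kl}|\,|\tilde\Theta^{(i)}_{kl}|\le\|\tilde\Theta^{(i)}\|_{\infty}\sum_{k,l}|Q^{(i)}_{kl}|=\|\tilde\Theta^{(i)}\|_{\infty}.
\end{equation*}
Equality in the first step requires $Q^{(i)}_{kl}\tilde\Theta^{(i)}_{kl}\ge 0$ for all $(k,l)$, and equality in the second requires $Q^{(i)}_{kl}=0$ whenever $|\tilde\Theta^{(i)}_{kl}|<\|\tilde\Theta^{(i)}\|_{\infty}$, i.e.\ $Q^{(i)}$ is supported on the set $M^{(i)}$ of entries of maximal absolute value. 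On $M^{(i)}$ the entries $\tilde\Theta^{(i)}_{kl}$ are nonzero, so putting $\eta_{kl}=|Q^{(i)}_{kl}|\ge 0$ yields $Q^{(i)}_{kl}=\eta_{kl}\,\mathrm{sign}(\tilde\Theta^{(i)}_{kl})$ with $\eta_{kl}=0$ off $M^{(i)}$ and $\sum_{(k,l)\in M^{(i)}}\eta_{kl}=\|Q^{(i)}\|_{1}=1$, which is the first bullet.

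For the converse, any $Q^{(i)}$ with $\|Q^{(i)}\|_{1}\le 1$ lies in the subdifferential at $\tilde\Theta^{(i)}=0$, and any $Q^{(i)}$ of the form in the first bullet satisfies $\|Q^{(i)}\|_{1}=\sum_{(k,l)\in M^{(i)}}\eta_{kl}=1$ and $\langle Q^{(i)},\tilde\Theta^{(i)}\rangle=\sum_{(k,l)\in M^{(i)}}\eta_{kl}\,|\tilde\Theta^{(i)}_{kl}|=\|\tilde\Theta^{(i)}\|_{\infty}$, hence $Q^{(i)}\in\partial\|\tilde\Theta^{(i)}\|_{\infty}$. Assembling the blockwise characterizations through the product structure of $\partial\|\tilde\Theta\|_{\bl}$ then gives the lemma. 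I do not expect a genuine obstacle, as this is a routine subgradient computation; the only points requiring care are the two equality conditions in the Holder chain above and the reading of $M^{(i)}$ as the set of entries of $\tilde\Theta^{(i)}$ of maximal absolute value, which is what makes the signs in the formula well defined and forces the coefficients $\eta_{kl}$ to be nonnegative.
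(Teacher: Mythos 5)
Your proof is correct, and it is the standard argument: the paper states Lemma~\ref{subdiff} without proof (treating it as a known fact from the $\ell_1/\ell_\infty$ literature), so there is nothing to compare against, but the route you take --- separability of $\|\cdot\|_{\bl}$ giving a product of subdifferentials, the dual-norm characterization $\partial N(M)=\{Q:\,N^{*}(Q)\le 1,\ \langle Q,M\rangle=N(M)\}$, and the equality conditions in the H\"older chain --- is exactly the intended derivation. Your two flagged points of care are the right ones: $M^{(i)}$ must be read as the set of entries of \emph{maximal absolute value} (otherwise the first bullet can fail for blocks whose extreme entries are all negative), and the coefficients $\eta_{kl}$ are implicitly nonnegative, both of which your identification $\eta_{kl}=|Q^{(i)}_{kl}|$ makes explicit.
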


The proofs of Lemmas~\ref{sparsity} and~\ref{errorbound} are based on the well-known primal-dual witness approach introduced in~\cite{wainwright2009sharp,Martin11}, which is defined as follows:

\vspace{2mm}
\hrule
\hrule
\vspace{2mm}
{\bf Primal-dual witness approach (\cite{wainwright2009sharp,Martin11}):}

\begin{itemize}
	\item \textit{Step 1:} Define the restricted regularized problem as
	\begin{subequations}\label{restricted}
		\begin{align}
		\tilde{\Theta} = & \arg\min_{\Theta\in\mathbb{R}^{p\times r}} && \frac{1}{2d}\|Y-X\Theta\|_F^2+\lambda_d\|\Theta\|_{\bl}\\
		& \mathrm{s.t.} && \Theta^{(i)} = 0\quad \forall i\in\mathcal{A}^{c}
		\end{align}
	\end{subequations}
	whose solution is unique if $X_{\mathcal{A}}^\top X_{\mathcal{A}}$ is invertible.
	\item \textit{Step 2:} With a slight abuse of notation, $\tilde{\Theta}$ can be written as $(\tilde{\Theta}_{\mathcal{A}}, 0)$. Choose $\tilde{S}_{\mA}$ as an element of the sub-differential $\partial\|\tilde{\Theta}_{\mA}\|_{\bl}$.
	\item \textit{Step 3:} Find $\tilde{S}_{\mA}^c$ by solving the KKT equations~\eqref{eqkkt}, given $\tilde{\Theta}$ and $\tilde{S}_{\mA}$. Then, verify 
	\begin{equation}\label{eqPDW}
	\|\tilde{S}^{(i)}\|_1<1\quad\forall i\in\mathcal{A}^c
	\end{equation}
\end{itemize}
\vspace{0mm}
\hrule
\hrule
\vspace{2mm}

If~\eqref{eqPDW} can be verified in the last step, it is said that the primal-dual witness (PDW) approach \textit{succeeds}.
The next lemma unveils a close relationship between the block-regularized estimator, the PDW approach, and the true regression parameter $\Theta^*$. 
\begin{lemma}\label{pdw}
	The following statements hold:
	\begin{itemize}
		\item If the PDW approach succeeds, then $\tilde{\Theta}$ is the unique optimal solution of~\eqref{lasso2}, i.e. $\hat{\Theta} = \tilde{\Theta}$. 
		\item Conversely, suppose that $\hat{\Theta}$ is the optimal solution of~\eqref{lasso2} such that $\hat{\Theta}^{(i)} = 0$ for every $i\in\mathcal{A}^c$. Then, the PDW approach succeeds.
	\end{itemize}
\end{lemma}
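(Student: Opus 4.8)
The plan is to prove both implications by combining three ingredients that are already available: the KKT characterization of~\eqref{lasso2} in Lemma~\ref{kkt}, the explicit description of $\partial\|\cdot\|_\bl$ in Lemma~\ref{subdiff}, and the elementary observation that all minimizers of~\eqref{lasso2} produce the same fitted value $X\Theta$---hence share a single dual certificate $\hat S$, recovered by solving~\eqref{eqkkt} for $\lambda_d\hat S$. This last fact follows by a standard midpoint argument: if $\Theta_1$ and $\Theta_2$ are both optimal with $X\Theta_1\neq X\Theta_2$, strict convexity of $\Phi\mapsto\|Y-\Phi\|_F^2$ makes the loss at $\tfrac12(\Theta_1+\Theta_2)$ strictly below the average, which, together with convexity of $\|\cdot\|_\bl$, contradicts optimality of both. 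I would invoke invertibility of $X_\mA^\top X_\mA$---the standing premise that already makes the restricted program~\eqref{restricted} well defined, and which in the random-design setting holds with high probability (cf.\ Lemma~\ref{l3})---wherever uniqueness of~\eqref{restricted} is needed.

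For the forward implication, when the PDW approach succeeds the pair $(\tilde\Theta,\tilde S)$ built in Steps~1--3 satisfies~\eqref{eqkkt} with $\tilde S_\mA\in\partial\|\tilde\Theta_\mA\|_\bl$, $\tilde\Theta^{(i)}=0$, and $\|\tilde S^{(i)}\|_1<1$ for $i\in\mA^c$. Checking these block conditions against Lemma~\ref{subdiff} shows $\tilde S\in\partial\|\tilde\Theta\|_\bl$, so Lemma~\ref{kkt} certifies that $\tilde\Theta$ is \emph{an} optimizer of~\eqref{lasso2}. Uniqueness then follows from rigidity of the dual certificate: any optimizer $\hat\Theta$ carries the same $\hat S=\tilde S$, and since $\|\tilde S^{(i)}\|_1<1$ strictly for $i\in\mA^c$, Lemma~\ref{subdiff} forbids $\hat\Theta^{(i)}\neq 0$ there (that would force $\|\hat S^{(i)}\|_1=1$). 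Hence every optimizer is feasible for~\eqref{restricted} and, being globally optimal, must solve it; uniqueness of~\eqref{restricted} then yields $\hat\Theta=\tilde\Theta$.

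For the converse, an optimizer $\hat\Theta$ of~\eqref{lasso2} with $\hat\Theta^{(i)}=0$ on $\mA^c$ is feasible for---and hence solves---the program~\eqref{restricted}, so $\tilde\Theta=\hat\Theta$. Choosing $\tilde S_\mA$ to be the $\mA$-part of the (unique) KKT sub-gradient of $\hat\Theta$, Step~3's linear system reproduces the remaining blocks of that sub-gradient, so the only item still to verify is $\|\tilde S^{(i)}\|_1<1$ for $i\in\mA^c$. Here Lemma~\ref{subdiff} immediately yields $\|\tilde S^{(i)}\|_1\le 1$ from $\hat\Theta^{(i)}=0$; I would upgrade this to a strict inequality by invoking that $\hat\Theta$ is the \emph{unique} optimizer, so that an equality $\|\tilde S^{(i)}\|_1=1$ would---using non-degeneracy of the random design on block $i$---permit a support-enlarging perturbation along block $i$ that keeps the objective at its minimum.

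The main obstacle is precisely this strict dual-feasibility bookkeeping: it is the mechanism that converts ``optimal'' into ``unique and correctly supported'' in the forward direction, and it is the single inequality not handed over for free by Lemma~\ref{subdiff} in the converse. I expect the shared-fitted-value fact to be the workhorse behind both appearances of strictness, with the remaining steps---confirming that Steps~1--3 output a genuine KKT pair and that $X_\mA^\top X_\mA$ is invertible wherever it is used---being routine.
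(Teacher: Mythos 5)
The paper does not actually write this proof out---it defers to Lemma 2 of \cite{Martin11}---so the relevant benchmark is the standard primal-dual witness argument that reference supplies. Your forward direction reproduces that argument correctly and completely: strict convexity of the loss in the fitted value $X\Theta$ forces all minimizers of~\eqref{lasso2} to share $X\hat\Theta$, hence (via~\eqref{eqkkt}, which determines $\lambda_d\hat S=\tfrac{1}{d}X^\top(Y-X\hat\Theta)$) a common dual certificate; strict dual feasibility on $\mathcal{A}^c$ together with Lemma~\ref{subdiff} then forces every minimizer to vanish on $\mathcal{A}^c$, and invertibility of $X_{\mathcal{A}}^\top X_{\mathcal{A}}$ makes the restricted problem strictly convex, yielding uniqueness. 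This is exactly the intended proof of the first bullet.

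The converse is where you have a genuine gap, and you have located it but not closed it. From optimality of $\hat\Theta$ with $\hat\Theta^{(i)}=0$ on $\mathcal{A}^c$ you only obtain $\|\tilde S^{(i)}\|_1\le 1$, whereas ``succeeds'' is defined by the strict inequality~\eqref{eqPDW}. The proposed upgrade---that saturation $\|\tilde S^{(i)}\|_1=1$ would permit a support-enlarging perturbation contradicting uniqueness---does not work: a saturated dual block outside the support need not produce a second optimal solution. Already in the scalar Lasso, the equicorrelation set $\{j:|x_j^\top(Y-X\hat\beta)|=d\lambda_d\}$ can strictly contain the support of a unique solution (uniqueness only requires full column rank of $X$ restricted to that set), and for any saturated index outside the support the strict converse fails. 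So the deterministic statement you are attempting is false in degenerate configurations; it holds only generically (e.g., almost surely over the continuous noise $W$), and the cited Lemma 2 of \cite{Martin11} accordingly states its converse with non-strict dual feasibility. The saving grace is that the paper only ever invokes the first bullet of Lemma~\ref{pdw} (to reduce Lemmas~\ref{sparsity} and~\ref{errorbound} to the restricted problem), so this gap does not propagate into the main results; but as a proof of the lemma as stated, the second bullet remains unestablished.
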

\begin{proof}
	The proof is a simple generalization of Lemma 2 in~\cite{Martin11}. The details are omitted for brevity.
\end{proof}
Lemma~\ref{pdw} is the building block of our proofs for Lemmas~\ref{sparsity} and~\ref{errorbound}. In particular, Lemma~\ref{pdw} indicates that in order to show that the solution of~\eqref{lasso_small} is unique and excludes false positive errors, it is enough to verify that the PDW approach succeeds with high probability. Then, conditioned on the success of the PDW approach, our focus can be devoted to the optimal solution of the restricted problem~\eqref{restricted} and bounding its difference from the true parameters. 

\begin{lemma}\label{lerror}
	Define $\tilde{\Theta}-\Theta = E$. The following equalities hold:
	\begin{subequations}
	\begin{align}
	& \tilde{S}_{\mA^c} = 0\label{eqAc}\\
	& E_{\mA} = (\frac{1}{d}X_{\mA}^\top X_{\mA})^{-1}\frac{1}{d}X_{\mA}^\top W-(\frac{1}{d}X_{\mA}^\top X_{\mA})^{-1}\lambda_d\tilde{S}_{\mA}\label{eqA}\\
	& \tilde{S}_{\mA^c} = \frac{1}{d\lambda_d}\left(X_{\mA^c}^\top-(X_{\mA^c}^\top X_{\mA})(X_{\mA}^\top X_{\mA})^{-1}X_{\mA}^\top\right)W\nonumber\\
	&\hspace{8mm}+\frac{1}{d}X_{\mA^c}^\top X_{\mA}(\frac{1}{d}X_{\mA}^\top X_{\mA})^{-1}\tilde{S}_{\mA}\label{eqAcs}
	\end{align}
	\end{subequations}
\end{lemma}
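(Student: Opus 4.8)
\textbf{Proof proposal for Lemma~\ref{lerror}.}

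The plan is to derive the three equalities directly from the restricted problem~\eqref{restricted} and the KKT conditions, exploiting the structure imposed by the constraint $\Theta^{(i)} = 0$ for all $i\in\mathcal{A}^c$. First I would write down the stationarity condition for the restricted problem~\eqref{restricted}: since $\tilde{\Theta} = (\tilde{\Theta}_{\mA}, 0)$ is its (unique) optimal solution and the blocks in $\mA^c$ are forced to zero, the KKT system~\eqref{eqkkt} applied to $\tilde{\Theta}$ reads, after substituting $Y = X\Theta^* + W$ (so that $X^\top X(\tilde\Theta - \Theta^*) - X^\top W$ replaces $X^\top X\tilde\Theta - X^\top Y$),
\begin{equation}
\frac{1}{d}X^\top X(\tilde{\Theta}-\Theta^*) - \frac{1}{d}X^\top W + \lambda_d \tilde{S} = 0,
\end{equation}
where $\tilde{S}\in\partial\|\tilde{\Theta}\|_{\bl}$ with $\tilde{S}_{\mA}$ chosen in Step~2 and $\tilde{S}_{\mA^c}$ defined by Step~3. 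Now I would partition this equation into its $\mA$-block rows and $\mA^c$-block rows. Because $\tilde{\Theta}^{(i)} = 0$ and $\Theta^{*(i)} = 0$ for $i\in\mathcal{A}^c$ (the latter by definition of $\mA$ as the support), the vector $E = \tilde{\Theta}-\Theta^*$ has $E_{\mA^c} = 0$, so $X(\tilde\Theta - \Theta^*) = X_{\mA}E_{\mA}$; this is the key simplification that decouples the two block-rows.

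Next, for equality~\eqref{eqAc}: the $\mA^c$-rows of $\partial\|\tilde\Theta\|_{\bl}$ need not vanish a priori, but $\tilde S_{\mA^c}$ is \emph{defined} in Step~3 of the PDW construction by solving the KKT equations, and here I should clarify that the statement $\tilde S_{\mA^c} = 0$ must be reconciled with~\eqref{eqAcs}; I suspect the intended reading is that~\eqref{eqAc} refers to the sub-gradient of the \emph{restricted} problem evaluated on its own variables (where the $\mA^c$ blocks are absent/zero, hence contribute zero), while~\eqref{eqAcs} is the value forced on $\tilde S_{\mA^c}$ when one extends $\tilde\Theta$ to the full KKT system~\eqref{eqkkt} of the \emph{unrestricted} problem~\eqref{lasso2}. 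I would state this distinction explicitly to avoid an apparent contradiction. For equality~\eqref{eqA}: taking the $\mA$-block rows of the KKT system and using $X^\top X(\tilde\Theta-\Theta^*)\big|_{\mA\text{-rows}} = X_{\mA}^\top X_{\mA}E_{\mA}$, I get
\begin{equation}
\frac{1}{d}X_{\mA}^\top X_{\mA}E_{\mA} - \frac{1}{d}X_{\mA}^\top W + \lambda_d\tilde{S}_{\mA} = 0,
\end{equation}
and since the restricted problem's uniqueness guarantees $X_{\mA}^\top X_{\mA}$ is invertible, left-multiplying by $(\tfrac1d X_{\mA}^\top X_{\mA})^{-1}$ yields~\eqref{eqA} after rearranging. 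For equality~\eqref{eqAcs}: taking the $\mA^c$-block rows of the full KKT system~\eqref{eqkkt}, I get $\tfrac1d X_{\mA^c}^\top X_{\mA}E_{\mA} - \tfrac1d X_{\mA^c}^\top W + \lambda_d\tilde S_{\mA^c} = 0$; substituting the expression~\eqref{eqA} for $E_{\mA}$ and solving for $\tilde S_{\mA^c}$ gives, after grouping the $W$-terms and the $\tilde S_{\mA}$-terms,
\begin{equation}
\tilde S_{\mA^c} = \frac{1}{d\lambda_d}\Bigl(X_{\mA^c}^\top - (X_{\mA^c}^\top X_{\mA})(X_{\mA}^\top X_{\mA})^{-1}X_{\mA}^\top\Bigr)W + \frac{1}{d}X_{\mA^c}^\top X_{\mA}\Bigl(\tfrac1d X_{\mA}^\top X_{\mA}\Bigr)^{-1}\tilde S_{\mA},
\end{equation}
which is~\eqref{eqAcs}.

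The main obstacle I anticipate is not any single algebraic manipulation — each step is a routine partition-and-substitute — but rather pinning down precisely which optimization problem each sub-gradient is taken with respect to, so that~\eqref{eqAc} and~\eqref{eqAcs} are not in tension. The cleanest resolution is to phrase the lemma entirely in terms of the PDW construction: $\tilde\Theta$ solves the restricted problem (giving~\eqref{eqAc} and~\eqref{eqA} from that problem's optimality), and then~\eqref{eqAcs} is the \emph{definition} of $\tilde S_{\mA^c}$ from Step~3, obtained by demanding that $(\tilde\Theta, \tilde S)$ satisfy the KKT system of the unrestricted problem. I would also note for completeness that all the matrix inverses appearing are well-defined on the event that $X_{\mA}^\top X_{\mA}$ is invertible, which is the standing assumption under which the restricted problem has a unique solution; the probabilistic control of that event is deferred to the proofs of Lemmas~\ref{sparsity} and~\ref{errorbound}.
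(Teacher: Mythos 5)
Your proposal is correct and follows essentially the same route as the paper: partition the KKT system into its $\mA$- and $\mA^c$-block rows using $E_{\mA^c}=0$, solve the first for $E_{\mA}$, and substitute into the second to obtain $\tilde{S}_{\mA^c}$. Your instinct that~\eqref{eqAc} is in tension with~\eqref{eqAcs} is right, but the simplest resolution is that~\eqref{eqAc} is a typo for $E_{\mA^c}=0$, which is exactly the form in which the lemma is later invoked in the proof of Lemma~\ref{errorbound}.
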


\begin{proof}
	To verify~\eqref{eqA} and~\eqref{eqAcs}, note that the KKT condition in Lemma~\ref{kkt} reduces to
	\begin{subequations}
	\begin{align}
		& \frac{1}{d}(X_{\mA}^\top X_{\mA})E_{\mA} - \frac{1}{d}X_{\mA}^\top W+\lambda_d\tilde{S}_{\mA} = 0\label{eqA1}\\
		& \frac{1}{d}(X_{\mA^c}^\top X_{\mA})E_{\mA} - \frac{1}{d}X_{\mA^c}^\top W+\lambda_d\tilde{S}_{\mA^c} = 0\label{eqA2}
	\end{align}
	\end{subequations}
	Solving~\eqref{eqA1} with respect to $E_{\mA}$ and substituting the solution in~\eqref{eqA2} completes the proof.
\end{proof}

\subsection{Proof of Lemma~\ref{sparsity}:} As shown in Lemma~\ref{pdw}, it is enough to prove that the PDW succeeds with high probability.
To this goal, we show that $\max_{i\in\mA^c}\|\tilde{S}^{(i)}\|_1<1$ with high probability, which results in the success of the PDW approach.
 Lemma~\ref{lerror} yields that
\begin{align}
\|\tilde{S}^{(i)}\|_1 \leq& \underbrace{\left\|\frac{1}{d\lambda_d}\left({X_{(i)}}^\top -({X_{(i)}}^\top X_{\mA})(X_{\mA}^\top X_{\mA})^{-1}X_{\mA}^\top\right)W\right\|_1}_{Z_1^{(i)}}+\underbrace{\left\|\frac{1}{d}{X_{(i)}}^\top X_{\mA}(\frac{1}{d}X_{\mA}^\top X_{\mA})^{-1}\tilde{S}_{\mA}\right\|_1}_{Z_2^{(i)}}
\end{align}
Similar to~\cite{Martin11}, we will show that $\max_{i\in\mA^c}Z_1^{(i)}<\gamma/2$ and $\max_{i\in\mA^c}Z_2^{(i)}<1-\gamma/2$ with high probability.
First, consider $\max_{i\in\mA^c}Z_1^{(i)}$. We have
\begin{equation}
Z_1^{(i)} \!=\! \sum_{(k,l)\in\Theta^{(i)}}\Big|\underbrace{\frac{1}{d\lambda_d}{(X_{(i)})_{:,k}}^\top(I-X_{\mA}(X_{\mA}^\top X_{\mA})^{-1}X_{\mA}^\top)W_{:,l}}_{R^{(i)}_{kl}}\Big|
\end{equation}
Given $X$, note that $R_{kl}^{(i)}$ is Gaussian with variance 
\begin{equation}
\frac{\sigma_w^2}{d^2\lambda_d^2}\left({(X_{(i)})_{:,k}}^\top(I-X_{\mA}(X_{\mA}^\top X_{\mA})^{-1}X_{\mA}^\top)^2{(X_{(i)})_{:,k}}\right)
\end{equation}
Moreover, $X_{\mA}(X_{\mA}^\top X_{\mA})^{-1}X_{\mA}^\top$ is an orthogonal projection onto the range of $X_{\mA}$. Therefore,
\begin{align}
&\frac{\sigma_w^2}{d^2\lambda_d^2}\left({(X_{(i)})_{:,k}}^\top(I-X_{\mA}(X_{\mA}^\top X_{\mA})^{-1}X_{\mA}^\top)^2{(X_{(i)})_{:,k}}\right)\nonumber\\
& = \frac{\sigma_w^2}{d^2\lambda_d^2}\left({(X_{(i)})_{:,k}}^\top(I-X_{\mA}(X_{\mA}^\top X_{\mA})^{-1}X_{\mA}^\top){(X_{(i)})_{:,k}}\right)\nonumber\\
& \leq \frac{\sigma_w^2}{d^2\lambda_d^2}\|(X_{(i)})_{:,k}\|_2^2
\end{align}
Define $p_i = n_i$ if $1\leq i\leq \bar{n}$ and $p_i = m_i$ if $\bar{n}+1\leq i\leq\bar{n}+\bar{m}$. Due to Lemma~\ref{l2}, the last inequality is upper bounded by ${4\sigma_w^2\sigma_{\max}^2}/{d\lambda_d^2}$ for every $k\in\{1,...,p_i\}$ with probability of at least $1-\exp(-d/2+\log p_i)$ for $d>2\log p_i$. Conditioned on this event, one can write:
\begin{equation}
Z_1^{(i)} = \max_{\epsilon\in\{-1,+1\}^{p_i\times n_j}}\sum_{(k,l)\in\Theta^{(i)}} \epsilon_{kl} R_{kl}^{(i)}
\end{equation}
which means that $\sum_{(k,l)\in\Theta^{(i)}} \epsilon_{kl} R_{kl}^{(i)}$ is sub-Gaussian with the parameter ${4D_j\sigma_w^2\sigma_{\max}^2}/{d\lambda_d^2}$. This implies that
\begin{align}
\mathbb{P}(\max_{i\in\mA^c}Z_1^{(i)}\!\geq\! \zeta) &\!=\! \mathbb{P}\left(\max_{i\in\mA^c}\max_{\epsilon\in\{-1,+1\}^{p_i\times n_j}}\sum_{(k,l)\in\Theta^{(i)}} \epsilon_{kl} R_{kl}^{(i)}\geq \zeta\right)\nonumber\\
&\leq \!2\exp\left(-\frac{d\lambda_d^2\zeta^2}{8D_j\sigma_w^2\sigma_{\max}^2}\!+\!D_j\!+\!\log(\bar{n}\!+\!\bar{m})\right)+\exp\left(-d/2+\log p_{\max}+\log (\bar{n}+\bar{m})\right)
\end{align}
where we have used Lemma~\ref{l1}, the second statement of Lemma~\ref{l4} and the facts that $p_i\leq p_{\max}$ and $|\mA^c|\leq \bar{n}+\bar{m}$ in the last inequality. Now, setting $\zeta = \gamma/2$ and
\begin{equation}
\lambda_d\geq \sqrt{\frac{32c_1\sigma_w^2\sigma_{\max}^2}{\gamma^2}\cdot\frac{(D_j)^2+D_j\log(\bar{n}+\bar{m})}{d}}
\end{equation}
for some arbitrary constant $c_1>1$ yields that
\begin{align}
\mathbb{P}(\max_{i\in\mA^c}Z_1^{(i)}< \gamma/2)\geq& 1-2\exp(-(c_1-1)(D_j+\log (\bar{n}+\bar{m})))- \exp(-d/2+\log p_{\max}+\log (\bar{n}+\bar{m}))\nonumber\\
\geq& 1-3\exp(-(c_1-1)(D_j+\log (\bar{n}+\bar{m})))
\end{align}
where the last inequality is due to the lower bound~\eqref{d3} on $d$. 
Next, an upper bound on $\max_{i\in\mathcal{A}^c}Z_2^{(i)}$ will be derived. Since each row of $X$ is drawn from $N(0,\tilde\Sigma)$, one can write the distribution of $X_{\mA^c}^\top$, conditioned on $X_{\mA}$ as
\begin{equation}\label{eq562}
N\Big(\tilde\Sigma_{\mathcal{A}^c,\mathcal{A}}(\tilde\Sigma_{\mathcal{A},\mathcal{A}})^{-1}X_{\mA}^\top,\underbrace{\tilde\Sigma_{\mathcal{A}^c,\mathcal{A}^c}-\tilde\Sigma_{\mathcal{A}^c,\mathcal{A}}(\tilde\Sigma_{\mathcal{A},\mathcal{A}})^{-1}\tilde\Sigma_{\mathcal{A},\mathcal{A}^c}}_{\tilde{\Sigma}_{\mA^c|\mA}}\Big)
\end{equation}
Based on~\eqref{eq562}, one can verify that $\frac{1}{d}X_{\mA^c}^\top X_{\mA}(\frac{1}{d}X_{\mA}^\top X_{\mA})^{-1}\tilde{S}_{\mA}$ has the same distribution as 
\begin{equation}
\tilde\Sigma_{\mathcal{A}^c,\mathcal{A}}(\Sigma_{\mathcal{A},\mathcal{A}})^{-1}\tilde{S}_{\mA}+\frac{1}{d}V^\top X_{\mA}(\frac{1}{d}X_{\mA}^\top X_{\mA})^{-1}\tilde{S}_{\mA}
\end{equation}
where $V$ is a random matrix with zero mean, covariance $\tilde{\Sigma}_{\mA^c|\mA}$, and independent of $X$. In light of the definition of $\tilde{\Sigma}_{\mA^c|\mA}$, it can be easily seen that the elements of $V$ are sub-Gaussian with parameters of at most $\sigma_{\max}^2$. This implies that
\begin{align}\label{W}
\!\!\!\max_{i\in\mA^c} Z_2^{(i)}\leq& \max_{i\in\mA^c}\left\|\Sigma_{i,\mathcal{A}}(\Sigma_{\mathcal{A},\mathcal{A}})^{-1}\tilde{S}_{\mA}\right\|_1+\max_{i\in\mA^c}\left\|\frac{1}{d}{V_{(i)}}^\top X_{\mA}(\frac{1}{d}X_{\mA}^\top X_{\mA})^{-1}\tilde{S}_{\mA}\right\|_1\nonumber\\
\leq& 1-\gamma+\max_{i\in\mA^c}\underbrace{\left\|\frac{1}{d}{V_{(i)}}^\top X_{\mA}(\frac{1}{d}X_{\mA}^\top X_{\mA})^{-1}\tilde{S}_{\mA}\right\|_1}_{Z_3^{(i)}}
\end{align}
where we have used the mutual incoherence property and the fact that $\|\tilde{S}^{(i)}\|_1 = 1$ for every $i\in\mA$. Now, it remains to show that $\max_{i\in\mA^c}Z_3^{(i)}<\gamma/2$ with high probability. Similar to $Z_1^{(i)}$, one can write:
\begin{equation}
	Z_3^{(i)} = \sum_{(k,l)\in\Theta^{(i)}}\Big|\underbrace{\frac{1}{d}{(V_{(i)})_{:,k}}^\top X_{\mA}(\frac{1}{d}X_{\mA}^\top X_{\mA})^{-1}(\tilde{S}_{\mA})_{:,l}}_{T^{(i)}_{kl}}\Big|
\end{equation}
Given $X$, note that $T^{(i)}_{kl}$ is Gaussian with variance
\begin{equation}\label{Tvar}
	\sigma_{\max}^2(\tilde{S}_{\mA})_{:,l}^\top\left(\frac{1}{d}X_{\mA}^\top X_{\mA}\right)^{-1}(\tilde{S}_{\mA})_{:,l}
\end{equation}
Also, $\|(\tilde{S}_{\mA})_{:,l}\|_2^2\leq k_j$. Therefore, Lemma~\ref{l3} can be used to bound~\eqref{Tvar} as follows:
\begin{align}\label{eq56}
(\tilde{S}_{\mA})_{:,l}^\top\left(\frac{1}{d}X_{\mA}^\top X_{\mA}\right)^{-1}(\tilde{S}_{\mA})_{:,l}&\leq \sigma_{\max}^2k_j\left\|\frac{1}{d}(\frac{1}{d}X_{\mA}^\top X_{\mA})^{-1}\right\|_2\nonumber\\
&\leq \sigma_{\max}^2k_j\left(\frac{1}{d}\frac{8}{\Lambda_{\min}}+\frac{1}{d}\left\|\Sigma_{\mathcal{A}, \mathcal{A}}^{-1}\right\|_2\right)\nonumber\\
&\leq \sigma_{\max}^2k_j\left(\frac{1}{d}\cdot\frac{8}{\Lambda_{\min}}+\frac{1}{d}\cdot\frac{1}{\Lambda_{\min}}\right)\nonumber\\
&\leq \frac{9\sigma_{\max}^2k_j}{\Lambda_{\min}d}
\end{align}
with probability of at least $1-2\exp(-d/{2})$. Similar to the arguments made for bounding $\max_{i\in\mA^c}Z_1^{(i)}$, one can verify that
\begin{align}
\mathbb{P}\left(\max_{i\in\mA^c}Z_3^{(i)}\!\!<\!\! \gamma/2\right)\!\geq 1\!-\!2&\exp\Big(-\frac{\Lambda_{\min}d\gamma^2}{72\sigma_{\max}^2 k_jD_j}\!+\!D_j+\!\log(\bar{n}+\bar{m})\Big)- 2\exp\Big(-\frac{d}{2}\Big)
\end{align}
Now, choosing 
\begin{equation}
d\geq \frac{72c_2\sigma_{\max}^2 k_{j}D_j}{\Lambda_{\min}\gamma^2}\cdot(D_j+\log(\bar{n}+\bar{m}))
\end{equation}
for some arbitrary constant $c_2>1$ results in
\begin{align}
\mathbb{P}\left(\max_{i\in\mA^c}Z_3^{(i)}\!<\! \gamma/2\right)\!\geq& 1\!-\!4\exp(-(c_2-1)(D_j\!+\!\log(\bar{n}+\bar{m})))
\end{align}
Therefore, $\max_{i\in\mA^c}\|\tilde{S}^{(i)}\|_1<1$ and, hence, PDW succeeds with a probability that is lower bounded by~\eqref{p3}.\qed

\subsection{Proof of Lemma~\ref{errorbound}:}

In order to bound the estimation error, an upper bound on $\|E\|_\infty$ will be derived, conditioning on the success of the PDW approach. Note that $E_{\mA^c} = 0$ according to Lemma~\ref{lerror} and, hence, it suffices to bound $\|E_{\mA}\|_\infty$. Again, due to Lemma~\ref{lerror}, one can write:
\begin{align}
\!\!\!\!\max_{k = 1,...,n_j}\|(E_{\mA})_{:,k}\|_\infty&\!\leq\!\! \max_{k = 1,...,n_j}\!\!\underbrace{\left\|(\frac{1}{d}X_{\mA}^\top X_{\mA})^{-1}\frac{1}{d}X_{\mA}^\top W_{:,k}\right\|_\infty}_{Z_4^k}+\!\max_{k = 1,...,n_j}\underbrace{\left\|(\frac{1}{d}X_{\mA}^\top X_{\mA})^{-1}\lambda_d(\tilde{S}_{\mA})_{:,k}\right\|_\infty}_{Z_5^k}
\end{align}
for $k = 1,2,...,n_j$. For bounding $Z_5^k$, it can be argued similarly to~\eqref{eq56} that
\begin{align}\label{eq61}
\max_{k = 1,...,n_j}Z_5^k\leq&\max_{k = 1,...,n_j} \left\|\left((\frac{1}{d}X_{\mA}^\top X_{\mA})^{-1}-\Sigma_{\mathcal{A}, \mathcal{A}}^{-1}\right)\lambda_d(\tilde{S}_{\mA})_{:,k}\right\|_\infty+\max_{k = 1,...,n_j}\left\|\Sigma_{\mathcal{A}, \mathcal{A}}^{-1}\lambda_d(\tilde{S}_{\mA})_{:,k}\right\|_\infty\nonumber\\
\leq& \left\|(\frac{1}{d}X_{\mA}^\top X_{\mA})^{-1}-\Sigma_{\mathcal{A}, \mathcal{A}}^{-1}\right\|_2\lambda_d\sqrt{k_j}+\frac{\lambda_d}{\Lambda_{\min}}\nonumber\\
\leq& \frac{\lambda_d}{\Lambda_{\min}}\left({8\sqrt{k_j}\sqrt{\frac{k_jn_j+c_3\log(\bar{n}+\bar{m})}{d}}}+1\right)
\end{align}
for some $c_3> 0$ with probability of at least $1-2\exp(-({k_jn_j+c_3\log(\bar{n}+\bar{m})})/{2})$, where we have used the matrix norm properties and Lemma~\ref{l3} with $t = {k_jn_j+c_3\log(\bar{n}+\bar{m})}$ (note that $|I(\mA)|\leq k_jn_j$). Now, it remains to bound $\max_{k = 1,...,n_j}Z_4^k$. This can be carried out similar to the previous arguments, i.e., by making use of~\eqref{eq56} and obtaining a sub-Gaussian parameter for $(\frac{1}{d}X_{\mA}^\top X_{\mA})^{-1}\frac{1}{d}X_{\mA}^\top W_{:,k}$. For brevity, only the final key relation is stated below:
\begin{align}
\mathbb{P}(\max_{k = 1,...,n_j}\!Z_4^k\!\geq\! \zeta)\leq& 2\exp\Big(\!-\!\frac{d\Lambda_{\min}\zeta^2}{18\sigma_w^2}\!+\!\log n_j+\!\log(k_jn_j)\Big)+2\exp(\!-\!{d}/{2})\nonumber\\
\leq& 2\exp\Big(\!-\!\frac{d\Lambda_{\min}\zeta^2}{18\sigma_w^2}\!+\!2(\alpha_n\!+\!\alpha_k)\!\log(\bar{n}\!+\!\bar{m})\Big)+2\exp(\!-\!{d}/{2})
\end{align}
where the last inequality is due to the assumption that $n_j\leq n_{\max} = O\left((\bar{n}+\bar{m})^{\alpha_n}\right)$ and $k_j\leq k_{\max} = O\left((\bar{n}+\bar{m})^{\alpha_k}\right)$. Now, setting 
\begin{equation}
\zeta = \sqrt{\frac{36c_4(\alpha_n+\alpha_k)\sigma_w^2\log(\bar{n}+\bar{m})}{d\Lambda_{\min}}}
\end{equation}
for an arbitrary constant $c_4>1$, together with the inequality $\log r_1\leq \log(k_jD_j)$, leads to
\begin{align}
\max_{k = 1,...,n_j}\!\!Z_4^k\!\leq\! \sqrt{\frac{36c_4(\alpha_n+\alpha_k)\sigma_w^2\log(\bar{n}+\bar{m})}{d\Lambda_{\min}}}
\end{align}
with probability of at least
\begin{equation}
1-2\exp\left(-2(c_4-1)(\alpha_n+\alpha_k)\log(\bar{n}+\bar{m})\right)-2\exp(-{d}/{2})
\end{equation}
Combining this inequality with~\eqref{eq61} results in the elementwise error bound~\eqref{element_error} with probability of at least~\eqref{p4}. This concludes the proof.\qed

\subsection{Proof of Theorem~\ref{thm2}:}

First, we present the sketch of the proof in a few steps:
\begin{itemize}
	\item[1.] We decompose the block-regularized problem \eqref{lasso2} into $\bar{n}$ disjoint block-regularized subproblems defined in~\eqref{lasso3}.
	\item[2.] For each of these subproblems, we consider the event that Lemmas~\ref{sparsity} and~\ref{errorbound} hold. 
	\item[3.] We consider the intersection of these $\bar{n}$ events and show that, together with~\eqref{lambda2} and~\eqref{d22}, they lead to the element-wise error~\eqref{error2} with probability of at least $1-\delta$. 
\end{itemize}

\textit{Step 1:} \eqref{lasso2} can be rewritten as follows:
\begin{equation}\label{lasso4}
\hat{\Theta} = \arg\min_{\Theta}\sum_{j=1}^{n}\left(\frac{1}{2d}\|Y^{(:,j)}-X\Theta^{(:,j)}\|_2^2 + \lambda\|\Theta^{(:,j)}\|_\bl\right)
\end{equation}
The above optimization problem can be naturally decomposed into $\bar{n}$ disjoint block-regularized subproblems in the form of~\eqref{lasso3}.

\textit{Step 2:} Assume that~\eqref{d3} and~\eqref{lambda3} hold
for every $1\leq j\leq \bar{n}$. Upon defining $\mathcal{T}_j$ as the event that  Lemmas~\ref{sparsity} and~\ref{errorbound} hold, one can write:
\begin{equation}\label{p33}
\begin{aligned}
\mathbb{P}(\mathcal{T}_j)\geq 1&-5\exp\big(-(c_1-1)(D_j+\log(\bar{n}+\bar{m}))\big)\\
&-4\exp\big(-(c_2-1)(D_j+\log(\bar{n}+\bar{m}))\big)\\
&-2\exp(-({k_jn_j+c_3\log(\bar{n}+\bar{m})})/{2})\\
&-2\exp\big(-2(c_4-1)(\alpha_n+\alpha_k)\log(\bar{n}+\bar{m}))\big)
\end{aligned}
\end{equation}
For every $1\leq j\leq \bar{n}$.

\textit{Step 3:} Assume that $c_1, c_2, c_4>2$ and $c_3>1$. Consider the event $\mathcal{T} = \mathcal{T}_1\cap\mathcal{T}_2\cap\cdots\cap\mathcal{T}_n$. Based on~\eqref{p33} and Lemma~\ref{l4}, one can write:
\begin{align}\label{probsum2}
\mathbb{P}(\mathcal{T})\!\geq\!
1&\!-\!\underbrace{K_1(\bar{n}+\bar{m})^{-(c_1-2)}}_{(a)}\!-\!\underbrace{K_2(\bar{n}+\bar{m})^{-(c_2-2)}}_{(b)}\!-\!\underbrace{K_3(\bar{n}+\bar{m})^{-(\frac{c_3}{2}-1)}}_{(c)}\!-\!\underbrace{K_4(\bar{n}+\bar{m})^{-(2(\alpha_n+\alpha_k)(c_4-1)-1)}}_{(d)}
\end{align}
for some constants $K_1, K_2, K_3, K_4$. One can easily verify that the following equalities are enough to guarantee that the right hand side of~\eqref{probsum2} is equal to $1-\delta$:
\begin{align}\label{eqc}
	& c_1 = \frac{\log(4K_1/\delta)}{\log(\bar{n}+\bar{m})}+2,\nonumber\\
	& c_2 = \frac{\log(4K_2/\delta)}{\log(\bar{n}+\bar{m})}+2,\nonumber\\
	& c_3 = c_1 = \frac{2\log(4K_3/\delta)}{\log(\bar{n}+\bar{m})}+2,\nonumber\\
	&c_4 = \frac{\log(4K_4/\delta)}{2(\alpha_n+\alpha_k)\log(\bar{n}+\bar{m})}+\frac{1}{2(\alpha_n+\alpha_k)}+1.
\end{align}
Substituting~\eqref{eqc} in Lemmas~\ref{sparsity} and~\ref{errorbound} leads to two observations:
\begin{itemize}
	\item[-] If $\lambda_d$ and $d$ satisfy~\eqref{lambda2} and~\eqref{d22}, then they also satisfy~\eqref{lambda3} and~\eqref{d3}. 
	\item[-] The parameter $g$ defined in~\eqref{error2} is greater than or equal to $g_j$ for every $j = 1,...,\bar{n}$.
\end{itemize} 
Therefore,~\eqref{lambda2} and~\eqref{d22} guarantee that: 1) $\hat{\Theta}$ is unique and does not have any false positive in its blocks, and 2) its element-wise error is upper bounded by~\eqref{error2}. Now, it only remains to show that $\hat{\Theta}$ excludes false negatives (the blocks that are mistakenly estimated to have nonzero values). To this goal, it suffices to show that~\eqref{d22} guarantees $g<t_{\min}$. Suppose that 
\begin{equation}
	d = \Omega\left(C_\Theta\kappa(\tilde{\Sigma})^2k_{\max}\left(D\log(\bar{n}+\bar{m})+D^2\log(1/\delta)\right)\right)
\end{equation}
In what follows, we will show that $C_\Theta = O(1)$ is enough to have $g<t_{\min}$. The lower bound on $d$ in \eqref{d22} yields that
\begin{equation}
	g\leq K\left(\frac{1}{\sqrt{C_{\Theta}k_{\max}}}+\frac{1}{C_{\Theta}\kappa(\tilde{\Sigma})}\right)
\end{equation}
for some constant $K$. Therefore,
\begin{equation}
	C_{\Theta} =  \frac{2/K}{t_{\min}\kappa(\tilde{\Sigma})}+\frac{4/K}{t_{\min}^2k_{\max}} = O(1)
\end{equation}
is enough to ensure $g<t_{\min}$. This completes the proof. \qed

\subsection{Proof of Theorem~\ref{thm1}:}

The proof immediately follows from Theorem~\ref{thm2} by noting that $D = n_{\max} = 1$, $\bar{n} = n$, and $\bar{m} = m$.\qed

\section{Numerical Results}\label{sec:num}

In this section, we illustrate the performance of the block-regularized estimator and compare it with its least-squares counterpart. We consider three case studies on synthetically generated systems, physical mass-spring networks, and multi-agent systems. 
The simulations are run on a laptop computer with an Intel Core i7 quad-core 2.50 GHz CPU and 16GB RAM. The reported results are for a serial implementation in MATLAB R2017b, and the function \texttt{lasso} is used to solve~\eqref{lasso2}. Define the (block) mismatch error as the total number of false positives and false negatives in the (block) sparsity pattern of the estimator. Moreover, define \textit{relative number of sample trajectories} (RST) as the number of sample trajectories normalized by the dimension of the system, and \textit{relative (block) mismatch error} (RME) as the mismatch error normalized by total number of elements (blocks) in $\Theta$.
To verify the developed theoretical results, $\lambda_d$ is set to
\begin{equation}\label{lambda33}
\sqrt{\frac{2(D^2+D\log(\bar{n}+\bar{m}))}{d}}
\end{equation}
in all of the experiments. Note that this choice of $\lambda_d$ does not require any additional fine-tuning.

\subsection{Case Study 1: Synthetically Generated Systems}
Given the numbers $n$ and $w$, and for each instance of the problem, the state and input matrices are constructed as follows: The size of each block in $A$ and $B$ is set to 1. The diagonal elements of $A\in\mathbb{R}^{n\times n}$ and $B\in\mathbb{R}^{n\times n}$ are set to 1 (the dimensions of the inputs and states are chosen to be equal). The elements of the first $w$ upper and lower diagonals of $A$ and $B$ are set to $0.3$ or $-0.3$ with equal probability. Furthermore, at each row of $A$, another $w$ elements are randomly chosen from the elements not belonging to the first $w$ upper and lower diagonals and set to $0.3$ or $-0.3$ with equal probability. The structure of $[A\ \ B]$ is visualized in Figure~\ref{fig_sp} for $w = 2,3,4$. Based on this procedure, the number of nonzero elements at each row of $[A\ \ B]$ varies between $3w+2$ and $5w+2$. We set $\Sigma_u = I$ and $\Sigma_w = 0.5I$.
The mutual incoherence property is satisfied for all constructed instances.

In the first set of experiments, we consider the mismatch error of $\hat\Theta$ with respect to the number of sample trajectories and for different system dimensions. The length of the time horizon $T$ is set to $3$. The results are illustrated in Figure~\ref{fig_d} for $n+m$ equal to $200$, $600$, $1200$, and $2000$. In all of these test cases, $w$ is chosen in such a way that the number of nonzero elements in each column of $\Theta$ is between $(n+m)^{0.3}$ and $(n+m)^{0.4}$. It can be observed that as the dimension of the system increases, a higher number of sample trajectories is required to have a small mismatch error in the block-regularized estimator. Conversely, the required value of RST to achieve a small RME reduces as the dimension of the system grows. More precisely, RST should be at least $1.80$, $1.13$, $0.37$, and $0.20$ to guarantee $\text{RME}\leq 0.1\%$, when $m+n$ is equal to $200$, $600$, $1200$, and $2000$, respectively. 

In the next set of experiments, we consider the mismatch error for different time horizons $T = 3,4,...,7$, by fixing $m+n = 600$ and $w = 2$. As mentioned before, large values of $T$ tend to inflate the easily identifiable modes of the system and suppress the nearly hidden ones, thereby making it hard to obtain an accurate estimation of the parameters. It is pointed out that $\kappa(F_TF_T^\top+G_TG_T^\top)$ is a good indicator of the gap between these modes. This relationship is clearly reflected in Figures~\ref{fig_T} and~\ref{fig_T_cond}. As can be observed in Figure~\ref{fig_T}, $330$ sample trajectories are enough to guarantee $\text{RME}\leq 0.1\%$ for $T=3$. However, for $T=7$, RME cannot be reduced below $0.42\%$ even with 1000 sample trajectories. To further elaborate on this dependency, Figure~\ref{fig_T_cond} is used to illustrate the value of $\kappa(F_TF_T^\top+G_TG_T^\top)$ with respect to $T$ in a $\log$-$\log$ scale. One can easily verify that $\kappa(F_TF_T^\top+G_TG_T^\top)$ associated with $T = 7$ is $485$ times greater than this parameter for $T=3$.

Finally, we study the block-regularized estimator for different per-column numbers of nonzero elements in $\Theta$ and compare its accuracy to the least-squares estimator. Fixing $T = 3$ and $m+n = 600$, Figure~\ref{fig_w} depicts the mismatch error of the block-regularized estimator when the maximum number of nonzero elements at each column of $\Theta$ ranges from $7$ (corresponding to $w=1$) to $27$ (corresponding to $w=5$). Not surprisingly, the required number of samples to achieve a small mismatch error increases as the number of nonzero elements in each column of $\Theta$ grows. On the other hand, the least-squares estimator is fully dense in all of these experiments, regardless of the number of sample trajectories. To have a better comparison between the two estimators, we consider the $2$-norm of the estimation errors normalized by the $2$-norm of $\Theta^*$, for different numbers of nonzero elements in each column of $\Theta^*$. As it is evident in Figure~\ref{fig_w_error}, the block-regularized estimator significantly outperforms the least-squares one for any number of sample trajectories. Furthermore, the least-squares estimator is not defined for $d<600$.

\begin{figure*}
	\centering
	\subfloat[$w = 2$]{\label{fig_w2}
		\includegraphics[width=.3\columnwidth]{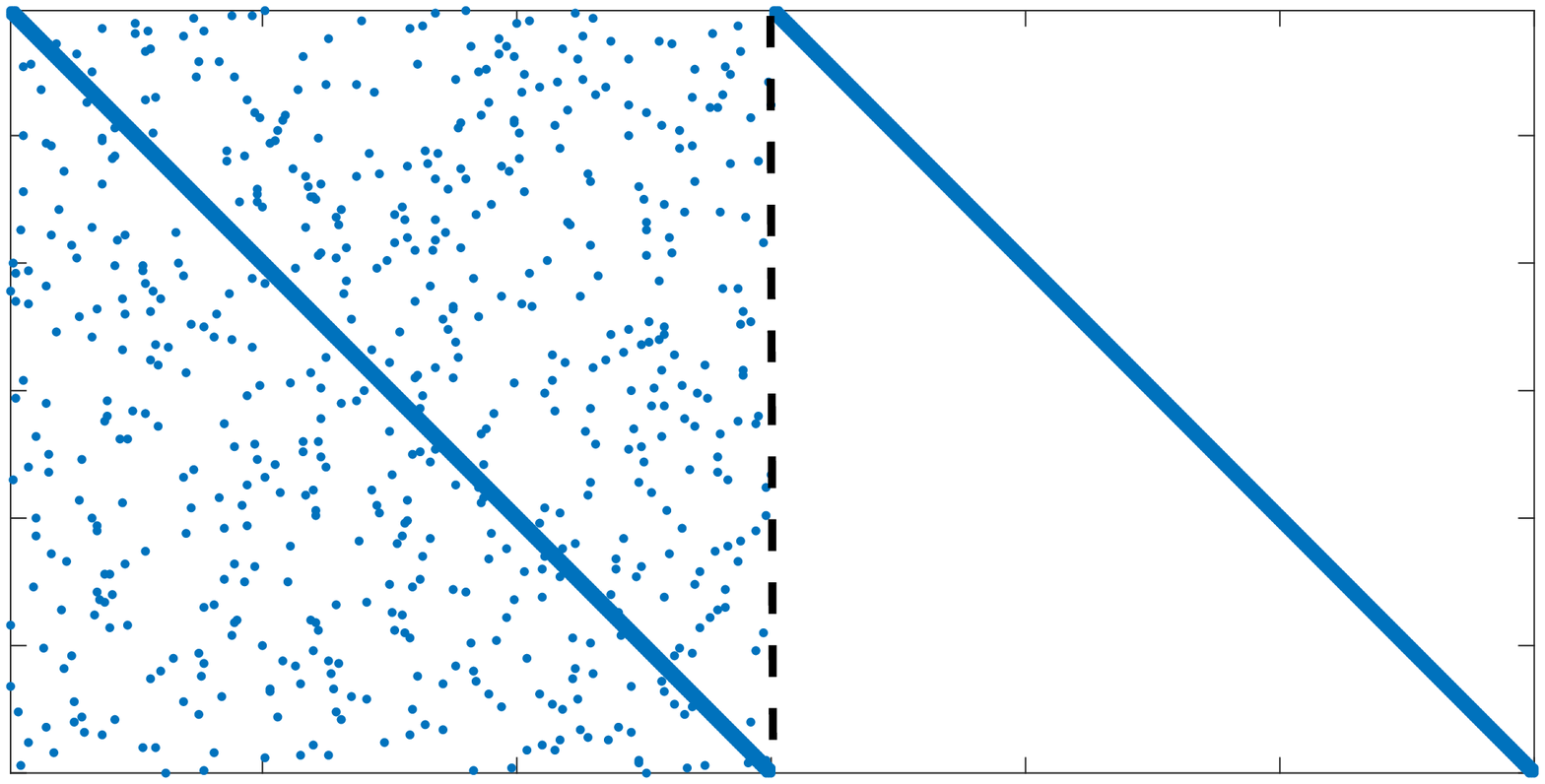}}
	\subfloat[$w = 3$]{\label{fig_w3}
		\includegraphics[width=.3\columnwidth]{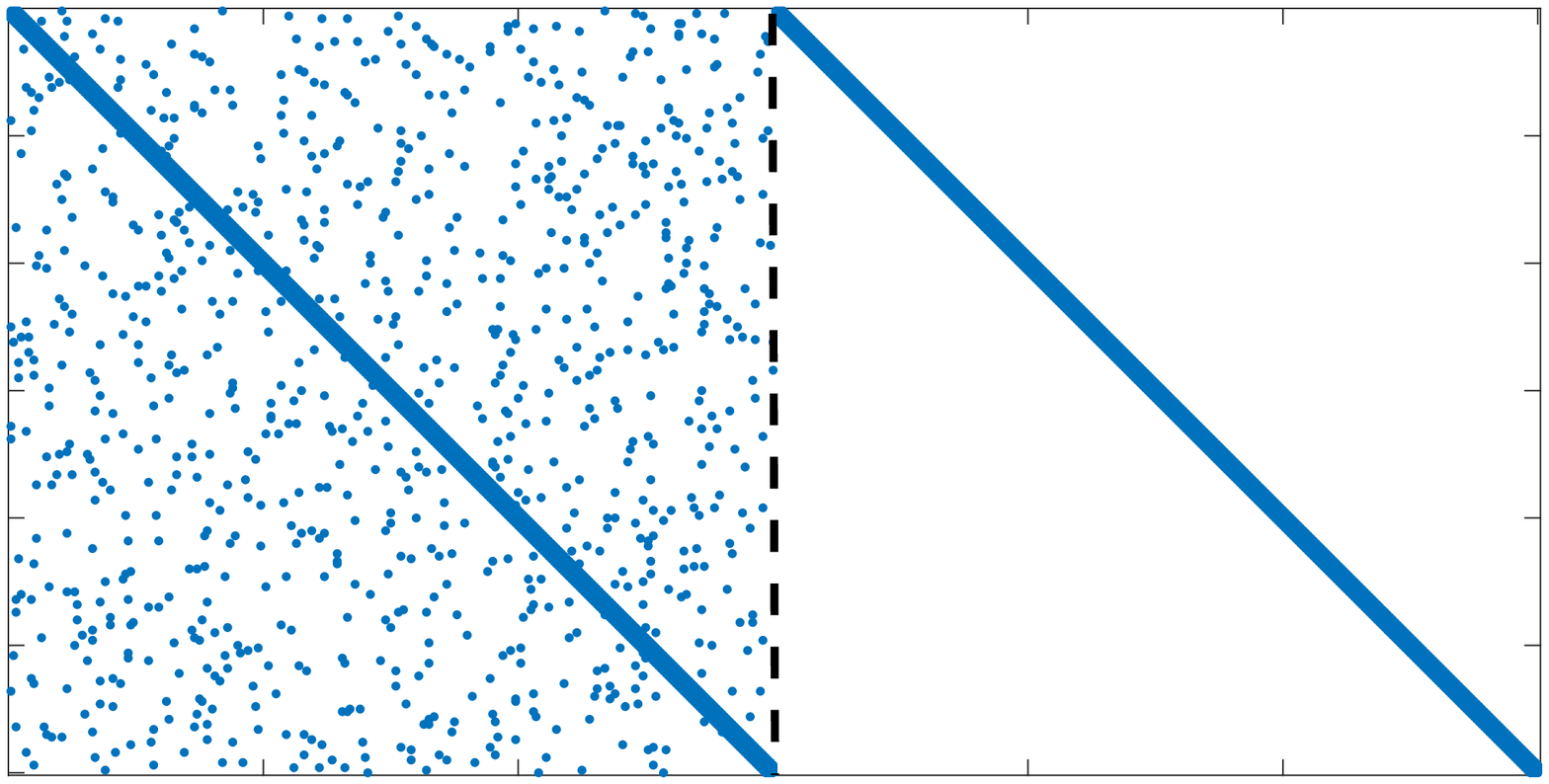}}
	\subfloat[$w = 4$]{\label{fig_w4}
		\includegraphics[width=.3\columnwidth]{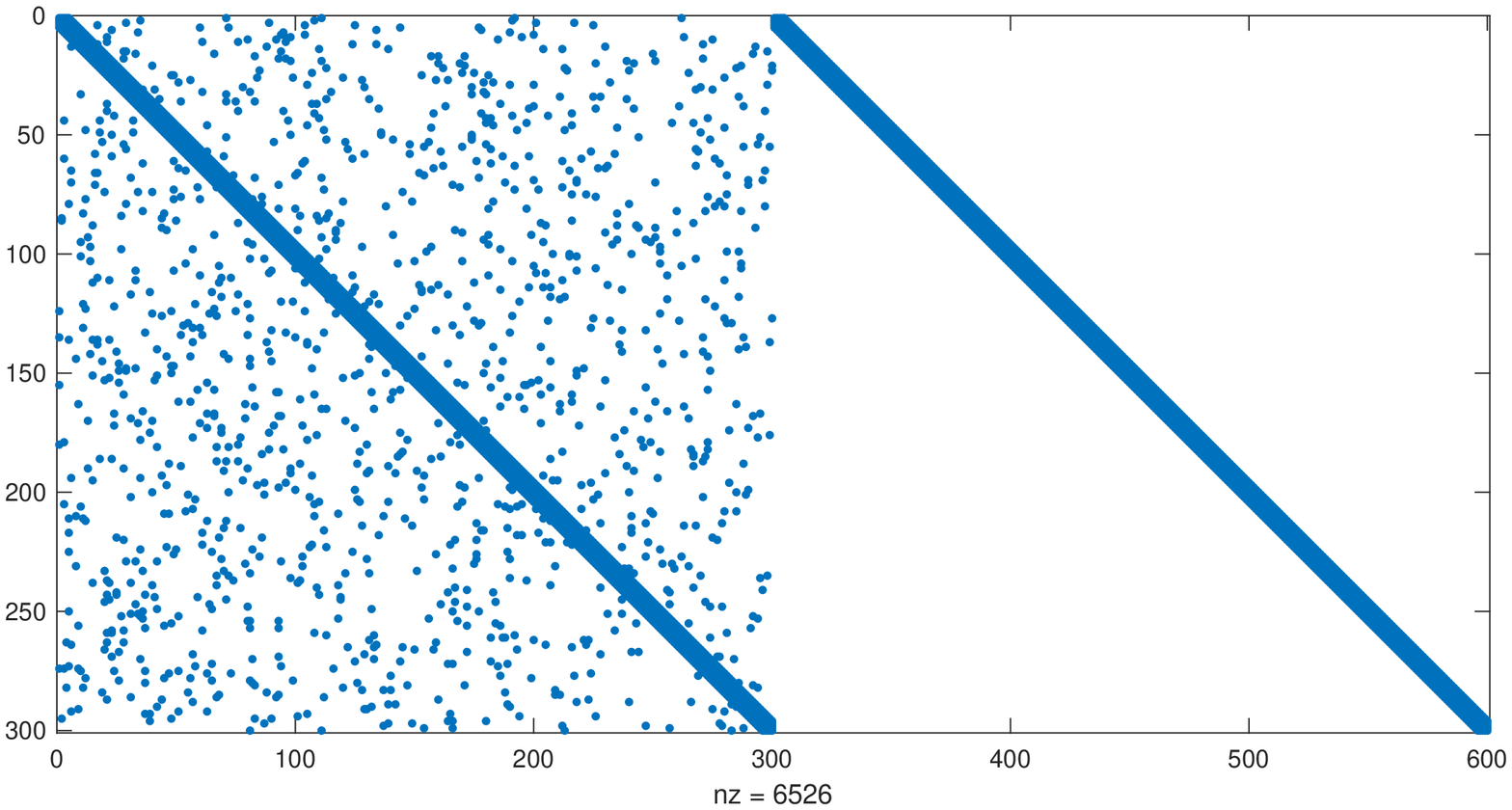}}
	\caption{ \footnotesize The sparsity structure of the matrix $[A\ \ B]$ for $w = 2,3,4$.}\label{fig_sp}
\end{figure*}

\begin{figure*}
	\centering
	\subfloat[Mismatch error for different $d$]{\label{fig_d}
		\includegraphics[width=.32\columnwidth]{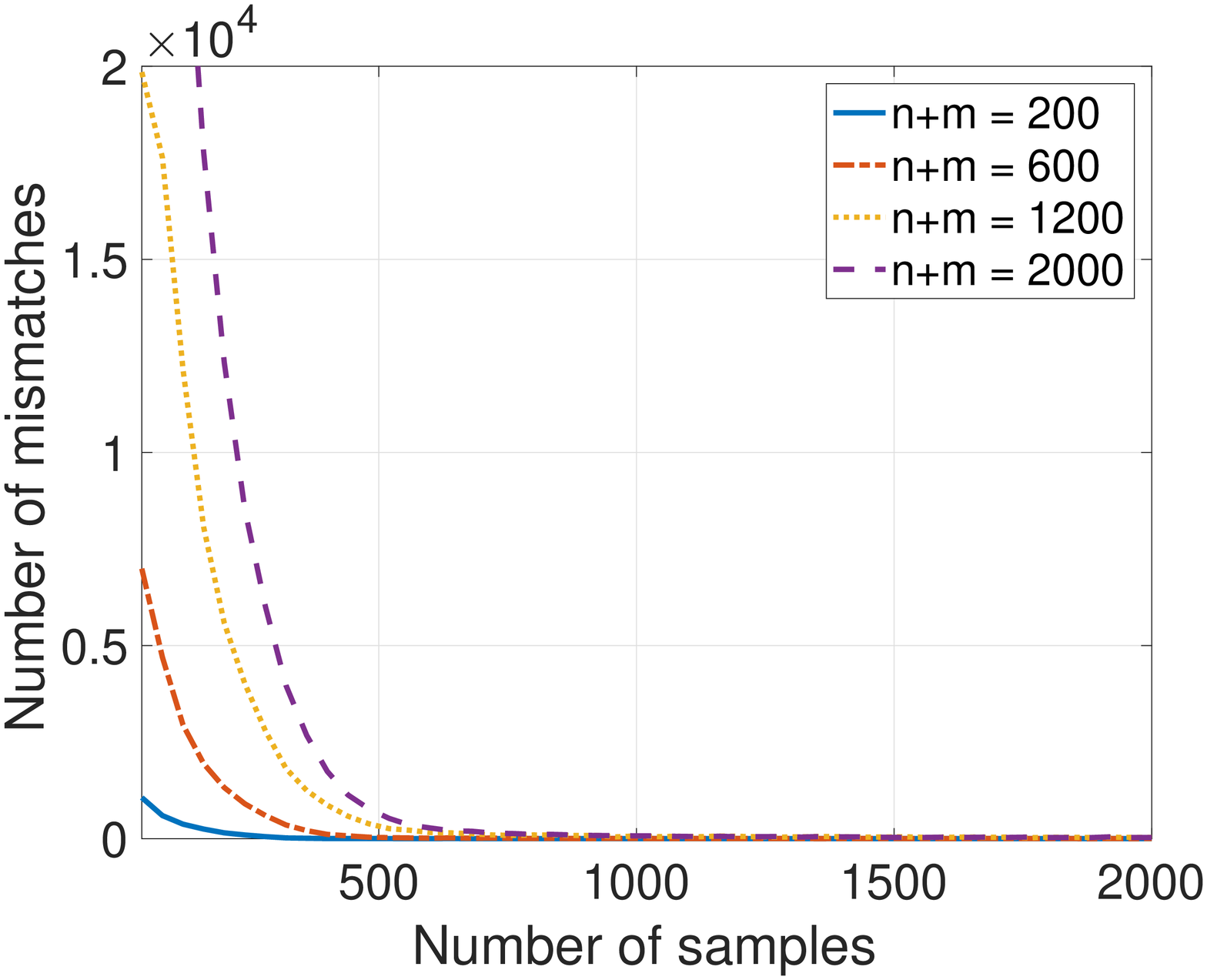}}
	\subfloat[Mismatch error for different $T$]{\label{fig_T}
		\includegraphics[width=.32\columnwidth]{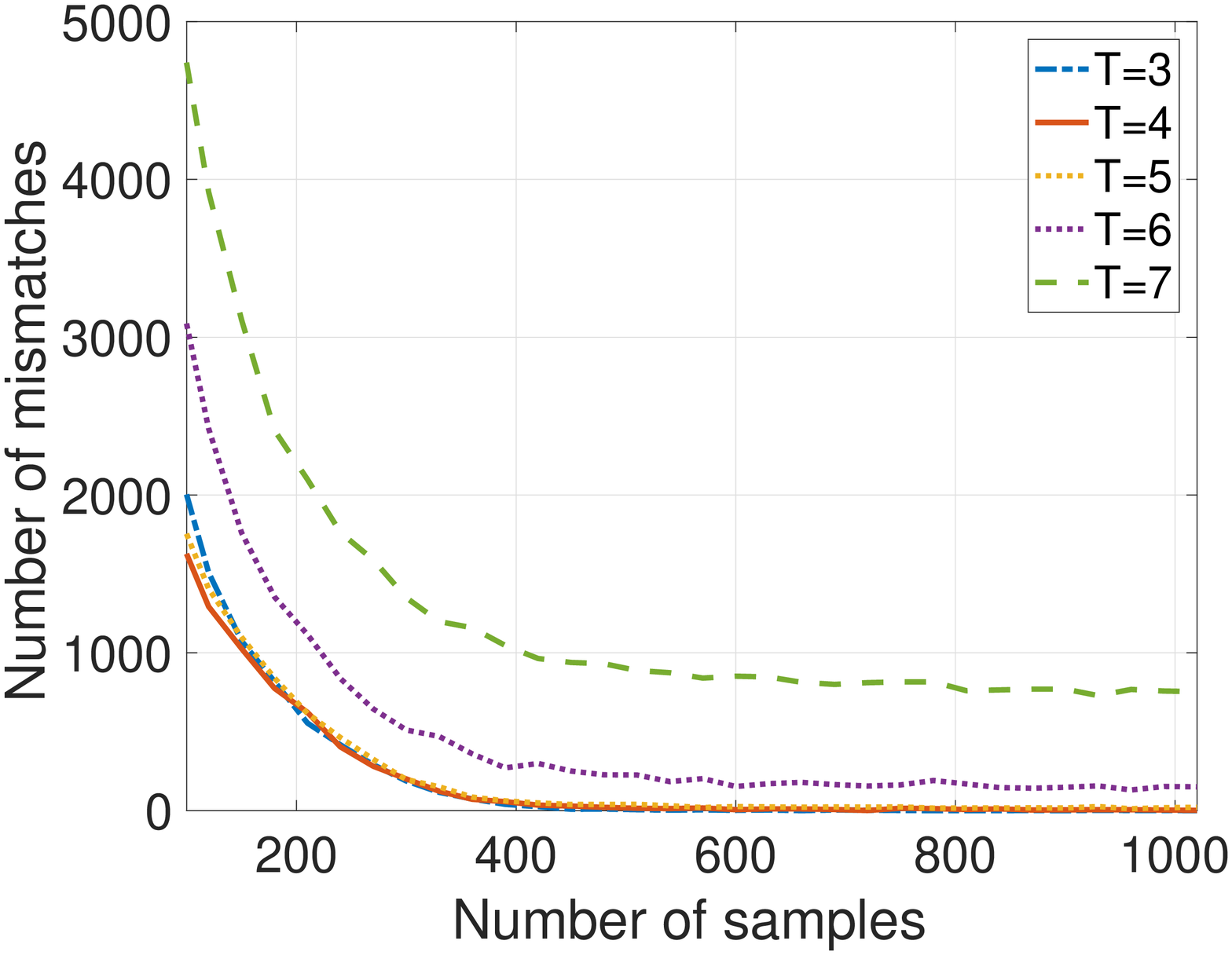}}
	\subfloat[Condition number for different $T$]{\label{fig_T_cond}
		\includegraphics[width=.32\columnwidth]{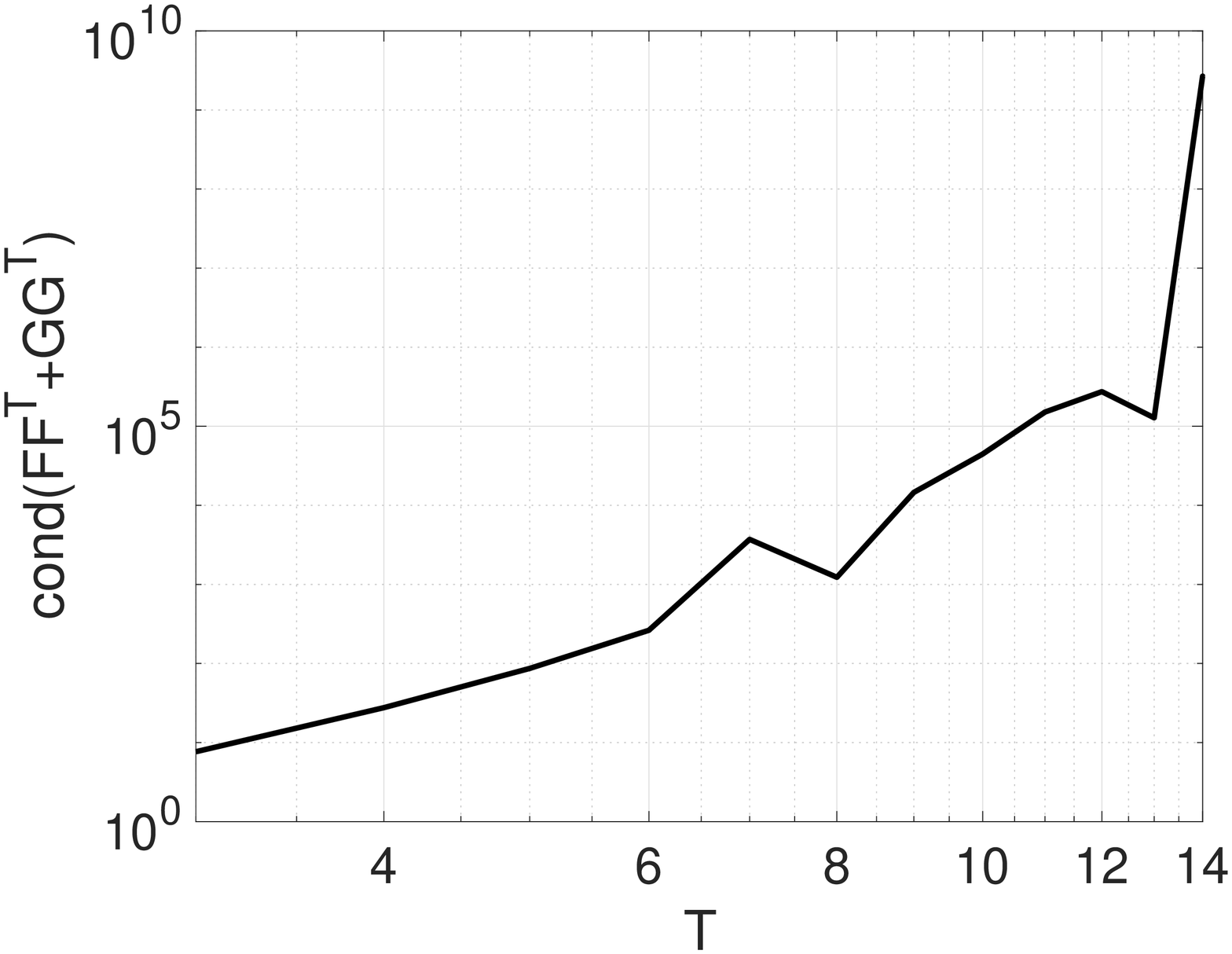}}
	\caption{ \footnotesize (a) The mismatch error with respect to the number of sample trajectories for different system dimensions, (b) the mismatch error with respect to the number of sample trajectories for different time horizons, (c) the condition number of $FF^\top+GG^\top$ with respect to the time horizon.}
\end{figure*}

\begin{figure*}
	\centering
	\subfloat[Mismatch error for different $k_{\max}$]{\label{fig_w}
		\includegraphics[width=.32\columnwidth]{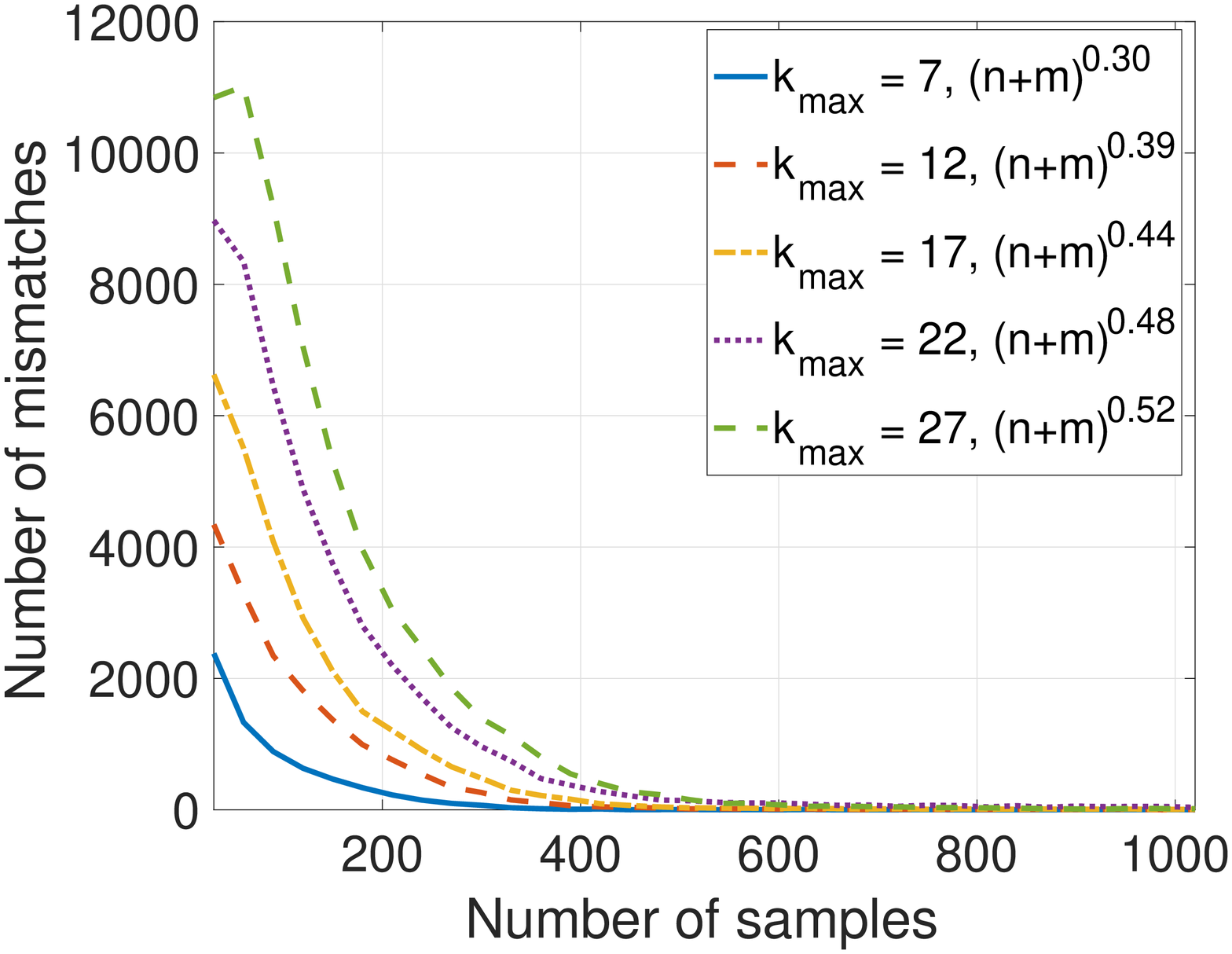}}
	\subfloat[Estimation error for different $k_{\max}$]{\label{fig_w_error}
		\includegraphics[width=.32\columnwidth]{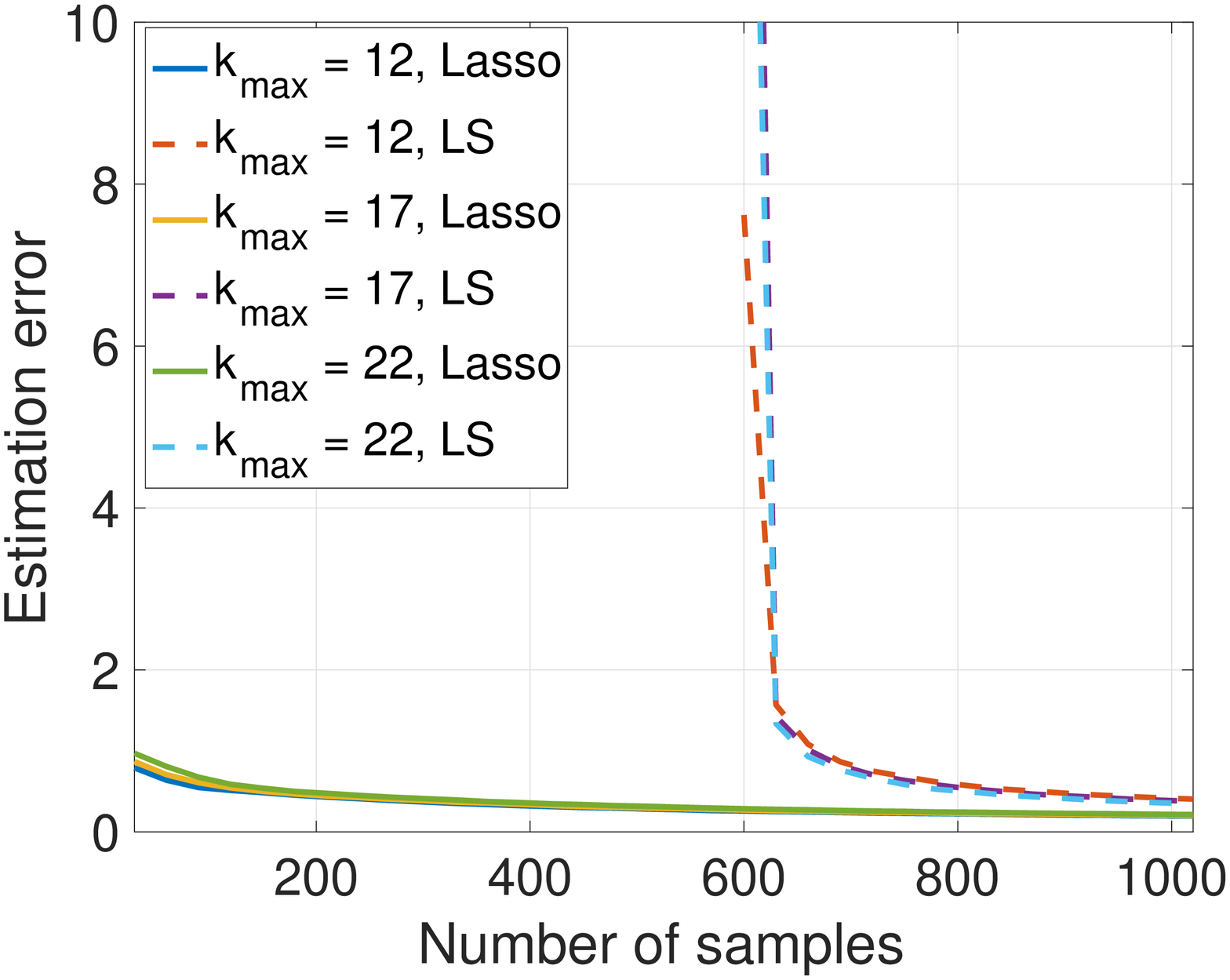}}
	\caption{ \footnotesize (a) The mismatch error with respect to the number of sample trajectories for different per-column number of nonzero elements in $\Theta^*$, (b) the normalized estimation error for Lasso and least-squares (abbreviated as LS) estimators with respect to the number of sample trajectories.}
\end{figure*}

\subsection{Case Study 2: Mass-Spring Systems}

\begin{figure}[t!]
	\centering
	\includegraphics[width=3in]{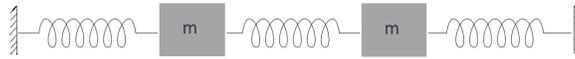}
	\caption[picture5]
	{ \footnotesize Mass-spring system with two masses}
	\label{fig_mass}
	\vspace{-1em}\end{figure}

In this case study, we conduct simulations on mass-spring networks with different sizes to elucidate the performance of the block-regularized estimator on physical systems. Consider $N$ identical masses connected via a path of springs. Figure~\ref{fig_mass} exemplifies this system for $N=2$. The state-space equation of this system in continuous domain can be written as
\begin{equation}
	\dot{x}_c(t) = A_cx(t)+ B_cu(t)
\end{equation}
where $A_c\in\mathbb{R}^{2N\times 2N}$, $B_c\in\mathbb{R}^{2N\times N}$, and $x_c(t)$ consists of two parts: the first $N$ elements correspond to the locations of the masses while the second $N$ elements capture their velocities.  With the unit masses and spring constants, we have
\begin{equation}
	A_c = \begin{bmatrix}
	0 & I\\
	S & 0
	\end{bmatrix},\quad B_c = \begin{bmatrix}
	0\\
	I
	\end{bmatrix}
\end{equation}
where $S\in\mathbb{R}^{n\times n}$ is a tridiagonal matrix whose diagonal elements are set to $-2$ and its first upper and lower diagonal elements are set to $1$~\cite{lin2013design}. This continuous system is discretized using the forward Euler method with sampling time of $0.2$ seconds. Similar to the previous case, $\Sigma_u$ and $\Sigma_w$ are set to $I$ and $0.5I$, respectively. Furthermore, $T$ is set to $3$ and $\lambda_d$ is chosen as~\eqref{lambda3}. The mutual incoherence property holds in all of these simulations. Notice that RST is equal to $3d/N$ in this case study, since $n = 2N$ and $m = N$. Figure~\ref{fig_mas_reld} depicts the required RST to achieve $\text{RME}\leq 0.1\%$ for different numbers of masses. If $N = 30$, the minimum number of sample trajectories to guarantee $\text{RME}\leq 0.1\%$ is $5$ times higher than the dimension of the system, whereas this number is dropped to $0.02$ for $N = 500$. Furthermore, Figure~\ref{fig_mas_error} shows the normalized estimation errors of the block-regularized and least-squares estimators for $N = 300$ and $T = 3$. Similar to the previous case study, the proposed estimator results in significantly smaller estimation errors in medium- and large-sampling regimes, while it remains as the only viable estimator when the number of available sample trajectories is smaller than $900$.

\begin{figure*}
	\centering
	\subfloat[Relative mismatch error]{\label{fig_mas_reld}
		\includegraphics[width=.32\columnwidth]{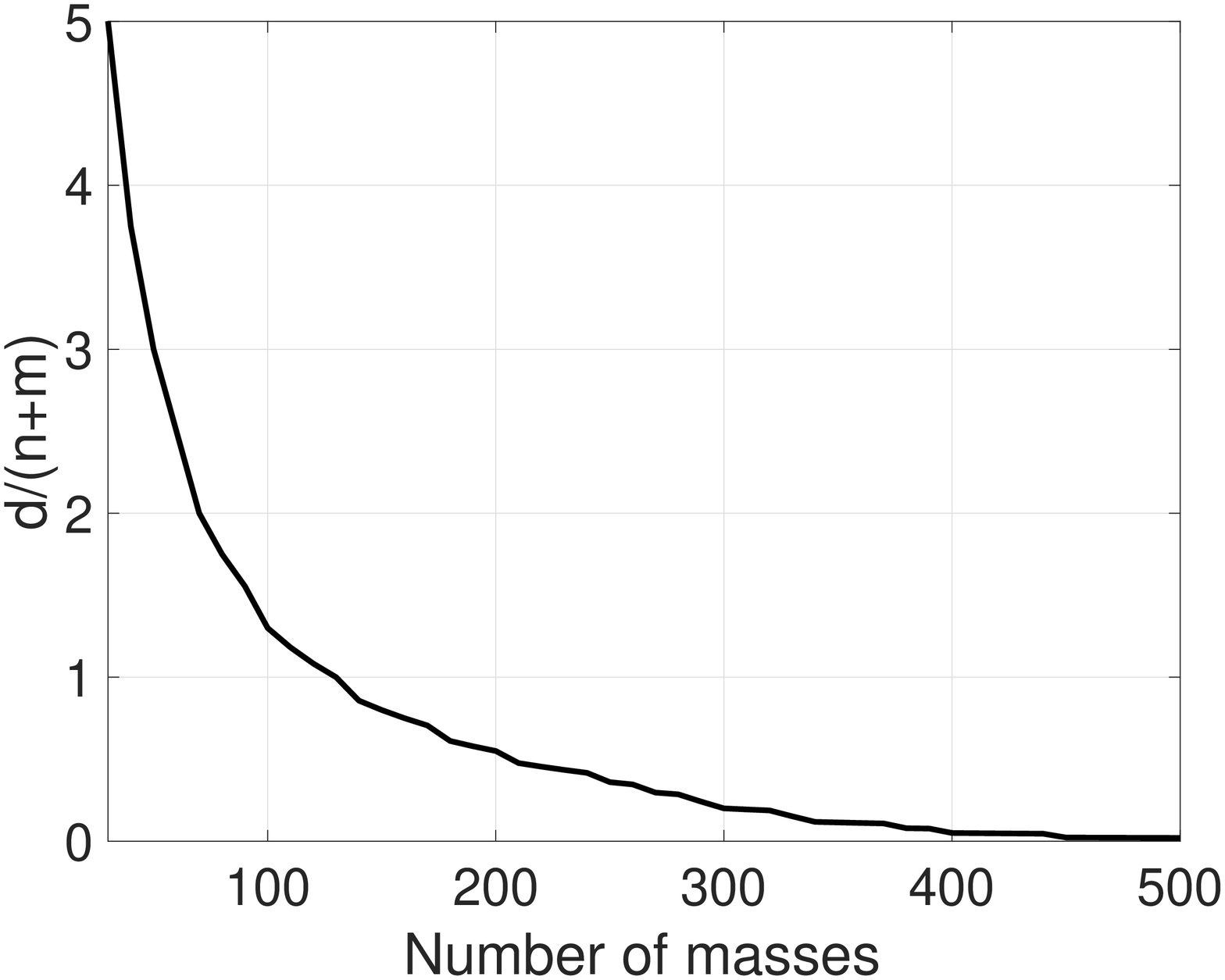}}
	\subfloat[Normalized estimation error]{\label{fig_mas_error}
		\includegraphics[width=.32\columnwidth]{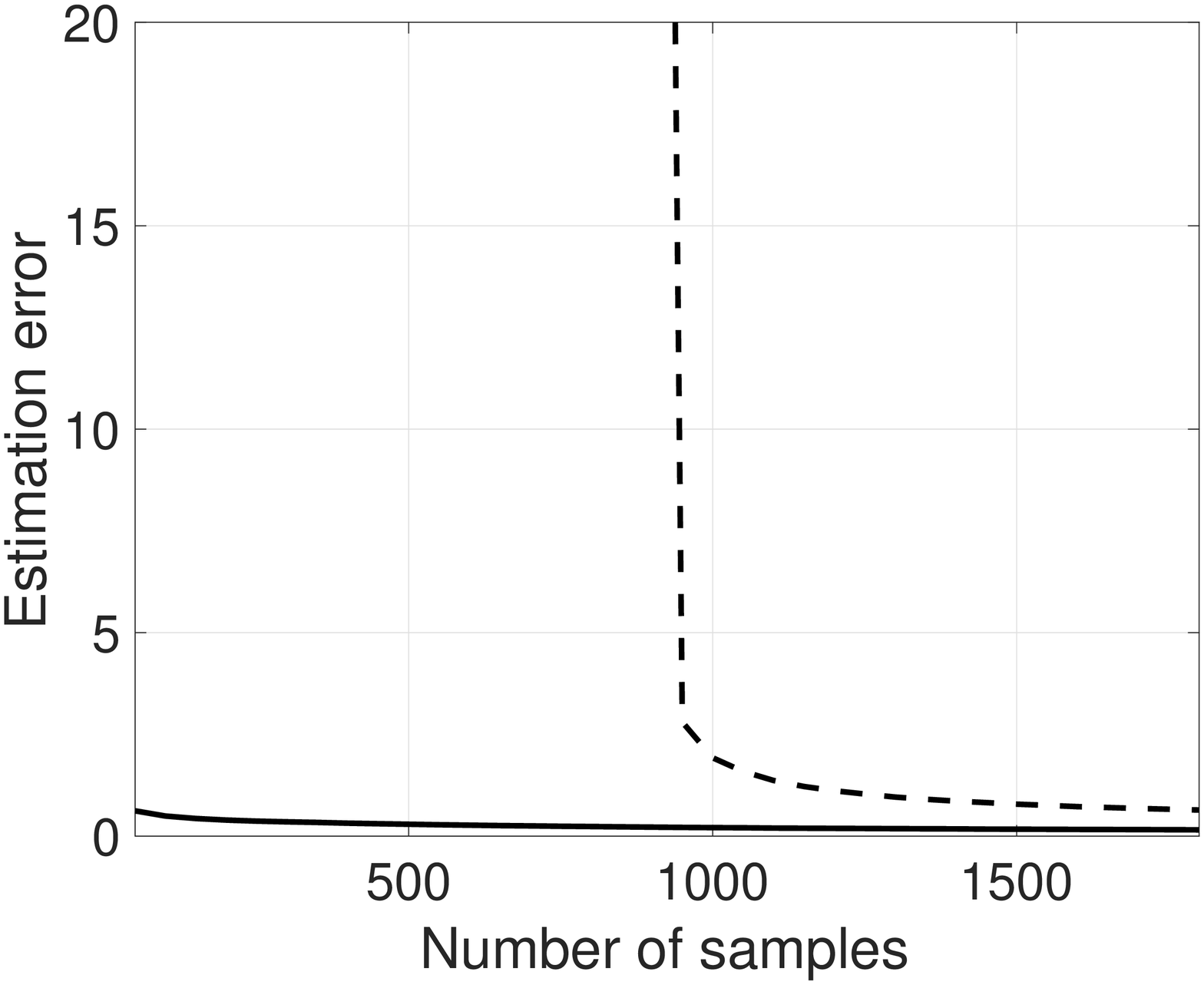}}
	\caption{ \footnotesize (a) The minimum RST to guarantee $\text{RME}\leq 0.1\%$ with respect to the number of masses, (b) the normalized estimation error for Lasso and least-squares estimators with respect to the number of sample trajectories.}
\end{figure*}

\subsection{Case Study 3: Switching Networks}
In this case study, we study a network of multi-agent systems that are interconnected through a switching information exchange topology. Recently, a special attention has been devoted to multi-agent systems with a time-varying network topology; in many communication networks, each sensor  has access only to the information of its neighbors. Therefore, when the location of these sensors changes over time, so does the topology of the interconnecting links~\cite{mesbahi2010graph}. The \textit{dwell time} is defined as the time interval in which the network topology is unchanged. The goal is to identify the structure of the network within the dwell time. The state-space equation of agent $i$ admits the following general form:
\begin{equation}
	\dot{x}_i(t) = \sum_{(i,j)\in\mathcal{N}_x(i)}A^{(i,j)}x_j(t) + \sum_{(i,j)\in\mathcal{N}_u(i)}B^{(i,j)}u_j(t)+w_i(t)
\end{equation}
where, as before, $A^{(i,j)}\in\mathbb{R}^{n_i\times n_i}$ and $B^{(i,j)}\in\mathbb{R}^{n_i\times m_i}$ are the $(i,j)^{\text{th}}$ blocks of $A$ and $B$. Furthermore, $\mathcal{N}_x(i)$ and $\mathcal{N}_u(i)$ are the sets of neighbors of agent $i$ whose respective state and input actions affect the state of agent $i$. 

We consider 200 agents connected through a randomly generated sparse network. In particular, we assume that each agent is connected to 5 other agents. If $j\in\mathcal{N}_x(i)$ or $j\in\mathcal{N}_u(i)$, then each element of $A^{(i,j)}$ or $B^{(i,j)}$ is randomly selected from $[-0.4\ -0.3]\cup [0.3\ 0.4]$. The behavior of the proposed block-regularized estimator will be examined for different dimensions of the agents. In particular, $(n_i, m_i)$ will be chosen from $\{(5,5), (8,8), (11,11)\}$. This entails that $D\in\{25, 64, 121\}$ and $(n,m) \in \{(1000,1000), (1600,1600), (2200,2200)\}$. Similar to the previous case study, $T$ is set to $3$ and the system is discretized using the forward Euler method with the sampling time of $0.2$ seconds. This implies that each sample trajectory is collected within $0.6$ seconds. The respective number of block mismatch and normalized estimation errors is depicted in Figures~\ref{fig_switch_mismatch} and~\ref{fig_switch_norm} with respect to the dwell time. It can be seen that as the dwell time becomes longer, a larger number of sample trajectories can be obtained, which in turn results in smaller block mismatch and normalized estimation errors. Furthermore, as the size of the blocks grows, an accurate identification of the system parameters requires more sample trajectories. In particular $195$, $278$, and $434$ sample trajectories are needed to achieve $\text{RME}\leq 0.1\%$ when $D$ is equal to $25$, $64$, and $121$, respectively. In contrast, the least-squares estimator requires at least $2000$, $3200$, and $4400$ sample trajectories to be uniquely defined. 

\begin{figure*}
	\centering
	\subfloat[Mismatch error]{\label{fig_switch_mismatch}
		\includegraphics[width=.32\columnwidth]{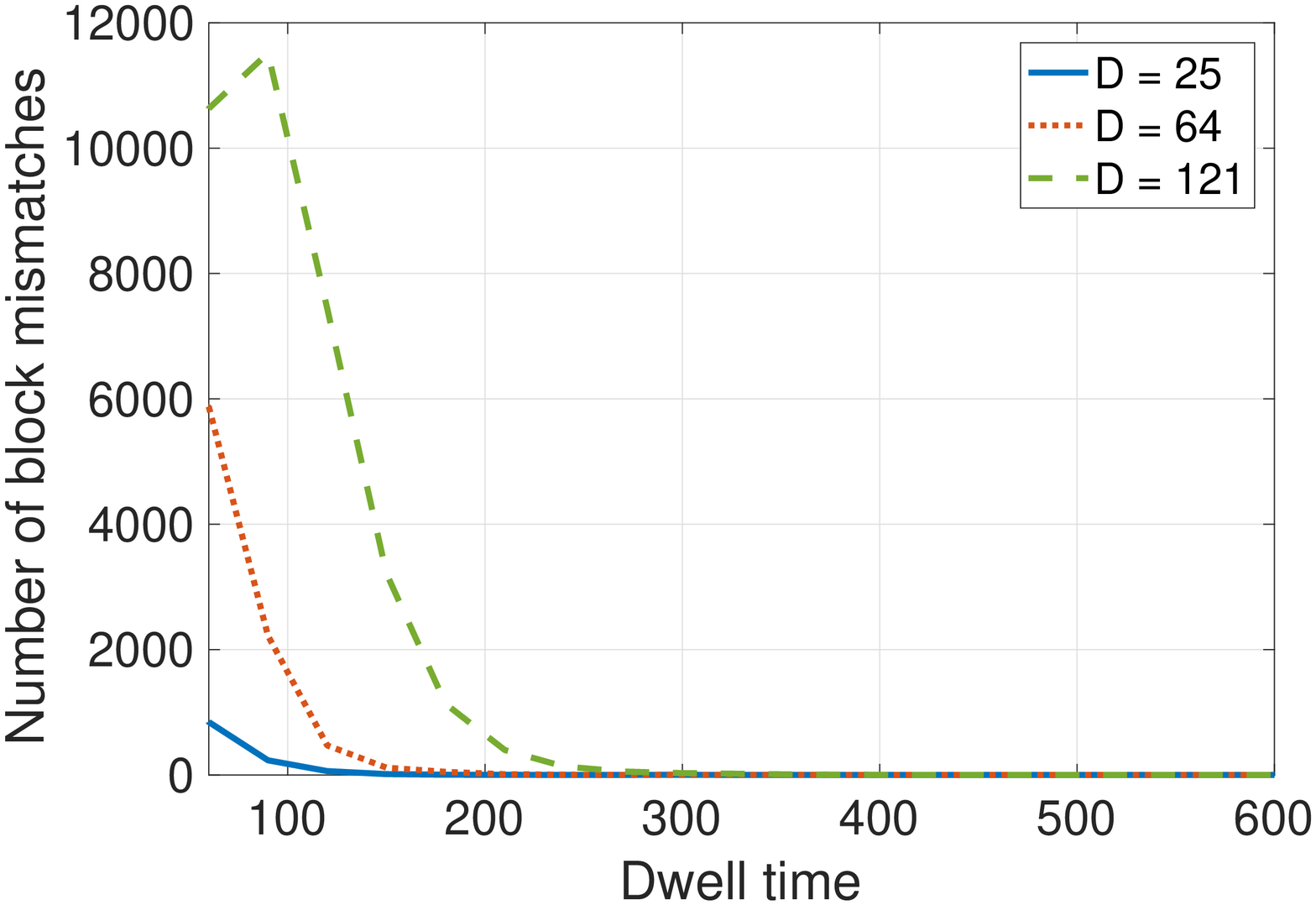}}
	\subfloat[Normalized estimation error]{\label{fig_switch_norm}
		\includegraphics[width=.32\columnwidth]{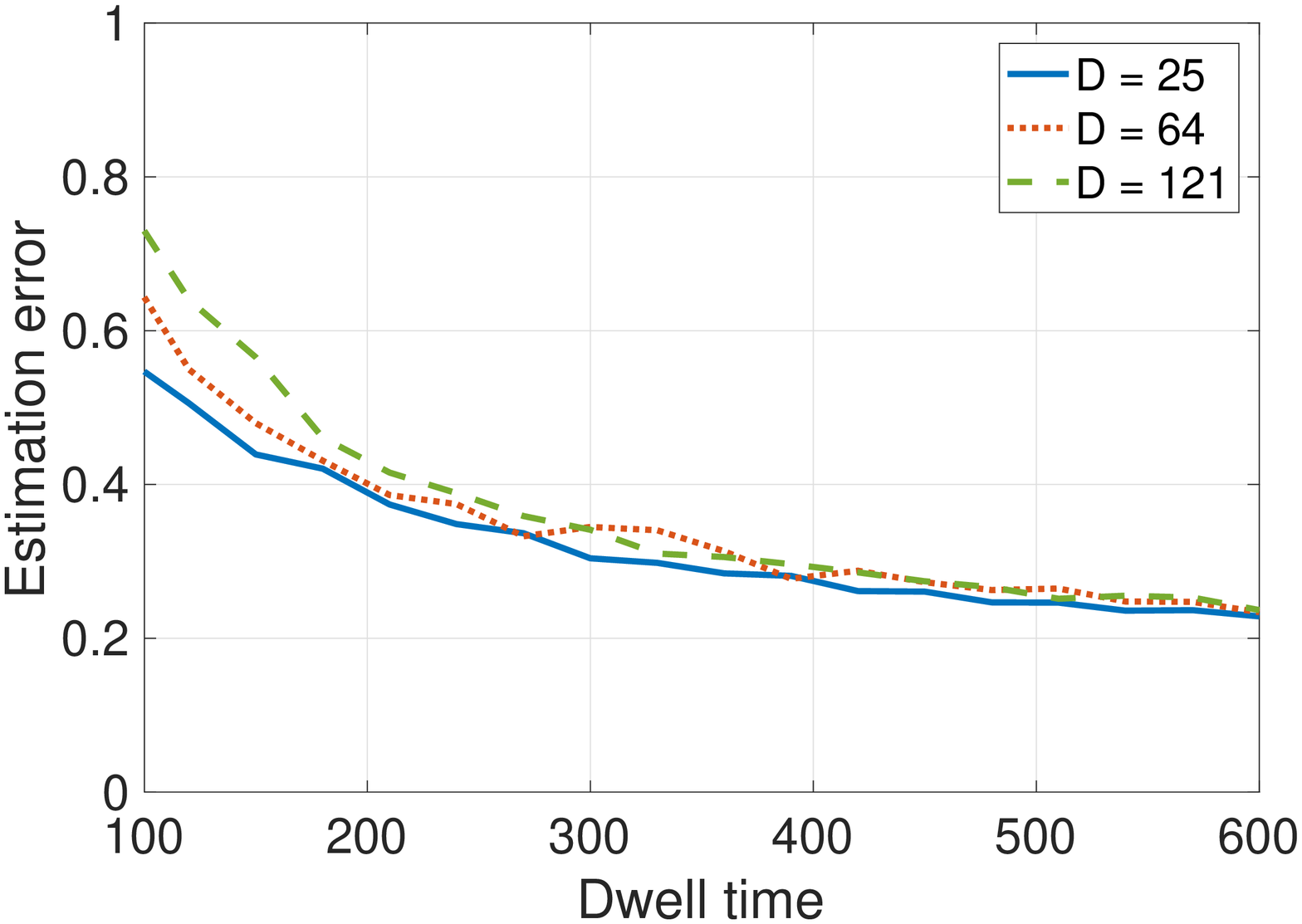}}
	\caption{ \footnotesize (a) The block mismatch error with respect to the dwell time for different block sizes in $\Theta^*$, (b) the normalized estimation error with respect to the dwell time for different block sizes in $\Theta^*$.}
\end{figure*}

\section{Conclusion}
We consider the problem of identifying the parameters of linear time-invariant (LTI) systems. In many real-world problems, the state-space equation describing the evolution of the system admits a block-sparse representation due to localized or internally limited interactions of its states and inputs. In this work, we leverage this property and introduce a block-regularized estimator to identify the sparse representation of the system. Using modern high-dimensional statistics, we derive sharp non-asymptotic bounds on the minimum number of input-state data samples to guarantee a small element-wise estimation error. In particular, we show that the number of available sample trajectories can be significantly smaller than the system dimension and yet, the proposed block-regularized estimator can correctly recover the block-sparsity of the state and input matrices and result in a small element-wise error. Through different case studies on synthetically generated systems, mass-spring networks, and multi-agent systems, we demonstrate substantial improvements in the accuracy of the proposed estimator, compared to its well-known least-squares counterpart.


\bibliographystyle{IEEEtran}
\bibliography{bib.bib}

\end{document}